\def\onlymaintext{0}
\let\originalsection\section
\let\originalsubsection\subsection
\let\originalsubsubsection\subsubsection
\newif\ifinappendix
\renewcommand{\subsection}[1]{%
  \ifinappendix
    \originalsubsection*{#1}%
  \else
    \originalsubsection{#1}%
  \fi
}
\renewcommand{\section}[1]{%
  \ifinappendix
    \originalsection*{#1}%
  \else
    \originalsection{#1}%
  \fi
}
\renewcommand{\subsubsection}[1]{%
  \ifinappendix
    \originalsubsubsection*{#1}%
  \else
    \originalsubsubsection{#1}%
  \fi
}
\let\originalappendix\appendix
\renewcommand{\appendix}{%
  \originalappendix
  \inappendixfalse %
}
\begin{document}

\newcommand{\sect}[1]{~\\\noindent{\large{\bf{#1}}~\\~\\ \noindent}}
\newcommand{\subsect}[1]{\noindent{\bf{#1}}\\ \noindent}
\newcommand{\subsubsect}[1]{{\emph{#1}.}}

\newcommand{\mytitle}{Exponentially Decaying Quantum Simulation Error with Noisy Devices}

\title{\mytitle}
\author{Jue Xu}
\email{These authors contributed equally to this work.}
\affiliation{\HKU}
\author{Chu Zhao}
\email{These authors contributed equally to this work.}
\affiliation{\IIIS}
\author{Junyu Fan}
\affiliation{International Quantum Academy, Shenzhen, Guangdong, China}
\author{Qi Zhao}
\email{zhaoqi@cs.hku.hk}
\affiliation{\HKU}
\date{\today}
\begin{abstract}
    Quantum simulation is a promising way toward practical quantum advantage,
but noise in current quantum hardware poses a significant obstacle.
We prove that not only the physical error but also the algorithmic error in a single Trotter step decreases exponentially with the circuit depth.
This theoretical finding is validated by our numerical results over various Hamiltonians, initial states, and noise channels.
Furthermore,
we derive the optimal number of 
Trotter steps and the noise requirement to guarantee total simulation precision.
To explicitly show the requirements for robust quantum simulation, we plot a phase diagram of the accumulated error in terms of circuit depth and noise rate.
At last, we demonstrate that our improved error analysis leads to significant resource-saving for fault-tolerant Trotter circuits.
By addressing these aspects, 
this work provides fresh and systematic insight on the practical quantum advantage through quantum simulation.

\end{abstract}
\maketitle

\section{Introduction}
Quantum dynamics simulation is a cornerstone application of quantum computation \cite{feynmanSimulatingPhysicsComputers1982, feynmanQuantumMechanicalComputers1985, georgescuQuantumSimulation2014} 
and is widely recognized as a leading candidate for achieving practical quantum advantage \cite{ciracGoalsOpportunitiesQuantum2012, daleyPracticalQuantumAdvantage2022, bernienProbingManybodyDynamics2017, monroeProgrammableQuantumSimulations2021, altmanQuantumSimulatorsArchitectures2021, kimEvidenceUtilityQuantum2023, wangRealizationFractionalQuantum2024, guoSiteresolvedTwodimensionalQuantum2024}.
Efficient quantum dynamics simulation would render general studies of quantum physics tractable \cite{lloydUniversalQuantumSimulators1996, jordanQuantumAlgorithmsQuantum2012, mcardleQuantumComputationalChemistry2020}, which are notoriously difficult for classical computers.
Furthermore, quantum dynamics simulation also serves as an indispensable subroutine for a plethora of quantum algorithms,
such as 
probing ground state energies \cite{leeEvaluatingEvidenceExponential2023},  %
solving linear algebra problems \cite{harrowQuantumAlgorithmSolving2009, liuEfficientQuantumAlgorithm2021}, 
and optimization problems \cite{farhiQuantumApproximateOptimization2014, albashAdiabaticQuantumComputation2018}.

Since the seminal digital quantum dynamics simulation algorithm based on the Trotter formula for local Hamiltonians proposed by Lloyd \cite{lloydUniversalQuantumSimulators1996}, 
novel post-Trotter techniques \cite{childsHamiltonianSimulationUsing2012, berrySimulatingHamiltonianDynamics2015,  berryHamiltonianSimulationNearly2015,lowOptimalHamiltonianSimulation2017, haahQuantumAlgorithmSimulating2018} %
have been developed and are asymptotically optimal in key parameters.
Nonetheless, the variants of the Trotter formula \cite{berryEfficientQuantumAlgorithms2007,campbellRandomCompilerFast2019, childsFasterQuantumSimulation2019, anTimedependentUnboundedHamiltonian2021,bosseEfficientPracticalHamiltonian2025,fangTrotterErrorManybody2025} 
remain primary schemes for near-term quantum devices to achieve practical quantum advantage, owing to their mild hardware requirements and decent theoretical performance.
Despite the extensive analyses of Trotter error for showing quantum advantage \cite{childsFirstQuantumSimulation2018,childsNearlyOptimalLattice2019,childsTheoryTrotterError2021,zhaoHamiltonianSimulationRandom2021,sahinogluHamiltonianSimulationLowenergy2021,gongComplexityDigitalQuantum2024,zlokapaHamiltonianSimulationLowenergy2024,zhaoEntanglementAcceleratesQuantum2025,mizutaTrotterizationSubstantiallyEfficient2025}, there still lacks systematic analysis of noisy Trotter circuits.

Although quantum dynamics simulation is as powerful as universal quantum computation that implies quantum advantage \cite{feynmanSimulatingPhysicsComputers1982}, its full potential necessitates fault-tolerant quantum computing.
Notably, ubiquitous noise detrimentally affects quantum computations, making them less accurate as systems scale up and increasingly susceptible to classical simulation.
While a fully fault-tolerant quantum computer remains on the horizon, 
many quantum simulation experiments have been recently implemented \cite{zhangObservationManyBodyDynamical2017,bernienProbingManybodyDynamics2017,guoSiteresolvedTwodimensionalQuantum2024} on noisy intermediate-scale quantum (NISQ) devices \cite{preskillQuantumComputingNISQ2018, bhartiNoisyIntermediatescaleQuantum2022}.
Significantly, Refs.~\cite{kimEvidenceUtilityQuantum2023} simulated the dynamics of the Ising model by Trotterized circuits,
but this advantage was challenged by classical simulation techniques, including tensor-network methods for shallow circuits \cite{begusicFastConvergedClassical2024, tindallEfficientTensorNetwork2024} and the Pauli path integral method \cite{shaoSimulatingNoisyVariational2024, fontanaClassicalSimulationsNoisy2025, martinezEfficientSimulationParametrized2025} %
as shown in \cref{fig:advantage}. 
Therefore, 
it is imperative to determine the noise requirement for an accurate Trotter simulation with potential quantum advantage. 
Even in the fault-tolerant quantum computing era, with the help of quantum error correction techniques, it is also essential to know the required logical noise rate in a Trotter simulation task. 

\begin{figure}[!t]
    \centering
    \includegraphics[width=0.99\linewidth]{./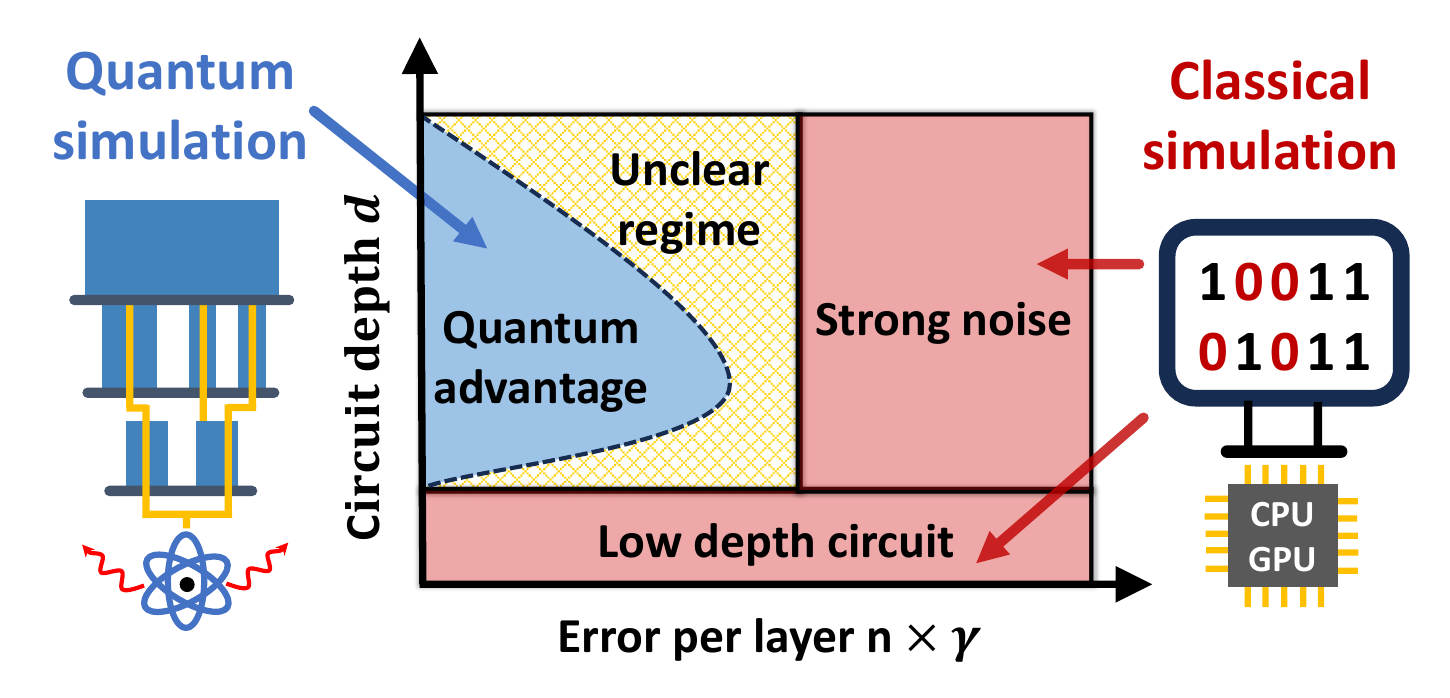}
    \caption{
    The regime of potential quantum advantage by quantum simulation.
    For low-depth quantum circuits or strong noise circuits, 
    the quantum advantage would be diminished by classical algorithms.
    }
    \label{fig:advantage}
\end{figure}

The accuracy of a noisy Trotter simulation task is affected by two competing factors, i.e., circuit depth and noise rate. 
Specifically, increasing the Trotter steps (circuit depth) suppresses the Trotter error (algorithmic error) but introduces more physical error induced by circuit noise. 
Although Refs.~\cite {kneeOptimalTrotterizationUniversal2015,endoMitigatingAlgorithmicErrors2019, hakkakuDataEfficientErrorMitigation2025} have examined both physical and algorithmic errors in noisy Trotter circuits, 
their worst-case error analyses quantified by operator distance between ideal evolution and Trotterized unitaries yield pessimistic estimation due to overlooking crucial evolved state information.
Essentially, circuit noise that drives evolved states towards the maximally mixed state causes empirical errors deviating from the worst-case error analyses \cite{ben-orQuantumRefrigerator2013,muller-hermesRelativeEntropyConvergence2016,francaLimitationsOptimizationAlgorithms2021}, resulting in an overestimation of the resources required for an accurate Trotter simulation task.  
Consequently, to pave the way for demonstrating quantum advantage through Trotter simulation, a refined and state-dependent error analysis is indispensable yet remains unexplored.

Our work demonstrates the robustness of Trotter simulation with noisy circuits through several novel contributions. 
First, we prove that both algorithmic and physical errors in noisy Trotter simulation decay exponentially with Trotter steps (circuit depth) when evolved state information is appropriately considered. 
We validate this unexpected noise-error relationship through various numerical experiments. 
As practical applications of our findings,
we determine the optimal Trotter number of noisy circuits and establish the noise requirements to guarantee simulation precision, 
providing quantitative insights into the trade-off between algorithmic and physical errors. 
As a result, in medium-scale fault-tolerant Trotter simulations, 
our analysis can reduce the resource requirement 
(i.e., the circuit depth times the number of physical qubits) by up to 60\%,
compared with the worst-case analyses. 
Finally, we present a phase diagram of the total error in a noisy Trotter circuit, 
delineating the regime of potential quantum advantage.

\section{Results}
\subsection{Setting: noisy Trotter circuit}

We embark on studying the robustness of noisy Trotter simulation with the following setting.
Given an $n$-qubit local Hamiltonian $H = \sum_{l=1}^L H_l$ with $L=\poly(n)$ Pauli terms acting on at most $k$ qubits and evolution time $t$, 
quantum dynamics simulation aims to approximate the real-time evolution operator $U(t):=e^{-\ii Ht}$ by a quantum circuit.
For this goal, the first-order Lie-Trotter formula $\pf_1$
is a product of the dynamics of each term, 
that is
\begin{equation}\label{eq:pf1}
    \pf_1(\deltat):= e^{-\ii H_1 \deltat} e^{-\ii H_2 \deltat} \cdots\, e^{-\ii H_L \deltat} 
    = \prod^{L}_{l=1} e^{-\ii H_l \deltat},
\end{equation}
where each $e^{-\ii H_l \deltat}$ is assumed to be implemented efficiently by elementary quantum gates and $\deltat=t/r$ is a short time sliced by $r$ segments \cite{lloydUniversalQuantumSimulators1996}.
This type of formula is also called product formula (denoted as PF) in a more general sense.  
As the Hamiltonian terms are not commutative to each other in general,
the algorithmic error introduced by the Trotter formula is called Trotter error. 
And it has the upper bound $\opnorm{\pf_1(\deltat) - U(\deltat)}=\bigO(\deltat^2)$ measured by the spectral norm 
$\opnorm{\cdot}$.
Consequently, the accumulated error between ideal evolution $U(t)$ and $r$-step Trotter circuit $\pf_1^r(t/r)$ scales as $\bigO(t^2/r)$,
which can be suppressed by increasing $r$.

Besides, the Trotter error can also be improved by patterns of products.
For instance, the second-order Suzuki-Trotter formula (PF2), i.e.,
$\pf_2(\deltat):= \pf_1(\deltat/2) \pf_1^{\leftarrow}(\deltat/2)$,
has the error scales $\bigO(\deltat^3)$,
where $\pf_1^{\leftarrow}$ denotes the product in the reverse order.
More generally,
the $p$th-order product formulas (PF$p$) are defined recursively as
\begin{equation}\label{eq:high_pf}
    \pf_{p}(\deltat): = \pf_{p-2} (u_p \deltat)^2 \pf_{p-2}((1-4u_p)\deltat) \pf_{p-2} (u_p \deltat)^2,
\end{equation}
where $u_p:=1/(4-4^{1/(p-1)})$ and $p$ is a positive even integer \cite{suzukiGeneralTheoryFractal1991}.
In this way, the Trotter steps of the $p$th-order product formula for achieving the worst-case precision $\varepsilon$ is $\bigO(\acmm_p^{1/p}t^{1+1/p}/\varepsilon^{1/p})$ 
where $\acmm_p$ is the $p$th-order nested commutator norm \cite{childsTheoryTrotterError2021}.
Though the higher-order formula has significantly smaller errors,
it leads to overhead in the circuit depth. 
Specifically, the number of circuit layers in one Trotter step is proportional to $\Upsilon_p=2\cdot 5^{p/2-1}$.

\begin{figure}[!t]
    \centering
    \includegraphics[width=0.95\linewidth]{./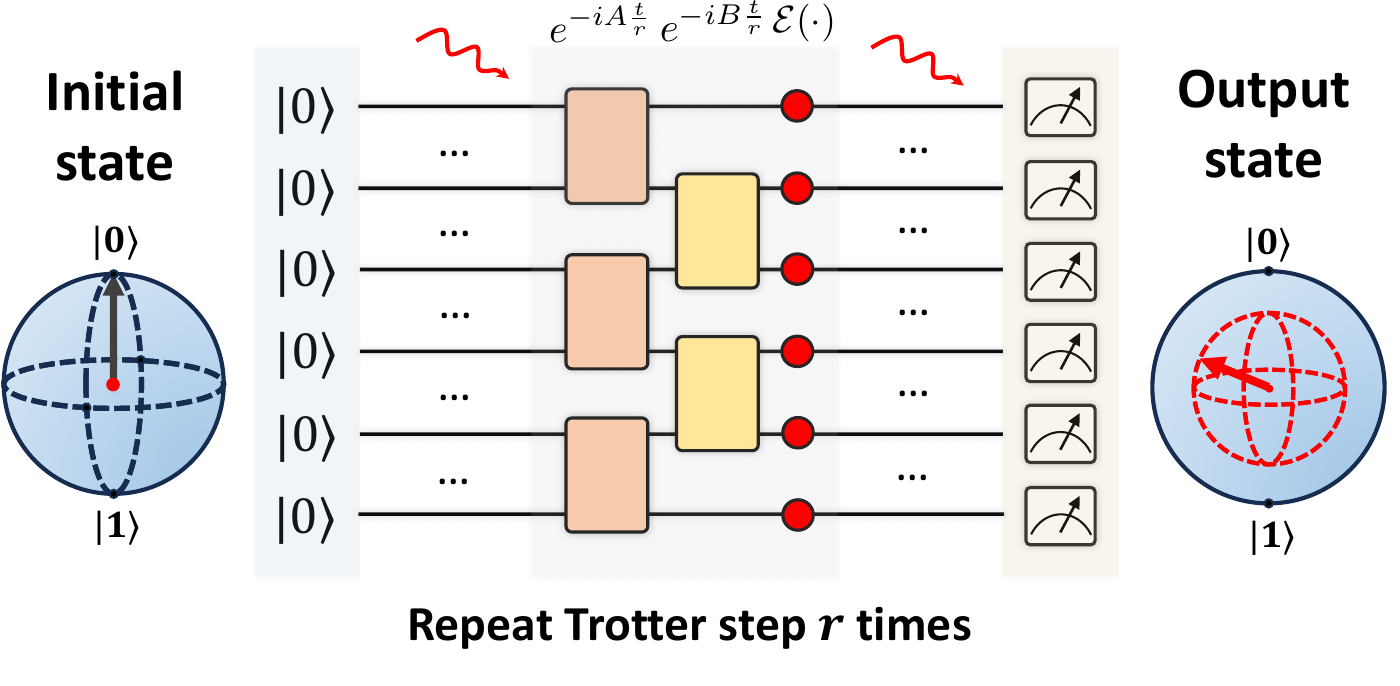}
    \caption{
    Our model of noisy Trotter simulation circuits.
    The noisy Trotter circuit is the repeated Trotter steps interspersed by a one-qubit depolarizing noise channel marked by solid red dots. 
    The gray box covers a Trotter step repeated $r$ times.
    The depolarizing noise renders states close to the maximally mixed state, 
    which results in error decay.
    }
    \label{fig:noisy_circuit}
\end{figure}

On the other hand, the other source of error, physical error, comes from the imperfection of quantum circuits.
In the conventional noisy quantum circuit model, every qubit undergoes a one-qubit depolarizing channel $\E^{\depo}_{\gamma}$ with noise rate $\gamma \in [0,1]$ after applying a gate
\begin{align}\label{eq:single_depolarizing}
    \E^{\depo}_{\gamma}(\rho) 
    &:= (1-\gamma)\rho + \frac{\gamma}{3} \qty(X\rho X + Y \rho Y + Z\rho Z) ,
\end{align}
where $\rho$ is the one-qubit density matrix 
\footnote{
The state $\rho$ is left alone with probability $1-\gamma$, and the operators $X$, $Y$ and $Z$ applied each with probability $\gamma/3$.
Equivalently, $\E_{\gamma}^{\depo}(\rho)=(1-\gamma^\prime)\rho + \frac{\gamma^\prime}{2} \I $ where $\gamma^\prime = \frac{4}{3}\gamma$
}.
For an $n$-qubit circuit, the one-qubit (local) depolarizing noise channel is $\E^{n}_{\gamma}(\rho) :=\bigotimes^n \E^{\depo}_{\gamma}(\rho)$ applied 
after a layer of gates.
We neglect the superscript `$\depo$' 
unless otherwise specified.
Whilst this error model is simplistic, 
it is an effective model of the errors seen in some current quantum hardware.
To analyze the realistic experiments more accurately, more fine-grained categories of noise models are necessary \cite{clintonHamiltonianSimulationAlgorithms2021}.
As a complement, we discuss other common noise channels, such as the dephasing and the amplitude-damping noise channel, 
in the last section and more details in
\ifnum\onlymaintext=0
    \cref{apd:noise_channels}.
\else
    the Supplementary Materials \cite{seesm}.
\fi

Taking these elements together, we formally introduce the setting of our noisy Trotter circuit model as follows.
\begin{definition}[Noisy Trotter]\label{def:noisy_circuit}
    Given the $p$th-order Trotter circuit channel 
    $\pfc_p(\rho):=\pf_p(t/r)\rho\pf_p^\dagger(t/r)$ 
    and the one-qubit (local) depolarizing noise channel $\E^n_\gamma$ on $n$ qubits with noise rate $\gamma$ after every Trotter step as \cref{fig:noisy_circuit},
    the quantum channel $\C_{p,\gamma,r}$ of the whole circuit is 
    \begin{equation}
        \C_{p,\gamma,r}:= 
        \E^n_\gamma \circ \pfc_p \circ \cdots \circ \E^n_\gamma \circ \pfc_p
        \equiv \qty(\E^n_\gamma \circ \pfc_p)^r,
    \end{equation}
    where $\circ$ denotes the composition of channels and $r$ is the number of Trotter steps or called Trotter number.
\end{definition}

\subsection{Exponentially decaying one-step errors in noisy Trotter circuit}

Roughly, the goal of a quantum simulation experiment is to achieve a certain simulation precision $\varepsilon$ given the system Hamiltonian, 
with maximal simulation time and minimal circuit depth.
Ideally, without physical noise, the algorithmic (Trotter) error can be arbitrarily suppressed by increasing the Trotter number $r$.
Despite the presence of noise, physical errors tend to accumulate as circuit depth increases. 
Therefore, it is essential to analyze the tradeoff between algorithmic and physical errors, as well as the impact of noise on algorithmic error. 
In this context, errors need to be formally defined to evaluate the robustness of a noisy Trotter circuit.

First, the one-step \emph{physical error} is defined as the trace distance between the state $\rho$ with or without undergoing a one-qubit (local) depolarizing noise channel $\E_\gamma^n$
\begin{equation}\label{eq:phy_err}
    \epsilon_{p,\gamma}^{\phy}(d) := \trnorm{ \rho_d - \E^n_\gamma(\rho_d) },
\end{equation}
where $\rho_d$ is the state after the $d$th noisy $p$th-order Trotter step and $d\in[r]$. 
In the state-independent (worst-case) analysis,
the physical error induced by the one-qubit $\gamma$ depolarizing channel $\E_{\gamma}^{n}$ on $n$ qubits has the upper bound 
$\dnorm{\E_{\gamma}^{n} - \IC} \le 2n\gamma $.
However, this upper bound by the diamond norm $\dnorm{\cdot}$ is pessimistic 
because it overlooks that the state $\rho$ in a noisy Trotter circuit is getting close to the maximally mixed state $I/2^n$.

    Conventionally, the algorithmic error of the $p$th-order product formula is defined as 
    $\opnorm{U - \pf_p}$
    where the spectral norm captures the worst-case (initial state) analysis.
    To analyze the effect of a general mixed state, we define the one-step \emph{algorithmic error} as the trace distance between the mixed state $\rho_d$ 
    after an ideal one-step evolution $\U$ for a short time $t/r$ and one-step noiseless $p$th-order Trotter formula $\pf_p(t/r)$, i.e.,
    \begin{equation}\label{eq:alg_err}
        \epsilon_{p,\gamma}^{\alg}(d) := \trnorm{ \mathcal{U}(\rho_d) - \tilde{\mathcal{U}}_p(\rho_d)},
    \end{equation}
    where we use the channel notation 
    $\mathcal{U}(\rho):= U\rho U^{ \dagger}$ 
    for consistency with the definition of physical error.
It is bounded by the worst-case error up to a 2 factor, that is,
$\epsilon_{p,\gamma}^{\alg}(d)\le 2 \opnorm{U - \pf_p}$.

Before proving theoretical bounds of the errors, we take a look at the empirical errors compared with the worst-case bounds.
As a crucial observation of this work,
\cref{fig:decay} shows that
algorithmic and physical errors in one Trotter step decay exponentially with the Trotter steps and are much smaller than the worst-case bounds.
Moreover, the stronger the noise (darker color) is, the faster both errors decay.
Additionally, contrary to the prefactors of physical error (the intercepts of the lines) increasing with stronger noises $\gamma$, 
the prefactor of the algorithmic error is independent of $\gamma$.
As a showcase of typical simulation tasks, we focus on the numerical results of the transverse field Ising (TFI) Hamiltonian in the main text,
and we refer to 
\ifnum\onlymaintext=0
    \cref{apd:more_hamiltonians} 
\else
    the Supplementary Materials \cite{seesm}.
\fi
for more numerical results of other common physical Hamiltonians
including the Heisenberg model with power-law interaction, the 1D Fermi-Hubbard model and the Hydrogen chain, as well as common observables.
Next, we are going to prove these phenomena rigorously.

\begin{figure}[!t]
    \centering
    \includegraphics[width=\autowidth\linewidth]{./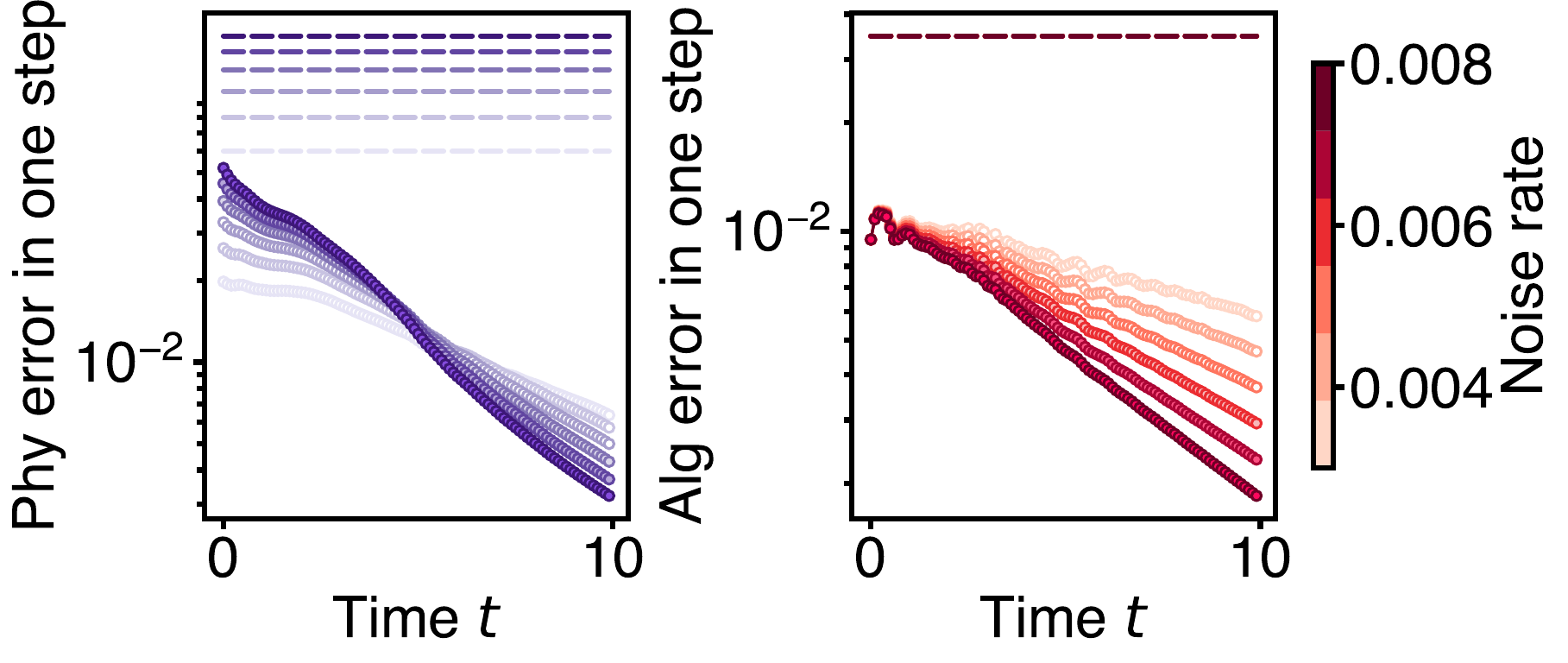}
    \caption{
    The one-step physical and algorithmic errors exponentially decay with Trotter steps.
    The physical error (\emph{left panel}), algorithmic error (\emph{right panel})
    at every Trotter step with different noise rates $\gamma\in[0.003,0.008]$. 
    The $y$-axis is in the log scale.
    The color of the line indicates the strength of the noise. 
    We take the TFI Hamiltonian $H_{\tfi}=J \sum_{j=1}^n X_jX_{j+1}+ h\sum_{j=1}^n Z_j$ with parameters ($J$=2, $h$=1, $n$=10), the periodic boundary condition. 
    The Hamiltonian can be grouped into two commuting parts as 
    $H_{\tfi}=H_{X}+H_{Z}$ where $H_X = J \sum_{j=1}^n X_jX_{j+1}$ and $H_Z = h\sum_{j=1}^n Z_j$.
    We use the second-order product formula PF2
    and take the initial state as $\ket{0}^{\otimes n}$, 
    evolution time $t=n$, and Trotter number $r=100$. 
    The dashed lines are the worst-case theoretical upper bounds with the corresponding noise rates.
    We adopt the upper bound $2n\gamma$ by the diamond norm for the worst-case physical error,
    while the commutator bound \cite{childsTheoryTrotterError2021} is used for the worst-case algorithmic (Trotter) error.
    See supplementary material for details on the worst-case analysis.
    }
    \label{fig:decay}
\end{figure}

\subsection{Upper bound of physical error}

We first sketch the proof of the exponential decay of physical error.
The intuition is that the depolarizing channel renders the evolved state close to the maximally mixed state
such that the effect of noise becomes attenuated.
Since the local depolarizing channel transforms the state into a mixture of locally depolarized states, 
we quantify the physical error by measuring the distance between the evolved state and the set of locally depolarized states 
i.e. $\sum_{F:\abs{F}=1} \calD\qty(\rho\|  \frac{1}{n}  \rho_{\overline{F}}\otimes \frac{I}{2})$. 
This distance is further bounded by the distance between the evolved state and the global maximally mixed state  $\calD\qty(\rho \| \frac{I}{2^n})$, 
which has been proved to decay exponentially with the noisy circuit depth \cite{francaLimitationsOptimizationAlgorithms2021}
i.e.,
\begin{equation}
    \calD(\C_{p,\gamma,r}(\rho) \Vert \frac{I}{2^n}) 
    \le (1-\gamma)^{r} \calD(\rho \Vert \frac{I}{2^n})
\end{equation}
where $\C_{p,\gamma,r}$ is the channel of \nameref{def:noisy_circuit}, 
and $\calD(\cdot\Vert\cdot)$ is the relative entropy.
Then, we have the following proposition.
The detailed proof can be found in 
\ifnum\onlymaintext=0
    \cref{apd:physical}.
\else
    the Supplementary Materials \cite{seesm}.
\fi
\begin{proposition}[Exponential decay of physical error]\label{thm:exp_phy_bound}
    Given a one-qubit local depolarizing channel
    $\E_{\gamma}^{n}$ with noise rate $\gamma$ and assume $\gamma=o(n^{-1/2})$,
    the $d$-th one-step physical error 
    has the upper bound 
    \begin{equation}
        \eps_{p,\gamma}^{\phy}(d)
        =\trnorm{ \E_{\gamma}^{n}(\rho_d) - \rho_d }
        = \bigO( n^{3/2} \gamma e^{-\frac{1}{2}\gamma d}),
    \end{equation}
    where $\rho_d$ is the state of the $d$-th noisy $p$th-order Trotter step.
    Further, assume the evolved state during the evolution satisfies
    $\sum_{F:\abs{F}=1} \calD\qty(\rho\|  \frac{1}{n}  \rho_{\overline{F}}\otimes \frac{I}{2})= \Theta\qty(\frac{1}{n}) \calD\qty(\rho \| \frac{I}{2^n})$,
    where $F$ takes the subsets of $\{1,2,\cdots,n\}$ and $\rho_{\overline{F}}$ denotes the reduced density matrix of $\rho$ on subsystem $\overline{F}$,
    the upper bound can be improved to 
    $\Theta(n) \gamma e^{-\frac{1}{2}\gamma d}$. 
\end{proposition}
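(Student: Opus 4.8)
The plan is to control $\eps_{p,\gamma}^{\phy}(d)=\trnorm{\E_\gamma^n(\rho_d)-\rho_d}$ by reducing the $n$-qubit depolarizing deviation to single-qubit deviations and then converting each of these into a relative-entropy quantity that inherits the exponential contraction toward $I/2^n$ recalled above. Writing $\E_\gamma^n=\E^{(n)}\circ\cdots\circ\E^{(1)}$ with single-qubit channels $\E^{(i)}(\tau)=(1-\gamma')\tau+\gamma'\,\sigma_i[\tau]$, where $\gamma'=\tfrac43\gamma$ and $\sigma_i[\tau]:=\tfrac{I}{2}\otimes\tau_{\overline i}$ is $\tau$ with qubit $i$ replaced by $I/2$, the exact telescoping identity $\E_\gamma^n(\rho_d)-\rho_d=\gamma'\sum_{i=1}^n\bigl(\sigma_i[\rho_d^{(i-1)}]-\rho_d^{(i-1)}\bigr)$ (with $\rho_d^{(i-1)}$ the state after the first $i-1$ single-qubit channels) combined with the triangle inequality gives
\[
  \trnorm{\E_\gamma^n(\rho_d)-\rho_d}\le\gamma'\sum_{i=1}^n\trnorm{\sigma_i[\rho_d^{(i-1)}]-\rho_d^{(i-1)}}.
\]
Because each depolarizing channel only decreases the relative entropy to its fixed point (data processing), every intermediate state obeys $\calD(\rho_d^{(i-1)}\|\tfrac{I}{2^n})\le\calD(\rho_d\|\tfrac{I}{2^n})=:\calD_d$, so I may treat each summand as the deviation of a state whose relative entropy to $I/2^n$ is at most $\calD_d$.

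The second step turns each single-qubit deviation into relative entropy. By the quantum Pinsker inequality $\trnorm{\rho-\sigma}\le\sqrt{2\,\calD(\rho\|\sigma)}$ and the chain-rule identity $\calD(\rho\|\sigma_i[\rho])=\calD(\rho\|\tfrac{I}{2^n})-\calD(\rho_{\overline i}\|\tfrac{I}{2^{n-1}})\le\calD(\rho\|\tfrac{I}{2^n})$, each term is bounded by $\sqrt{2\calD_d}$. Cauchy--Schwarz over the $n$ summands then yields $\sum_i\trnorm{\cdots}\le\sqrt{n}\,\bigl(2\sum_i\calD(\cdots)\bigr)^{1/2}\le\sqrt{n}\,(2n\calD_d)^{1/2}=n\sqrt{2\calD_d}$. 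Finally I substitute the recalled contraction $\calD_d\le(1-\gamma)^d\calD(\rho_0\|\tfrac{I}{2^n})\le e^{-\gamma d}\,n\ln2$, using $\calD(\rho_0\|\tfrac{I}{2^n})\le n\ln2$ and $(1-\gamma)^d\le e^{-\gamma d}$, to obtain $\eps_{p,\gamma}^{\phy}(d)\le\gamma'\,n\sqrt{2\calD_d}=\bigO\!\bigl(n^{3/2}\gamma\,e^{-\gamma d/2}\bigr)$, which is the first claim.

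For the sharpened estimate I would avoid the lossy triangle/Cauchy--Schwarz step and keep the sum inside the norm. In the weak-noise regime the deviation linearizes to $\E_\gamma^n(\rho_d)-\rho_d=\gamma'\sum_i(\sigma_i[\rho_d]-\rho_d)+(\text{multi-qubit terms})=\gamma'\,n\,(\bar\sigma-\rho_d)+\cdots$, where $\bar\sigma:=\tfrac1n\sum_i\sigma_i[\rho_d]$ is the average locally-depolarized state appearing (up to placement of the $1/n$ weight) in the proof sketch. Pinsker then gives $\trnorm{\E_\gamma^n(\rho_d)-\rho_d}\le\gamma'\,n\sqrt{2\,\calD(\rho_d\|\bar\sigma)}+\cdots$, and the stated structural hypothesis $\calD(\rho_d\|\bar\sigma)=\Theta(\tfrac1n)\calD_d=\Theta(e^{-\gamma d})$ removes the residual $\sqrt{n}$, producing the improved $\Theta(n)\,\gamma\,e^{-\gamma d/2}$.

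The hard part is the sharpened bound rather than the (routine) Pinsker/chain-rule/contraction chain. Keeping the sum inside the trace norm forces me to treat all single-qubit channels as acting on the common state $\rho_d$, whereas they act on the successively depolarized $\rho_d^{(i-1)}$, and it discards the multi-qubit ($\abs{S}\ge2$) contributions in the expansion of $\E_\gamma^n$; both corrections scale like $(n\gamma)^2$ relative to the leading term, so certifying that they are subdominant is exactly where the hypothesis $\gamma=o(n^{-1/2})$ (together with the contraction of $\calD_d$) must be spent. A secondary point to verify is that $\calD(\rho_d\|\bar\sigma)$ really exhibits the assumed $\Theta(1/n)$ scaling for the states generated by the noisy Trotter channel, which is a physical input about the evolved states rather than a generic inequality.
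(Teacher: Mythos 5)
Your proof is correct, and it rests on the same three pillars as the paper's: Pinsker's inequality, a comparison between the relative entropy to locally depolarized states and $\calD(\rho\|I/2^n)$, and the entropy contraction $\calD(\rho_d\|I/2^n)\le(1-\gamma)^d\calD(\rho_0\|I/2^n)$ of Fran\c{c}a--Garc\'ia-Patr\'on. Where you differ is the decomposition applied before Pinsker. The paper expands $\E_{\gamma}^{n}(\rho_d)=\sum_F\gamma^{|F|}(1-\gamma)^{n-|F|}\rho_{\overline F}\otimes I/2^{|F|}$ over subsets, truncates at first order in $\gamma$ (this is where $\gamma=o(n^{-1/2})$ is spent, leaving an $o(n\gamma)$-type remainder), applies Pinsker once to the averaged state $\frac1n\sum_{|F|=1}\rho_{\overline F}\otimes\frac I2$, and then uses joint convexity plus its local-vs-global entropy lemma. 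Your telescoping over sequential single-qubit channels is an exact identity, so your $\bigO(n^{3/2}\gamma e^{-\gamma d/2})$ bound needs no truncation, no joint convexity, and no local-global lemma --- only the chain-rule identity $\calD(\rho\|\frac I2\otimes\rho_{\overline i})=\calD(\rho\|I/2^n)-\calD(\rho_{\overline i}\|I/2^{n-1})\le\calD(\rho\|I/2^n)$ plus data processing for the intermediate states; this makes the crude bound remainder-free and, strictly, independent of the assumption $\gamma=o(n^{-1/2})$, which is a modest but genuine tightening. For the improved $\Theta(n)\gamma e^{-\gamma d/2}$ bound you revert to exactly the paper's route (linearize, Pinsker against the averaged locally depolarized state, invoke the structural hypothesis), so the two proofs coincide there; the paper's subset expansion has the advantage of producing that averaged state from the outset, which is the object the hypothesis refers to. One caveat you correctly flag and the paper treats equally loosely: in the improved bound the discarded corrections are $\bigO((n\gamma)^2)$, which under $\gamma=o(n^{-1/2})$ is $o(n^{3/2}\gamma)$ but not necessarily $o(n\gamma)$; the paper's appendix simply carries a ``$+\,o(n\gamma)$'' term at the same level of rigor, so this is a shared imprecision rather than a gap specific to your argument.
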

To tighten our upper bound, we analyzed the relaxation from the distance to locally mixed states 
to the distance to the globally mixed state.
This relaxation ratio is theoretically between $[\frac{1}{n}, 1]$
and it is close to $1/n$ empirically as shown in 
\ifnum\onlymaintext=0
    \cref{apd:physical}
\else
    the Supplementary Materials \cite{seesm}
\fi
with common initial states and Hamiltonians. 
In this way, our bound is tight and matches the empirical result of physical error decay in \cref{fig:decay}.

\subsection{Upper bound of algorithmic error}

This section gives an exponentially decaying upper bound 
on the algorithmic (Trotter) error of the noisy circuit by considering state information affected by the depolarizing noise.
Here, we outline the main idea of the proof.
    Given the state $\rho_d$ of the $d$th Trotter step and $M_p$ is the multiplicative $p$th-order Trotter error operator with $\pf_p=U(I+M_p)$,
    the one-step algorithmic (Trotter) error can be written as 
    \begin{align}
        \epsilon_{p,\gamma}^{\alg}(d):=
        \trnorm{U\rho_d U^{\dag}-\pf_p\rho_d \pf_p^{\dag}} = \trnorm{ [\rho_d, M_p] },
    \end{align}
    where $U$ is the ideal one-step evolution, and $\pf_p$ is the one-step $p$th-order Trotter evolution.
In particular, the error operator can be written as a sum of local ones,
that is $M_p=(t/r)^{p+1} \sum_j E_j^{(p)}$,
where $E_j^{(p)}$ is a $p$th-order local Trotter error operator acting on a constant $w$ qubits \cite{childsTheoryTrotterError2021, zhaoEntanglementAcceleratesQuantum2025}. 
Then, we have the algorithmic error 
$\trnorm{ [\rho_d, M_p] }\le (t/r)^{p+1} \sum_j \trnorm{ [\rho_d, E_j^{(p)}] }$.
If the state $\rho_d$ is close to its locally mixed state $I/2^w\otimes\rho_{\overline{F}}$ where $\rho_{\overline{F}}$ is the reduced density matrix of $\rho_d$, 
then we can replace $\rho_d$ with $I/2^w\otimes\rho_{\overline{F}}$. 
Since the error operator $E_j^{(p)}$ only acts on few qubits with support $F$ 
and if the evolved state is the maximally mixed state in this subsystem,
then the commutator $[I/2^w\otimes\rho_{\overline{F}},E_j^{(p)}]=0$ incurs no error. 
Next, we can upper bound the error by separating the state information from the worst-case analysis via the Hölder's inequality as 
\begin{equation}
    \epsilon_{p,\gamma}^{\alg}(d) \le
    \delta t^{p+1} 2\trnorm{ \frac{I}{2^w}\otimes\rho_{\overline{F}}-\rho_d}  \cdot 
    \sum_j \opnorm{E_j^{(p)}},
\end{equation}
where $\sum_j \opnorm{E_j^{(p)}}$ bounds the worst-case Trotter error.
Now, the upper bound of algorithmic error is just to bound
the distance between $I/2^w\otimes\rho_{\overline{F}}$ and $\rho_d$, 
which has been done in the physical error part.
\begin{proposition}[Exponential decay of algorithmic error]\label{thm:exp_alg_bound}
    Given a $p$th-order Trotter circuit channel $\tilde{\mathcal{U}}_p$ with the local depolarizing noise rate $\gamma$,
    the $d$-th one-step algorithmic error $\epsilon_{p,\gamma}^{\alg}$ has the upper bound
    \begin{equation}\label{eq:alg_bound}
        \epsilon_{p,\gamma}^{\alg}(d)
        :=\trnorm{ \mathcal{U}(\rho_d) - \pfc_p(\rho_d)}
        = \bigO\qty(n^{1/2} B_p \frac{t^{p+1}}{r^{p+1}} e^{-\frac{1}{2} \gamma d}),
    \end{equation}
    where $\rho_d$ is the state of the $d$-th noisy $p$th-order Trotter step,
    $\mathcal{U}(\rho):= U\rho U^{ \dagger}$ is the one-step ideal evolution channel,
    and $B_{p}:=\sum_j \opnorm{E_j^{(p)}}$ is a factor related to the Hamiltonian and the order of the Trotter formula.
    With the same assumption in \cref{thm:exp_phy_bound},
    the overhead $n^{1/2}$ in \cref{eq:alg_bound} can be dropped.
\end{proposition}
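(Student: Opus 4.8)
The plan is to follow the outline preceding the statement, making each informal step quantitative and reusing the relative-entropy decay already established in \cref{thm:exp_phy_bound}. First I would make the reduction to a commutator precise. Writing $\pf_p = U(I+M_p)$ and using unitary invariance of the trace norm, I would expand
\begin{equation}
    \eps_{p,\gamma}^{\alg}(d) = \trnorm{\rho_d - (I+M_p)\rho_d(I+M_p)^\dagger} = \trnorm{[\rho_d, M_p]} + \bigO(\opnorm{M_p}^2),
\end{equation}
where the quadratic remainder is controlled using the unitarity relation $M_p + M_p^\dagger = -M_p^\dagger M_p$ together with $\trnorm{\rho_d A}\le\opnorm{A}$. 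Since $\opnorm{M_p}\le B_p(t/r)^{p+1}$, this remainder is of higher order in $t/r$ and is absorbed into the final $\bigO(\cdot)$, leaving $\trnorm{[\rho_d, M_p]}$ as the leading contribution.

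Next I would invoke the locality decomposition $M_p = (t/r)^{p+1}\sum_j E_j^{(p)}$ from \cite{childsTheoryTrotterError2021}, where each $E_j^{(p)}$ is supported on a constant set $F_j$ of $w$ qubits, and apply the triangle inequality to obtain $\trnorm{[\rho_d, M_p]} \le (t/r)^{p+1}\sum_j \trnorm{[\rho_d, E_j^{(p)}]}$. The key observation is that $E_j^{(p)}$ commutes with any state maximally mixed on its support, i.e. $[\,I/2^w\otimes\rho_{\overline{F_j}},\,E_j^{(p)}\,]=0$. Subtracting this vanishing commutator and applying the Hölder-type bound $\trnorm{[A,B]}\le 2\trnorm{A}\opnorm{B}$ then yields
\begin{equation}
    \trnorm{[\rho_d, E_j^{(p)}]} \le 2\,\trnorm{\rho_d - I/2^w\otimes\rho_{\overline{F_j}}}\,\opnorm{E_j^{(p)}},
\end{equation}
so that after summing over $j$ the worst-case factor $B_p=\sum_j\opnorm{E_j^{(p)}}$ factors out, leaving only the state-dependent distance to the local fixed point.

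The crux, which I expect to be the main obstacle, is bounding this local distance by the quantity that \cref{thm:exp_phy_bound} already controls. I would first use Pinsker's inequality, $\trnorm{\rho_d - I/2^w\otimes\rho_{\overline{F_j}}} \le \sqrt{2\,\calD(\rho_d\Vert I/2^w\otimes\rho_{\overline{F_j}})}$, to pass to relative entropy. Then, since $I/2^w\otimes\rho_{\overline{F_j}}$ is the information projection of $\rho_d$ onto the fixed-point set of the ``fully depolarize $F_j$'' channel and $I/2^n$ lies in that set, the Pythagorean identity for relative entropy gives $\calD(\rho_d\Vert I/2^w\otimes\rho_{\overline{F_j}}) + \calD(\rho_{\overline{F_j}}\Vert I/2^{n-w}) = \calD(\rho_d\Vert I/2^n)$, whence $\calD(\rho_d\Vert I/2^w\otimes\rho_{\overline{F_j}}) \le \calD(\rho_d\Vert I/2^n)$. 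Combining with the decay $\calD(\rho_d\Vert I/2^n)\le(1-\gamma)^d\,\calD(\rho_0\Vert I/2^n)$ from the physical-error analysis, the crude bound $\calD(\rho_0\Vert I/2^n)\le n\log 2$, and $(1-\gamma)^{d/2}\le e^{-\gamma d/2}$, produces $\trnorm{\rho_d - I/2^w\otimes\rho_{\overline{F_j}}} = \bigO(n^{1/2}e^{-\frac12\gamma d})$ uniformly in $j$.

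Assembling the pieces gives $\eps_{p,\gamma}^{\alg}(d) = \bigO(n^{1/2}B_p(t/r)^{p+1}e^{-\frac12\gamma d})$, as claimed. To drop the $n^{1/2}$ I would invoke the equipartition assumption of \cref{thm:exp_phy_bound}: under it the local relative entropy on a constant-size support is $\Theta(1/n)$ of the global one, so $\calD(\rho_d\Vert I/2^w\otimes\rho_{\overline{F_j}})=\bigO(1)\cdot(1-\gamma)^d$ carries no factor of $n$, and Pinsker then removes the $\sqrt{n}$. The delicate points to verify are the uniformity of the estimates over the (non-identical) supports $F_j$ and that the higher-order remainder in the first display is genuinely subleading in the relevant regime; both are routine given the constant locality $w$ and the smallness of $t/r$.
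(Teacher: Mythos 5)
Your overall route coincides with the paper's: decompose $M_p=(t/r)^{p+1}\sum_j E_j^{(p)}$ into local terms, subtract the vanishing commutator with $I/2^w\otimes\rho_{\overline{F_j}}$, apply H\"older, and control $\trnorm{\rho_d-I/2^w\otimes\rho_{\overline{F_j}}}$ via Pinsker's inequality, a local-to-global relative-entropy comparison, and the entropy contraction of the noisy circuit; your Pythagorean identity $\calD(\rho_d\Vert I/2^w\otimes\rho_{\overline{F_j}})+\calD(\rho_{\overline{F_j}}\Vert I/2^{n-w})=\calD(\rho_d\Vert I/2^n)$ is a clean, exact equivalent of the paper's \cref{lem:local_global_relation}, and your treatment of the $n^{1/2}$ removal matches the paper's. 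The one place you deviate is the very first step, and that is where there is a genuine gap.

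You reduce to the commutator only up to a remainder $\bigO(\opnorm{M_p}^2)=\bigO(B_p^2(t/r)^{2p+2})$, bounded via $\trnorm{\rho_d A}\le\opnorm{A}$, and claim it is ``absorbed into the final $\bigO(\cdot)$.'' It is not: the target bound carries the factor $e^{-\frac12\gamma d}$, while your remainder bound is independent of $d$. A $d$-independent additive term, however high its order in $t/r$, is not $\bigO\qty(n^{1/2}B_p(t/r)^{p+1}e^{-\frac12\gamma d})$ uniformly in $d$; once $e^{-\frac12\gamma d}\lesssim B_p(t/r)^{p+1}/n^{1/2}$, your remainder dominates and the claimed exponential decay — which is the entire point of the proposition — is lost. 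The paper avoids this because no remainder exists at all: since $\pf_p=U(I+M_p)$ with both $U$ and $\pf_p$ unitary, $I+M_p$ is itself unitary, so multiplying inside the trace norm by $(I+M_p)$ on the right gives the exact identity
\begin{equation}
    \trnorm{\rho_d-(I+M_p)\rho_d(I+M_p)^\dagger}
    =\trnorm{\rho_d(I+M_p)-(I+M_p)\rho_d}
    =\trnorm{[\rho_d,M_p]},
\end{equation}
which is precisely \cref{apd:thm:trace_norm_alg_error_bound}. Alternatively, your expansion can be salvaged, but not by the one-line absorption you propose: unitarity gives $M_p+M_p^\dagger=-M_p^\dagger M_p=-M_pM_p^\dagger$ (so $M_p$ is normal), whence the quadratic remainder equals $[\rho_d,M_p^\dagger M_p]+M_p[M_p^\dagger,\rho_d]$; these are again commutators of $\rho_d$ with sums of local operators, so rerunning the same locality-plus-entropy argument on them yields a bound $\bigO\qty(n^{1/2}B_p^2(t/r)^{2p+2}e^{-\frac12\gamma d})$ that does carry the decay factor and is then genuinely subleading. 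Either fix closes the gap; as written, the step fails.
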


\begin{figure}[!t]
    \centering
    \includegraphics[width=\autowidth\linewidth]{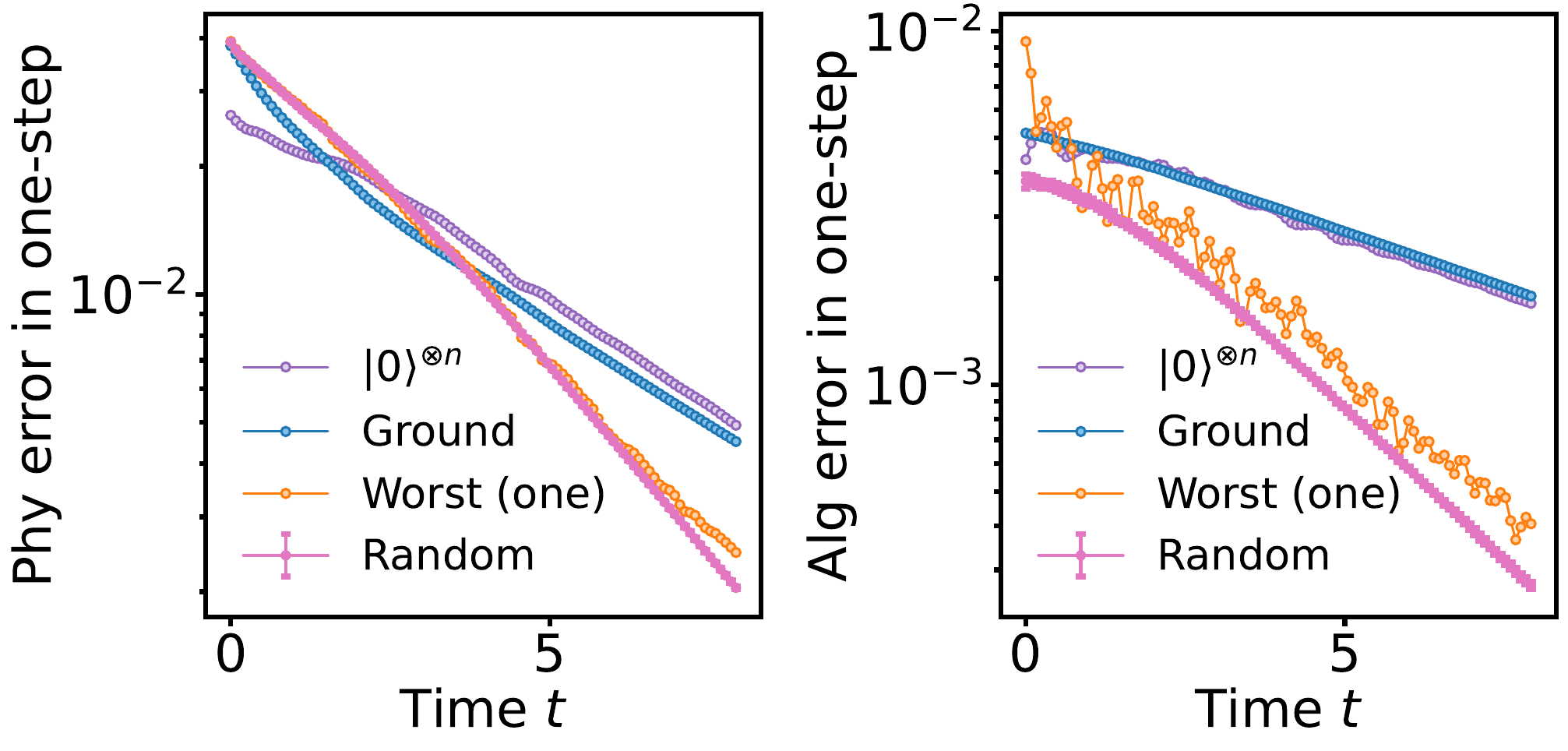}
    \caption{ 
    The impact of initial states on the error decay. 
    We use the standard setup described in \cref{fig:decay} 
    but with the fixed noise rate $\gamma=0.005$.
    The different initial states
    include the product state $\ket{0}^{\otimes n}$, 
    the worst-case state of one-step Trotter (the state that maximizes the one-step algorithmic error), 
    the ground state of the Hamiltonian,
    and an ensemble of 20 Haar random states with error bars indicating standard deviations.
    }
    \label{fig:initial_states}
\end{figure}

Similar to the physical error, the prefactor 
$B_{p} \frac{t^{p+1}}{r^{p+1}}$ 
in \cref{eq:alg_bound} is the upper bound of the worst-case noiseless Trotter error. 
For example, $B_{p}$ is $\Theta(n)$ for the nearest-neighbor interaction Hamiltonians.
Analogous to the physical error decay, the exponential decay factor $e^{-\frac{1}{2} \gamma d}$ comes from the information that the state gets close to $I/2^n$.
We put the complete proof in 
\ifnum\onlymaintext=0
\cref{apd:prf:trace_norm_alg_error_bound}.
\else
    the Supplementary Materials \cite{seesm}.
\fi
This exponential decay upper bound matches the numerical results in \cref{fig:decay} very well.
In addition, we provide numerical results on the impacts of initial states.
\cref{fig:initial_states} confirms that the different initial states all have exponentially decaying errors, but with different decaying rates.
In particular, random initial states \cite{zhaoHamiltonianSimulationRandom2021, chenAverageCaseSpeedupProduct2024} 
have smaller algorithmic errors as well as faster decaying rates, 
which should also occur with entangled states \cite{zhaoEntanglementAcceleratesQuantum2025}.
In addition, we also show the errors in common observables and their expectations \cite{heylQuantumLocalizationBounds2019,granetDilutionErrorDigital2025,yuObservableDrivenSpeedupsQuantum2025} in 
\ifnum\onlymaintext=0
\cref{apd:sec:numeric}.
\else
    the Supplementary Materials \cite{seesm}.
\fi

\subsection{Applications of improved accumulated error}

    Considering both physical and algorithmic errors,
    the \emph{one-step error} of the $d$th Trotter step is defined as
    $\epsilon_{p,\gamma}^{\tot}(d) :=\trnorm{ \mathcal{U}(\rho_d) - \E^n_\gamma\circ \tilde{\mathcal{U}}_p (\rho_d)}$,
    which has the upper bound 
    $\epsilon_{p,\gamma}^{\tot} \le \epsilon_{p,\gamma}^{\phy}+ \epsilon_{p,\gamma}^{\alg}$.
    At last, the total accumulated error of the $r$-step noisy Trotter circuit is defined as 
    the trace distance between the state undergoing the ideal evolution channel $\mathcal{U}$ and 
    the one undergoing the entire noisy Trotter channel $\C_{p,\gamma,r}$, i.e.
    \begin{equation}\label{eq:def:acc_error}
        \epsilon_{p,\gamma}^{\acc}(r) :=
        \trnorm{ \mathcal{U}(\rho_0)  - \C_{p,\gamma,r}( \rho_0 )}
        \le \sum_{d=1}^r \epsilon_{p,\gamma}^{\tot}(d)
    \end{equation}
    where $\rho_0$ is the initial state.
Therefore, with \cref{thm:exp_phy_bound} and \cref{thm:exp_alg_bound}, 
we have the following upper bound of the accumulated error.
\begin{theorem}[Upper bound of noisy Trotter error]\label{thm:accumulated_error}
    Consider a $p$th-order Trotter circuit with $r$ Trotter steps and local depolarizing noise rate $\gamma$ 
    for simulating dynamics of a Hamiltonian $H$ for time $t$.
    The accumulated error of the noisy Trotter circuit has the upper bound
    \begin{equation}\label{eq:acc_error}
        \epsilon_{p,\gamma}^{\acc} (r)
        \le
        \sum_{d=1}^r
        C \gamma e^{-c \gamma  d} + 
        B_p \frac{t^{p+1}}{r^{p+1}} e^{-b \gamma d}.
    \end{equation}
    The upper-case letters $C$ and $B$ are the prefactor coefficients
    and the lower-case letters $c$ and $b$ are the decaying coefficients for physical and algorithmic errors respectively.
\end{theorem}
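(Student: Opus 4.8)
The plan is to reduce the accumulated error to the telescoping sum over one-step errors already announced in \cref{eq:def:acc_error}, and then feed in the two per-step propositions. Write $\mathcal{U}$ for the one-step ideal channel over time $t/r$, so that the full ideal evolution is $\mathcal{U}^{r}$ because $e^{-\ii Ht}=(e^{-\ii Ht/r})^{r}$, and abbreviate the one-step noisy Trotter channel by $\mathcal{N}:=\E^n_\gamma\circ\pfc_p$, so that $\C_{p,\gamma,r}=\mathcal{N}^{r}$. The first step is the operator telescoping identity
\begin{equation}
    \mathcal{U}^{r}-\mathcal{N}^{r}=\sum_{d=1}^{r}\mathcal{U}^{\,r-d}\,(\mathcal{U}-\mathcal{N})\,\mathcal{N}^{\,d-1},
\end{equation}
which, applied to $\rho_0$ with $\rho_{d-1}:=\mathcal{N}^{\,d-1}(\rho_0)$ denoting the state entering the $d$th step, gives $\mathcal{U}^{r}(\rho_0)-\mathcal{N}^{r}(\rho_0)=\sum_{d=1}^{r}\mathcal{U}^{\,r-d}\big[(\mathcal{U}-\mathcal{N})(\rho_{d-1})\big]$. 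Taking the trace norm, applying the triangle inequality, and using that each residual unitary channel $\mathcal{U}^{\,r-d}$ preserves $\trnorm{\cdot}$ (more generally, that every quantum channel is contractive in trace norm), I obtain $\epsilon_{p,\gamma}^{\acc}(r)\le\sum_{d=1}^{r}\trnorm{(\mathcal{U}-\mathcal{N})(\rho_{d-1})}=\sum_{d=1}^{r}\epsilon_{p,\gamma}^{\tot}(d)$, which is exactly \cref{eq:def:acc_error}.

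Second, I split each one-step error by inserting the intermediate post-Trotter state $\pfc_p(\rho_{d-1})$ and applying the triangle inequality,
\begin{equation}
    \epsilon_{p,\gamma}^{\tot}(d)\le\trnorm{\mathcal{U}(\rho_{d-1})-\pfc_p(\rho_{d-1})}+\trnorm{\pfc_p(\rho_{d-1})-\E^n_\gamma(\pfc_p(\rho_{d-1}))},
\end{equation}
where the first term is the one-step algorithmic error $\epsilon_{p,\gamma}^{\alg}(d)$ and the second is the one-step physical error of \cref{eq:phy_err} evaluated on the state $\pfc_p(\rho_{d-1})$. Substituting the per-step bounds of \cref{thm:exp_phy_bound} and \cref{thm:exp_alg_bound} then yields the claimed $\sum_{d=1}^{r}\big[C\gamma e^{-c\gamma d}+B_p\,t^{p+1}r^{-(p+1)}e^{-b\gamma d}\big]$ of \cref{eq:acc_error}. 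The prefactor $C$ and rate $c$ are read off from \cref{thm:exp_phy_bound} (with $C=\Theta(n^{3/2})$, $c=1/2$ in general, improving to $C=\Theta(n)$ under its relative-entropy assumption), and $B_p$ with $b$ from \cref{thm:exp_alg_bound} (the $n^{1/2}$ overhead absorbed into the prefactor, or dropped under the same assumption, and $b=1/2$).

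The step requiring the most care — and where the real analytic content already resides — is ensuring that the state-dependent per-step bounds apply uniformly to every intermediate state appearing in the hybrid sum. Both propositions derive the decay factor $e^{-\frac12\gamma d}$ from the contraction of $\calD(\rho_d\Vert I/2^n)$ under repeated noisy steps, so I must check that this contraction governs each $\rho_{d-1}$ and each $\pfc_p(\rho_{d-1})$ in the telescoping expansion. The only genuine subtlety is index bookkeeping: the physical term is evaluated on $\pfc_p(\rho_{d-1})$ rather than on $\rho_d$, but since the relative entropy to $I/2^n$ is invariant under the unitary channel $\pfc_p$, the same exponential factor survives up to a one-step shift of the exponent, which is harmlessly absorbed into the constants $C$ and $B_p$. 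Because both propositions are stated for an arbitrary step index and the underlying contraction is monotone along the circuit, no per-state case analysis is needed and the remainder is the routine assembly above.
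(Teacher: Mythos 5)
Your proof is correct and follows essentially the same route as the paper: bound the accumulated error by the sum of one-step errors (the paper's \cref{eq:def:acc_error}), split each one-step error into algorithmic plus physical contributions via the triangle inequality, and invoke \cref{thm:exp_phy_bound} and \cref{thm:exp_alg_bound}. Your explicit telescoping identity and the observation that the physical term is evaluated on $\pfc_p(\rho_{d-1})$ (harmless, by unitary invariance of the relative entropy to $I/2^n$) merely make rigorous what the paper leaves implicit.
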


\begin{figure}[!t]
    \centering
    \includegraphics[width=0.99\linewidth]{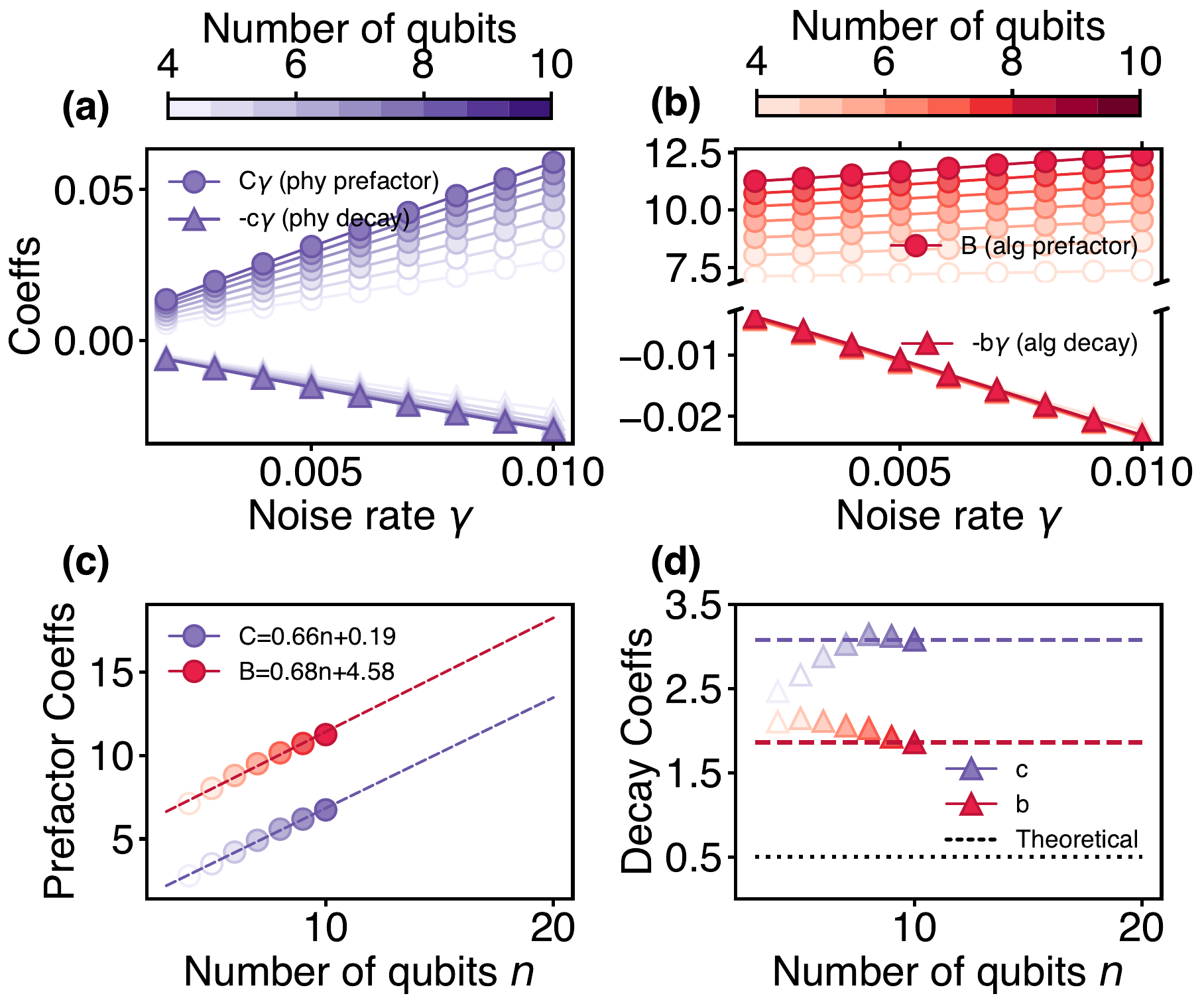}
    \caption{  
    Fitting and extrapolating the one-step 
    physical error $\eps_{2,\gamma}^{\phy}=C\gamma e^{-c \gamma  d} $ and 
    algorithmic error $\eps_{2,\gamma}^{\alg}= B \frac{t^{3}}{r^{3}} e^{-b \gamma d}$
    of the noisy PF2 with $r=100$ Trotter steps.
    We adopt the standard Trotter setup described in \cref{fig:decay} with varying system sizes (number of qubits $n$).
    (a) Both physical prefactor and decay coefficients scale linearly with noise rate $\gamma$. 
    (b) Both algorithmic prefactor and decay coefficients scale linearly with noise rate $\gamma$.
    (c) The fitted prefactor coefficients $C$ and $B$ scale linearly with number of qubits $n$.
    We extrapolate them to large $n$ cases by the dashed lines.
    (d) The fitted decaying coefficients $c$ and $b$ do not strongly depend on number of qubits $n$.
    }
    \label{fig:fitting}
\end{figure}

Without the exponential decay factors, \cref{eq:acc_error} recovers the worst-case analysis used in Refs.~\cite{kneeOptimalTrotterizationUniversal2015, endoMitigatingAlgorithmicErrors2019, avtandilyanOptimalorderTrotterSuzuki2024}.
Recall that, in \cref{fig:decay}, we numerically obtain the empirical one-step errors at every Trotter step with different noise rates.
By fitting the logarithm of the errors by a linear function in terms of Trotter steps,
we can obtain the decay coefficients for a specific typical noise rate $\gamma$, Trotter number $r$, evolution time $t$, and system size $n$.
In \cref{fig:fitting}, we plot the fitted physical coefficients in (a) and algorithmic coefficients in (b) with varying noise rates.
The color of each line indicates the system size from $n=4$ (lightest) to $n=10$ (darkest). 
With the fitted coefficients for different system sizes $n$ from (a) and (b), we plot the prefactor coefficients in (c) and the decay coefficients in (d) with varying $n$.
In \cref{fig:fitting} (c), it is apparent that the prefactors $C$ and $B$ are proportional to the system size $n$.
To predict the performance of large quantum simulators, we extrapolate the lines to $n=20$.
In our theoretical analysis, the decay coefficients $c$ and $b$ are supposed to be constants independent of system size.
Therefore, we set the theoretical value $c=b=1/2$ in our empirical formula to avoid underestimating the errors. 

\subsection{Phase diagram of error in noisy Trotter}\label{sec:phase}
In the NISQ era, it is important to identify the parameter regime for robust Hamiltonian simulation against noise.
For this objective, we depict the phase diagram of the accumulated error of a noisy Trotter circuit by a heatmap as the left panel of \cref{fig:phase}.
It shows the tradeoff between Trotter steps $r$ and noise rate $\gamma$.
With \cref{thm:accumulated_error} and the numerical results, 
the accumulated error of noisy PF2 is approximately
$\eps_{2,\gamma}^{\acc} (r)
\approx \sum_{d=1}^r  
C \gamma e^{-c \gamma d} + 
B_2 \frac{t^3}{r^3} e^{-b \gamma d}$
where $C$, $B_2$ are fitted prefactors, $c$ and $b$ are fitted decaying coefficients from \cref{fig:fitting}.
Two axes are the noise rate $\gamma$ and the Trotter number $r$.
The color indicates the accumulated error, 
i.e., the red color represents a large error while the blue color indicates a small one.
In this sense, the dark blue regime is the range of parameters for robust quantum simulation that could demonstrate practical quantum advantage.
The right panel of \cref{fig:phase} shows the reduction of the error in the phase diagram by our analysis over the worst-case analysis
$\tilde{\eps}_{2,\gamma}^{\acc} (r) \approx  2n \gamma r + B \frac{t^3}{r^2}$ 
where $B$ is obtained from fitting the worst-case empirical results in 
\ifnum\onlymaintext=0
\cref{fig:worst_algorithmic_error}.
\else
    the Supplementary Materials \cite{seesm}.
\fi

The red regime in the phase diagram not only has large accumulated errors, but also can be simulated efficiently by classical algorithms.
Specifically, low-depth circuits can be classically simulated by methods such as those based on the light cone \cite{bravyiClassicalAlgorithmsQuantum2021} or the tensor-network \cite{begusicFastConvergedClassical2024, tindallEfficientTensorNetwork2024}.
And, for the constant-per-gate (strong) noise model, the classical algorithm (i.e., the Pauli-path integral) can efficiently estimate the expectation values of evolved observables with high probability \cite{shaoSimulatingNoisyVariational2024, begusicFastConvergedClassical2024}.

\begin{figure}[!t]
    \centering
    \includegraphics[width=.99\linewidth]{./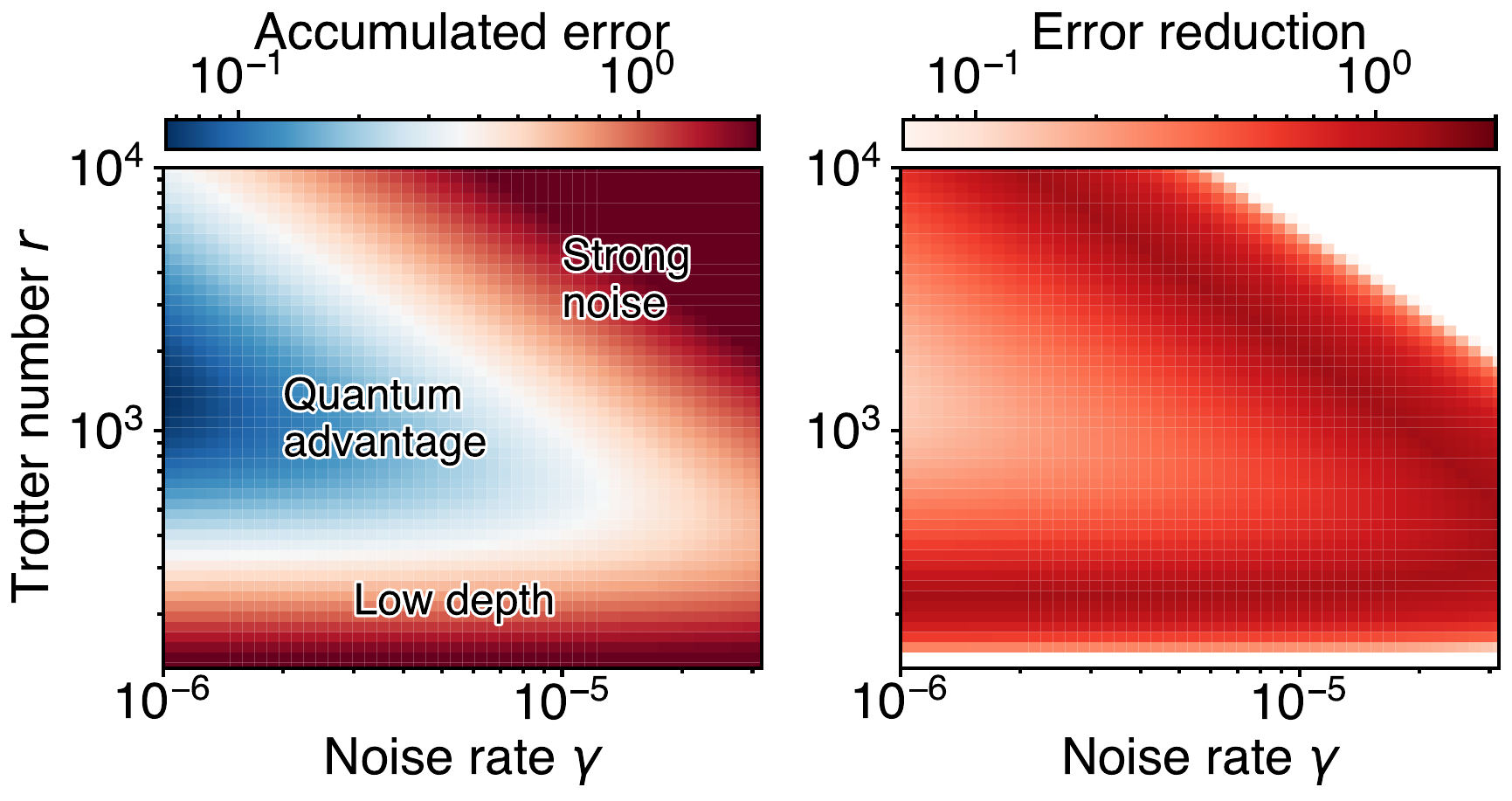}
    \caption{
    Phase diagram of the accumulated error in the noisy Trotter simulation with PF2, the TFI Hamiltonian of $n=50$, and $t=10$. 
    \emph{Left panel}. 
    The $x$ axis is the noise rate per gate $\gamma $, %
    while the $y$ axis is the Trotter number $r$. %
    The color indicates the accumulated error $\eps^{\acc}_{2,\gamma}(r)$ 
    by our analysis with fitted coefficients from \cref{fig:fitting}. 
    \emph{Right panel}. In the same setting, our state-dependent analysis yields at most 33\% error reduction over the worst-case analysis.
    }
    \label{fig:phase}
\end{figure}

\subsection{Optimal Trotter number and noise requirement}\label{sec:optimal_requirement}
Ideally, the algorithmic (Trotter) error can be infinitely suppressed by increasing the Trotter number $r$ or the Trotter order $p$.
However, this is not the case in the practical scenario.
For example, considering the overhead of circuits in high-order Trotter, Ref.~\cite{childsFirstQuantumSimulation2018} studied the optimal Trotter order to minimize the gate counts in the noiseless case %
and Ref.~\cite{avtandilyanOptimalorderTrotterSuzuki2024} studied the optimal Trotter order with noisy circuits. 

On the other hand, due to the accumulation of the physical error with increasing circuit depth,
there exists an optimal Trotter number to minimize the accumulated error as shown in \cref{fig:phase}.
Previous works \cite{kneeOptimalTrotterizationUniversal2015, endoMitigatingAlgorithmicErrors2019, childsNearlyOptimalLattice2019, hakkakuDataEfficientErrorMitigation2025} analyzed the accumulated error of the noisy first-order Trotter (PF1) circuits by simply multiplying the worst-case errors with the Trotter number $r$, that is,
$\tilde{\epsilon}_{1,\gamma}^{\acc}(r) \approx r \qty( C \gamma  + B \frac{t^{2}}{r^2})$.
Then,
the optimal Trotter number of PF1 has the form 
$\tilde{r}_{\opt}\sim t\sqrt{\frac{B}{C\gamma}}$.
However, this upper bound by the triangle inequality is not tight 
because it only considers the worst-case for each step 
\cite{childsNearlyOptimalLattice2019,avtandilyanOptimalorderTrotterSuzuki2024},
ignoring the evolved state information.

Therefore, given the immense reduction in Trotter error by higher-order product formula, it is natural to ask 
how to balance two sources of errors by the number of high-order Trotter steps.
To analytically derive the optimal high-order Trotter number that minimizes the total error,
we assume the accumulated error of the $p$th-order noisy Trotter has the following form
(supported by \cref{thm:accumulated_error} and the numerical results),
\begin{equation}\label{eq:empirical_acc_error}
    \eps_{p,\gamma}^\acc (r) = 
    \sum_{d=1}^r C\gamma\Upsilon e^{-c\gamma\Upsilon d} + 
    B_p \frac{t^{p+1}}{r^{p+1}}e^{-b\gamma\Upsilon d}, 
\end{equation}
where $\Upsilon = \bigO( 2^p)$ is the number of layers implemented in one Trotter step in high-order product formulas. 

\begin{proposition}[Optimal Trotter number for noisy Trotter]\label{thm:optimal_trotter_number}
    Consider a $p$th-order product formula circuit with local depolarizing noise rate $\gamma$ and 
    the accumulated error formula \cref{eq:empirical_acc_error}. 
    Assume $c<C$ and $b\approx c$.
    The optimal number of Trotter steps is
    \begin{equation}
        r_\opt(\gamma) = 
        \qty(\frac{pB_p}{C\gamma\Upsilon})^{\frac{1}{p+1}}t. 
    \end{equation}
    To give an explicit asymptotic scaling, 
    assume $C=\Theta(n)$ and $B_p=\Theta(n)$, 
    then $r_\opt(\gamma) = \Theta\qty( \gamma^{-\frac{1}{p+1}} t)$.
\end{proposition}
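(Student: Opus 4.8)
The plan is to treat the accumulated error formula \cref{eq:empirical_acc_error} as a function $f(r)$ of the (continuous relaxation of the) Trotter number and minimize it by elementary calculus, the standard technique for balancing two competing error contributions. The first task is to simplify the sums over $d$. Under the stated assumptions $b\approx c$ and $c<C$, the exponential factors $e^{-c\gamma\Upsilon d}$ are nontrivial only when the decay is meaningful; the intent is to approximate each geometric sum $\sum_{d=1}^r C\gamma\Upsilon e^{-c\gamma\Upsilon d}$ by a quantity that is $\Theta(C)$-bounded and essentially independent of $r$ (the physical-error contribution saturates), while the algorithmic sum reduces to $B_p \frac{t^{p+1}}{r^{p+1}}\sum_{d=1}^r e^{-b\gamma\Upsilon d}$, whose prefactor decreases in $r$. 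First I would replace these discrete sums by their dominant behavior, so that $f(r)$ takes the schematic form of a constant physical term plus an algorithmic term scaling like $B_p t^{p+1}/r^{p+1}$ times an $r$-independent constant.

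Next I would make the tradeoff explicit. The physical contribution grows with $r$ (more noisy steps), whereas the algorithmic contribution falls off like $r^{-(p+1)}$ in the per-step prefactor but is summed over $r$ steps, giving an effective algorithmic cost scaling like $r\cdot B_p t^{p+1}/r^{p+1}=B_p t^{p+1}/r^{p}$. Balancing a linear-in-$r$ physical term of size $\sim C\gamma\Upsilon\, r$ against an algorithmic term $\sim B_p t^{p+1}/r^{p}$ is the heart of the argument. I would set the derivative $f'(r)=0$, which after clearing yields $C\gamma\Upsilon \sim p\,B_p t^{p+1}/r^{p+1}$, i.e.
\begin{equation}
    r^{p+1} = \frac{p B_p}{C\gamma\Upsilon}\, t^{p+1},
\end{equation}
and taking the $(p+1)$th root gives precisely $r_\opt(\gamma)=\qty(\frac{pB_p}{C\gamma\Upsilon})^{1/(p+1)}t$. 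I would then verify that this critical point is a minimum (the second derivative is positive, since the physical term is increasing and convex-compatible while the algorithmic term is decreasing and convex in $r$), and finally substitute $C=\Theta(n)$, $B_p=\Theta(n)$ to cancel the $n$ dependence and obtain the asymptotic scaling $r_\opt(\gamma)=\Theta\qty(\gamma^{-1/(p+1)}t)$, with the $\Upsilon=\bigO(2^p)$ factor absorbed into the order-dependent constant.

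The main obstacle will be justifying the approximation of the geometric sums and, in particular, the claim that the optimum occurs in the regime where the exponential decay factors may be treated as essentially saturated constants rather than $r$-dependent quantities. In principle $e^{-c\gamma\Upsilon d}$ couples the two sums to $r$ in a way that complicates a clean closed form; the assumptions $c<C$ and $b\approx c$ are presumably what license dropping the subdominant corrections so that the decay factors contribute only $\Theta(1)$ multiplicative constants absorbed into $C$ and $B_p$. Making this rigorous — showing that the error from replacing the exact sums by their leading-order surrogates does not shift $r_\opt$ beyond the claimed $\Theta$-scaling — is the delicate step; everything downstream is routine single-variable optimization.
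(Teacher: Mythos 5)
Your second paragraph contains the correct derivation and it is essentially the paper's own argument: the paper likewise reduces the accumulated error to the two-term function $C\gamma\Upsilon r + B_p t^{p+1}/r^{p}$ and minimizes it, using the arithmetic--geometric mean inequality whose balancing condition $\frac{1}{p}C\gamma\Upsilon r = B_p \frac{t^{p+1}}{r^{p}}$ is exactly your first-order condition $C\gamma\Upsilon = p B_p t^{p+1}/r^{p+1}$; the two routes are interchangeable and yield $r_\opt(\gamma)=\qty(\frac{pB_p}{C\gamma\Upsilon})^{1/(p+1)}t$.

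Two corrections are needed, however. First, your opening paragraph is inconsistent with the rest of your argument and, taken literally, would sink it: if the physical sum really ``saturated'' to an $r$-independent constant while the algorithmic term decreased in $r$, the total error would be monotonically decreasing in $r$ and there would be no finite optimizer at all. The regime your second paragraph silently adopts is the opposite one, $c\gamma\Upsilon r \ll 1$, where every decay factor is $\approx 1$ and the physical sum is \emph{linear} in $r$. Second, the ``delicate step'' you flag at the end is exactly what the paper's hypotheses are designed to close, and it is closed in the linear regime, not the saturated one: the paper sums the geometric series exactly, $\sum_{d=1}^r C\gamma\Upsilon e^{-c\gamma\Upsilon d} = C\gamma\Upsilon\frac{1-e^{-c\gamma\Upsilon r}}{1-e^{-c\gamma\Upsilon}} \approx \frac{C}{c}\qty(1-e^{-c\gamma\Upsilon r})$, then notes that meeting the target precision $\varepsilon$ forces this to be at most $\varepsilon$, so $1-e^{-c\gamma\Upsilon r}\le \varepsilon c/C \ll 1$ by the assumption $c<C$ and smallness of $\varepsilon$. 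This places the optimum necessarily in the regime $c\gamma\Upsilon r\ll 1$, which is what licenses replacing the physical sum by $C\gamma\Upsilon r$ and (using $b\approx c$) the algorithmic sum by $B_p t^{p+1}/r^{p}$. With that argument supplied, your calculus step goes through verbatim, and the asymptotic claim follows by substituting $C=\Theta(n)$, $B_p=\Theta(n)$ so that the $n$-dependence cancels inside the $(p+1)$th root.
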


Since there exists a minimal accumulated error $\varepsilon$ achieved by the optimal Trotter number $r_{\opt}$ for specific noise rate,
we derive the following \emph{noise rate requirement} $\gammastar$ necessary for robust quantum simulation with noisy high-order Trotter circuits.
\begin{corollary}[Noise rate requirement for robust noisy Trotter simulation]\label{thm:threshold}
    Consider the $p$th-order Trotter circuit with local depolarizing noise rate $\gamma$
    and 
    assume the accumulated error formula \cref{eq:empirical_acc_error}. 
    If the noise rate is larger than 
    \begin{equation}
        \gammastar = 
        \frac{1}{C\qty(B_p)^{\frac{1}{p}}} 
        \qty(\frac{\varepsilon}{t})^{1+\frac{1}{p}} 
        \frac{p}{\Upsilon (p+1)^{1+\frac{1}{p}}},
    \end{equation}
    for system size $n$ and evolution time $t$,
    it is impossible to guarantee the accumulated error within simulation precision $\varepsilon$.
\end{corollary}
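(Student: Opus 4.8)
The plan is to build directly on \cref{thm:optimal_trotter_number}: since $r_\opt(\gamma)$ minimizes the accumulated-error bound \cref{eq:empirical_acc_error}, the smallest precision one can \emph{guarantee} at a given noise rate $\gamma$ is $\eps_{p,\gamma}^\acc(r_\opt)$. The corollary is then the statement that this minimal guaranteed error, read as a function of $\gamma$, is strictly increasing and crosses the target $\varepsilon$ exactly at $\gammastar$. So the first step is to obtain $\eps_{p,\gamma}^\acc(r_\opt)$ in closed form, and the second is to invert the relation $\eps_{p,\gamma}^\acc(r_\opt)=\varepsilon$.

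For the first step I would reuse the simplification already established in the proof of \cref{thm:optimal_trotter_number}. Under the assumptions $c<C$ and $b\approx c$, and in the relevant small-$\gamma$ regime, the two geometric sums in \cref{eq:empirical_acc_error} linearize near the optimum: because $r_\opt=\Theta(\gamma^{-1/(p+1)}t)$, the exponents obey $\gamma\Upsilon r_\opt=\bigO(\gamma^{p/(p+1)})=o(1)$, so each factor $e^{-c\gamma\Upsilon d}$ and $e^{-b\gamma\Upsilon d}$ is $\approx 1$ over the summation range. The bound collapses to the power-law balance $C\gamma\Upsilon r+B_p t^{p+1}r^{-p}$ that was minimized to produce $r_\opt$. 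Back-substituting $r_\opt=(pB_p/(C\gamma\Upsilon))^{1/(p+1)}t$ makes the physical contribution exactly $p$ times the algorithmic one, and summing gives
\begin{equation}
    \eps_{p,\gamma}^\acc(r_\opt)=(p+1)\,B_p\,t\,\qty(\frac{C\gamma\Upsilon}{pB_p})^{\frac{p}{p+1}}.
\end{equation}

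The second step is then immediate: the right-hand side is strictly increasing in $\gamma$ (the exponent $p/(p+1)$ is positive), so it equals $\varepsilon$ at a unique $\gammastar$, and any $\gamma>\gammastar$ forces $\eps_{p,\gamma}^\acc(r_\opt)>\varepsilon$, i.e.\ no choice of $r$ keeps the guaranteed error below $\varepsilon$. Setting the displayed expression equal to $\varepsilon$, isolating $\gamma$, and using $(p+1)/p=1+1/p$ together with $B_p/B_p^{1+1/p}=B_p^{-1/p}$ reproduces the stated $\gammastar=\frac{1}{C(B_p)^{1/p}}(\varepsilon/t)^{1+1/p}\,\frac{p}{\Upsilon(p+1)^{1+1/p}}$.

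The main obstacle is making the linearization of the geometric sums trustworthy as a \emph{two-sided} estimate: dropping the decay factors only yields an upper bound on $\eps_{p,\gamma}^\acc$, whereas the impossibility direction needs a matching lower bound so that the minimal guaranteed error genuinely exceeds $\varepsilon$. The resolution is the observation above that $\gamma\Upsilon r_\opt=o(1)$, which controls the approximation $1-e^{-x}\approx x$ from both sides; this is exactly where the hypotheses $c<C$ and $b\approx c$ enter, keeping the two decay rates comparable and the saturation value $C/c$ away from the regime where the physical term would deviate from linear. I would also verify that the minimizer of the full (undropped) bound stays within $o(1)$ relative deviation of $r_\opt$, which holds because the power-law part dominates the curvature near the optimum.
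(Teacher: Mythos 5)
Your proposal is correct and follows essentially the same route as the paper's own proof: linearize the geometric sums in the regime $\gamma\Upsilon r=o(1)$ (justified there by requiring $\epsilon c/C$ small, with $c<C$ and $b\approx c$), reduce to minimizing $C\gamma\Upsilon r + B_p t^{p+1}/r^p$, and invert the resulting minimum---your back-substitution of $r_\opt$ is algebraically identical to the paper's direct AM--GM step, yielding the same $(p+1)\qty(C\gamma\Upsilon/p)^{p/(p+1)}(B_p)^{1/(p+1)}t$. Your extra attention to the two-sided validity of the linearization (needed for the impossibility direction, and tied to the saturation value $C/c$) is a point the paper passes over silently, but it does not change the argument's structure.
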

    To give a concrete asymptotic scaling, we could assume 
    $C=\Theta(n)$ and $B_p=\Theta(n)$ 
    supported by the numerical results,
    then $\gammastar(\varepsilon) = \bigO\qty(\qty(\frac{\varepsilon}{nt})^{1+\frac{1}{p}})$.
We postpone the detailed proof to 
\ifnum\onlymaintext=0
\cref{apd:sec:threshold}, 
\else
    the Supplementary Materials \cite{seesm},
\fi
where we also give examples of the requirement $\gammastar$ and optimal Trotter number $r_{\opt}$ for the first and the second-order noisy Trotter.
This asymptotic scaling (inverse polynomial in $n$) requires quantum devices at least in the weak noise ($\gamma\sim 1/n$) regime \cite{feffermanEffectNonunitalNoise2024},
which is a good approximation of the noise present in some relatively small, fixed-sized experimental systems as \cite{aruteQuantumSupremacyUsing2019,morvanPhaseTransitionsRandom2024}.
In contrast, the circuits with constant noise rate (independent of the system size $n$) are regarded in a strong noise regime. 
Since Trotter algorithm typically requires at least $\poly(n)$ circuit depth and
the state after $\Omega(\log(n))$ depth with constant depolarizing noise rate becomes meaningless (i.e., very close to the maximally mixed state) \cite{ben-orQuantumRefrigerator2013},
the constant noise circuits are not robust for demonstrating advantage by Trotter quantum simulation.

\subsection{Resource saving for fault-tolerance}

\begin{figure}[!t]
    \centering
    \includegraphics[width=0.97\linewidth]{./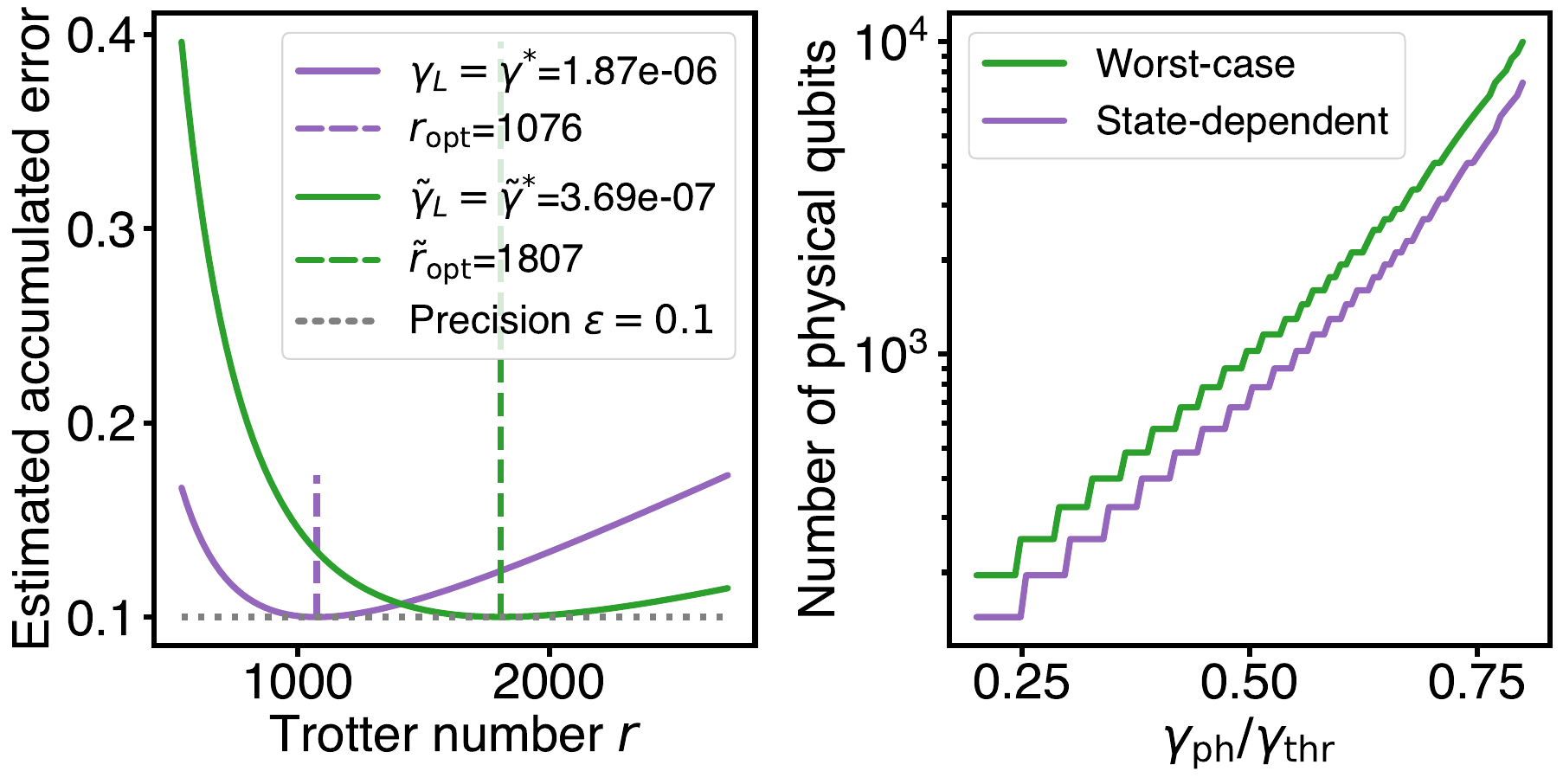}
    \caption{
    The resource-saving in fault-tolerance by our improved error analysis.
    \emph{Left panel}. Compare the optimal Trotter number and corresponding noise rates in the worst-case analysis ($\tilde{r}_{\opt},\tilde{\gamma}_\mathrm{L}=\tgammastar$) 
    and our state-dependent analysis ($r_{\opt},\gamma_\mathrm{L}=\gammastar$).
    We adopt the same Trotter setup as \cref{fig:decay} with 
    $n=50$ and $t=10$.
    We binary search the noise rate $\gammastar$ for robust simulation such that the minimum accumulated error with the optimal Trotter number $r_{\opt}$ reaches the simulation precision $\varepsilon=0.1$.
    Then, this noise rate $\gammastar$ is set as the logical error rate $\gamma_{\mathrm{L}}$ of the fault-tolerant Trotter circuits.
    \emph{Right panel}. 
    The number of physical qubits by the worst-case and our state-dependent error analyses.
    The $x$-axis $\gamma_{\ph}/\gamma_{\thr}$ represents how much the physical error rate is below the code threshold,
    while the $y$-axis indicates the number of physical qubits
    \cref{eq:physical_qubits}
    where we take $\gamma_0=0.02985$ 
    calculated from the parameters of \cite{acharyaQuantumErrorCorrection2024}.
    }
    \label{fig:qecc_reduction}
\end{figure}

Our improved error analysis enables the resource-saving for fault-tolerant quantum simulation.
Assume we realize fault tolerance with quantum error-correcting codes (QECC) and implement the small-angle rotations required for Trotterization via efficient magic state approaches
\cite{bravyiUniversalQuantumComputation2005,bravyiMagicstateDistillationLow2012,willsConstantoverheadMagicState2025}. %
The noise rate $\gamma_{\mathrm{L}}$ per logical gate can be suppressed by QECC with a code distance $d_c$  as
$\gamma_{\mathrm{L}} = \gamma_0 \qty( \frac{\gamma_{\ph}}{\gamma_{\thr}} )^{d_c/2}$,
where $\gamma_{\ph}$ is the noise rate per gate on the physical qubits,
$\gamma_0$ is a constant parameter,
and $\gamma_{\thr}$ is the threshold of the code. 
The code distance $d_c$ is typically a power of the number of physical qubits $N_c$, expressed as $d_c \sim N_c^l$, where $0\leq l \leq 1$. 
Specifically, for the widely-studied surface code \cite{fowlerSurfaceCodesPractical2012}, $d_c = \sqrt{N_c}$. 
Therefore, to satisfy the logical noise rate $\gamma_{\mathrm{L}}$, we need the number of physical qubits to be
\begin{equation}\label{eq:physical_qubits}
    N_c=(d_c)^2=\qty(2\log_{(\gamma_\ph/\gamma_\thr)} (\gamma_{\mathrm{L}}/\gamma_0))^2.
\end{equation}

Moreover, the code distance also affects the measurement cost of fault-tolerant syndrome extraction. 
For surface code, each stabilizer generator requires $\bigO(d_c)$ measurements per QECC cycle, 
while general codes may require up to $\bigO(d_c^2)$ measurements per stabilizer generator to achieve Shor-type fault-tolerance. 
Thus, reducing the code distance $d_c$ directly reduces both physical qubit number and measurement overhead in fault-tolerance.

In our state-dependent error analysis (exponentially decaying errors), 
the effective accumulation of the physical noise diminishes as circuit depth grows. 
Although operating below the threshold slows down the system's entropy growth,
the system remains entropy-increasing because the finite size of the QECC prevents the logical noise rate from reaching zero. 
Since the logical system is not completely error-free,
our exponentially decaying error formula still works. 
Consequently, to achieve the same simulation precision as in the worst-case error analysis, 
the logical noise rate per layer can be relatively relaxed by our state-dependent error analysis, leading to potential resource savings.
 
\begin{table}[!t]
    \centering
    \begin{tabular}{c|c|c|c}
         \#qubits   & $n=10$  & $n=50$ & $n=200$ \\
         \hline\hline
         ($\tilde{r}_{\opt}$, $\tgammastar$) &  (823, 4.05e-6) & (1807, 3.69e-7) & (3609, 4.60e-8) \\
         ($r_{\opt}$, $\gammastar$) & (584, 1.69e-5) & (1076, 1.87e-6) & (2053, 2.46e-7)  \\
         $1-\frac{r_{\opt} \cdot N_c}{\tilde{r}_{\opt}\cdot \tilde{N}_c}$ & $50\%$ & $60\%$ & $59\%$
         
    \end{tabular}
    \caption{
    Comparison of the optimal Trotter number and (logical) noise rate requirement estimated by the worst-case analysis ($\tilde{r}_{\opt}$, $\tgammastar$) and 
    our state-dependent analysis
    ($r_{\opt}$, $\gammastar$), for different system sizes 
    (number of qubits $n=10,50,200$). 
    For the TFI Hamiltonian, fixed $t=10$ and entire simulation precision $\varepsilon=0.1$, 
    the optimal PF2 Trotter number $\tilde{r}_{\opt}$ ($r_\opt$) and 
    the corresponding (logical) noise rates $\tgammastar$ ($\gammastar$) are determined by the binary search of the empirical worst-case (state-dependent) formulas.
    Our analysis follows 
    \cref{eq:acc_error}
    with fitted coefficients
    and the worst-case error is 
    $\tilde{\eps}^{\acc}_{2,\gamma}(r)=2nr\gamma + B_2 t^3/r^2$ 
    with $B_2(n)$ obtained from fitting empirical results 
    \ifnum\onlymaintext=0
    cf. \cref{fig:worst_algorithmic_error}.
    \else
        in the Supplementary Materials \cite{seesm}.
    \fi
    The number of physical qubits $N_c$ \cref{eq:physical_qubits}
    is determined given logical noise rates $\tgammastar(\gammastar)$ for the state-of-the-art physical noise rate $\gamma_{\ph}/\gamma_{\thr}=0.5$.
    The last row shows the total resource-saving in $r\cdot N_c$ (i.e., number of Trotter steps times number of physical qubits) for fault-tolerant Trotter circuits.
}
    \label{tab:comparison}
\end{table}

The left panel of \cref{fig:qecc_reduction} compares the optimal Trotter numbers and corresponding noise rate requirements 
in the worst-case analysis and our state-dependent analysis.
The right panel of \cref{fig:qecc_reduction} illustrates the reduction in code distance given the logical noise rate
(equivalently, reducing the number of physical qubits) 
enabled by our error analysis under different relaxation ratios between physical noise rates and logical noise rates. 
\cref{tab:comparison} summarizes the improvements in resources for different system sizes by our state-dependent analysis.
Notably, when the actual physical noise rate is only slightly below the QEC threshold $\gamma_{\thr}$, 
our exponentially decaying error formula could significantly reduce physical qubits for fault-tolerant quantum simulations.
For instance, in the case of the surface code, the threshold under depolarizing noise and two-qubit correlated noise is approximately $1.8\%$ with optimal decoder \cite{heimOptimalCircuitLevelDecoding2016}.  
At the current state-of-the-art QEC, the physical noise rates can be maintained approximately 50\% of the threshold, i.e. $\gamma_{\ph}/\gamma_{\thr}\approx 0.5$ \cite{acharyaQuantumErrorCorrection2024}. 
In this case, our analysis can reduce the resource $r\cdot N_c$ (number of Trotter steps times number of physical qubits) by $60\%$ when system size $n=50$.
Therefore, in this practical regime, our error analysis can accelerate the path toward practical quantum advantage by quantum simulation.

Though the one-qubit depolarizing noise is the most widely studied noise channel in theory,
it does not necessarily model all realistic noises on current quantum devices.
Besides the depolarizing noise, it would be valuable to analyze other types of noise \cite{ben-orQuantumRefrigerator2013,clintonHamiltonianSimulationAlgorithms2021,feffermanEffectNonunitalNoise2024}.  %

\subsection{Common incoherent noise channels}
While we use the depolarizing noise for analyses due to its simplicity,
the general Pauli noise channel $\E_{\gamma_x,\gamma_y,\gamma_z}(\rho)$ of the form 
\begin{equation}
    (1-\gamma_x-\gamma_y-\gamma_z)\rho + \qty(\gamma_x X\rho X + \gamma_y Y \rho Y + \gamma_z Z\rho Z) 
    \label{eq:pauli_noise}
\end{equation}
is more realistic.
For example, depolarizing noise is with $\gamma_x=\gamma_y=\gamma_z$,
while the dephasing noise has $\gamma_x=\gamma_y=0$.
On the other hand, for the typical non-unital noise \cite{ben-orQuantumRefrigerator2013,feffermanEffectNonunitalNoise2024,shtankoComplexityLocalQuantum2025},
the amplitude damping channel has the Kraus form,
$\mathcal{E}_\gamma^{\mathrm{amp}}(\rho)=A_0\rho A_0L^\dagger + A_1 \rho A_1^\dagger$ 
where 
$A_0=\begin{pmatrix}1&0\\ 0&\sqrt{1-\gamma}\end{pmatrix}$
and 
$A_1=\begin{pmatrix}0&\sqrt{\gamma} \\ 0&0\end{pmatrix}$.
\cref{fig:pauli_noises} shows that
the algorithmic errors always share the similar decaying behaviors with the physical errors,
while three typical noise channels have three distinct decaying behaviors.
Specifically, the one-step error with the depolarizing noise channel decays fastest, 
while the amplitude damping one decays the most slowly and oscillates.
These different error decay rates qualitatively match the maximal computation time of circuits with different noise channels as indicated in~\cite{ben-orQuantumRefrigerator2013}.
See more detailed numerical results and discussion in 
\ifnum\onlymaintext=0
\cref{apd:sec:numeric}.
\else
    the Supplementary Materials \cite{seesm}.
\fi

\begin{figure}[!t]
    \centering
    \includegraphics[width=.99\linewidth]{./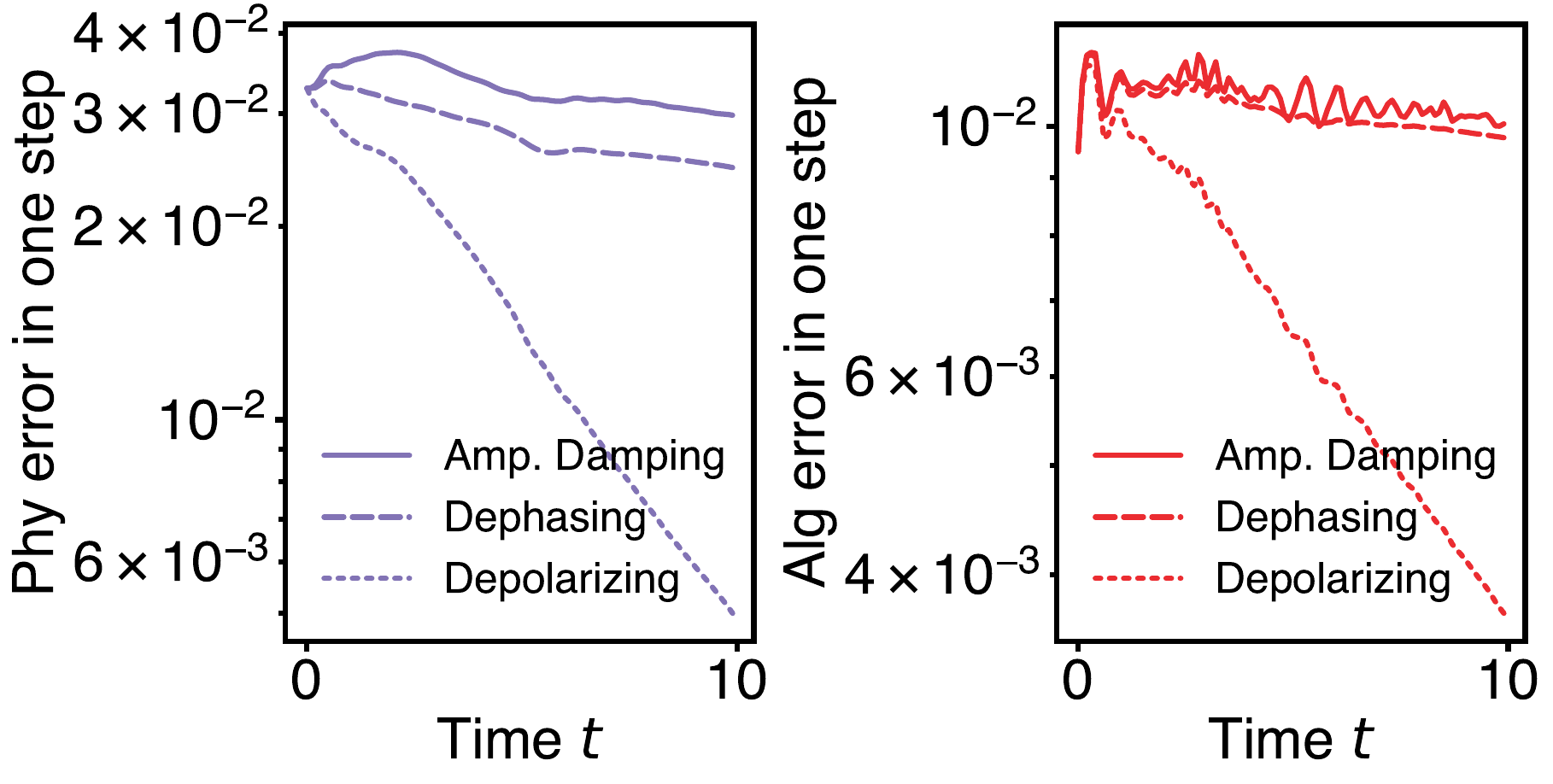}
    \caption{
    Different decay behaviors of three common noise channels.
    The physical and algorithmic error of the depolarizing, dephasing, and amplitude damping noise are plotted in two panels respectively,
    with the same noise rate $\gamma=0.005$ and $n=10$.
    The setting of the noisy Trotter circuit is the same as \cref{fig:decay}.
    }
    \label{fig:pauli_noises}
\end{figure}

\subsection{Error per-time model}

In Trotter circuits, the gates are usually Pauli rotations $e^{-\ii P \dt}$ for certain local Pauli operator $P$ with rotation angle $\dt=t/r$.
In our previous analyses, the noise rate $\gamma$ is constant regardless of the rotation angle.
However, an alternative model called \emph{error-per-time} model in which the noise rate is proportional to the angle, 
i.e., $\gamma \sim \dt$ ,
is also considered in \cite{clintonHamiltonianSimulationAlgorithms2021}.
Both of these error models are idealizations but reasonable for different scenarios.
For example, if each rotation gate is implemented via a pulse,
the error probability will naturally be proportional to the gate duration (rotation angle). 
This model is supported by experiments \cite{stengerSimulatingDynamicsBraiding2021,earnestPulseefficientCircuitTranspilation2021},
where shorter pulse implementations achieve higher fidelities than standard longer ones.
Consequently, if we adopt this model, 
the total physical error of the noisy Trotter circuit would be bounded by the total evolution $t$,
independent of the number of Trotter steps $r$. 
Nevertheless, for quantum devices operated digitally where Pauli rotations are compiled into native elementary gates, the constant-noise model is more reasonable.
Then, the total accumulated error would still be one-step error times $r$ as in previous sections.

\section{Discussion}

In this work, we quantitatively investigated the algorithmic and physical errors of the Trotter circuit with depolarizing noise.
Considering the evolved state information, our exponentially decaying error upper bound provides a more accurate estimation of the optimal number of Trotter steps and an explicit requirement on noise rate for robust noisy Trotter simulation.
In consequence, our analysis offers substantial reduction in circuit depths and number of physical qubits for fault-tolerant Trotter circuits.
Concretely, our findings delineate parameter regimes where Trotter simulation may demonstrate practical and robust quantum advantage despite noise.
To substantiate the theoretical proofs, various numerical results investigate the impacts, such as initial states, observables, and noise channels.
Therefore, by enhancing the understanding of the interplay between noise and algorithmic errors, 
our work is an important step toward practical quantum advantage via quantum simulation.

Besides the Trotter error of states as in this work and Ref.~\cite{childsTheoryTrotterError2021},
the Trotter errors in expectation values of local observables \cite{yuObservableDrivenSpeedupsQuantum2025} could be more optimistic. 
For instance,
Refs.~\cite{heylQuantumLocalizationBounds2019,siebererDigitalQuantumSimulation2019} indicated
that quantum localization keeps Trotter errors in local observables bounded 
if the Trotter step size stays below a critical threshold,
unlike state errors which grow with time and system size.
Analogous to our result, Ref.~\cite{granetDilutionErrorDigital2025} showed 
that the errors in local observables are much lower than expected.
On the other hand, the observable information not only makes Trotter error smaller,
but at the same time renders classical simulation more efficient \cite{begusicRealTimeOperatorEvolution2025}.
Therefore, it is crucial to understand the robustness of Trotter error in local observables with noisy quantum circuits.

Meanwhile, there are still many potential directions to pursue with the noisy Trotter simulation.
First, it would be desirable to combine our analysis with error mitigation techniques such as Refs.~\cite{endoMitigatingAlgorithmicErrors2019, watsonExponentiallyReducedCircuit2025}.
In the other direction, one major open problem is whether we could propose low-overhead error correction schemes \cite{zhouAlgorithmicFaultTolerance2024} 
or partially fault-tolerant schemes \cite{akahoshiPartiallyFaultTolerantQuantum2024,toshioPracticalQuantumAdvantage2025} tailored for Trotter simulation.
    Additionally, for the robustness of post-Trotter quantum simulation algorithms such as Linear Combination of Unitaries \cite{berrySimulatingHamiltonianDynamics2015} and Quantum Signal Processing \cite{lowOptimalHamiltonianSimulation2017}, other error sources should be considered seriously, e.g., the state-preparation-and-measurement (SPAM) errors, and the coherent errors.

\section{Methods}
In this section, we sketch the proof of Propositions 1 and 2 in the main text, 
i.e., the exponential decay of physical and algorithmic errors.

\subsection{Sketch proof of Proposition 1 (Physical error decay)}

The $d$-th step state $\rho_d$ applying the one-qubit depolarizing noise channel $\mathcal{E}_{\depo, \gamma}^{n}$ can be written as
\begin{equation}\label{method:eq:channel_expand}
    \mathcal{E}_{\depo, \gamma}^{n}(\rho_d) = \sum_{F} \gamma^{|F|}(1-\gamma)^{n-|F|} \rho_{\overline{F}} \otimes \frac{I}{2^{|F|}},
\end{equation} 
where $F$ takes all subsets of qubits representing the depolarized subsystems, 
$\overline{F}$ is the complement of $F$ and $\rho_{\overline{F}}$ is the reduced density matrix of $\rho$ on subsystem $\overline{F}$.
The one-step physical error is defined as the trace distance between the depolarized state and the original state.
Since the noise rate $\gamma$ is small, we approximate the physical error to the first-order, 
\begin{align*}
    \trnorm{ \mathcal{E}_{\depo, \gamma}^{n}(\rho_d) - \rho_d } 
     =&  \norm{ -n\gamma \rho_d + \gamma \sum_{F:\abs{F}=1} \rho_{\overline{F}}\otimes \frac{I}{2}}_1 + o(n\gamma) \\
     \leq & n\gamma \sqrt{ 2\calD \qty( \rho_d \Vert  \frac{1}{n}\sum_{F:|F|=1} \rho_{\overline{F}} \otimes \frac{I}{2})  } + o(n\gamma) .
\end{align*}
The second line above is by the Pinsker's inequality 
that upper bounds the trace distance by the relative entropy.
Then, we bound the distance to the locally mixed state by the distance to the globally maximally mixed state $\frac{I}{2^n}$
\begin{align*}
    \calD \qty( \rho_d \Vert  \frac{1}{n}\sum_{F:|F|=1} \rho_{\overline{F}} \otimes \frac{I}{2}) 
    \leq  \calD \qty(\rho_d \Vert \frac{I}{2^n} ) 
    \le  \calD \qty(\rho_0 \Vert \frac{I}{2^n} ) e^{-\gamma d} .
\end{align*}
The last inequality above introducing exponential decay factor $e^{-\gamma d}$ is by the entropy contraction  \cite{francaLimitationsOptimizationAlgorithms2021} 
\begin{equation}\label{method:eq:contraction}
    \calD\qty((\pfc\circ \E_{\gamma}^n)^r(\rho_0) \Vert \sigma) 
    \le (1-\gamma)^{r} \calD(\rho_0 \Vert \sigma).
\end{equation}
where $\rho_0$ is any initial state and $\pfc$ be the channel of one step Trotter unitary.
Since the maximum relative entropy to the maximally mixed state $\calD \qty(\rho_0 \Vert \frac{I}{2^n} )\le n$, we have \cref{thm:exp_phy_bound}.
If the system holds to satisfy 
$\sum_{F:\abs{F}=1} \calD\qty(\rho\|\frac{1}{n} \rho_{\overline{F}}\otimes \frac{I}{2}) = \Theta\qty(\frac{1}{n}) \calD\qty(\rho \| \frac{I}{2^n})$, 
the upper bound of the one-step physical error can be improved to $\Theta(n) \gamma e^{-\frac{1}{2}\gamma d}$.

\subsection{Sketch proof of Proposition 2 (Trotter error decay)}

First, we rewrite the $p$th-order Trotter unitary $\pf_p = U(I+M_p)$ in terms of its multiplicative error operator $M_p$.
Then, the one-step algorithmic error can be written as the trace norm of the commutator between $\rho$ and the Trotter error operator $M_p$
\begin{align*}
    \trnorm{U\rho U^{\dag}-\pf_p\rho \pf^{\dag}_p}
    &= \trnorm{U\rho U^{\dag}-U(I+M_p)\rho (I+M_p^{\dagger}))U^{\dag} } \\
    &= \trnorm{\rho M_p -M_p\rho }
     \equiv \trnorm{[\rho, M_p]},
\end{align*}
 by the unitary invariance of trace norm.
Further, the multiplicative Trotter error operator $M_p$ can be written as the sum of the leading-order local error terms as $M_p:=\delta t^{p+1}\sum_j E_j^{(p)}$ 
where $E_j^{(p)}$ acts on $w$ qubits and $\delta t=t/r$.
For simplicity, we omit the subscript $d$ as well as $p$.
\begin{equation}
    \trnorm{U\rho U^{\dag}-\pf\rho \pf^{\dag}}
    = \trnorm{[\rho, M]}
    = \delta t^{p+1} \trnorm{\qty[\rho, \sum_j E_j]} 
\end{equation}
Then, by inserting cross terms of $\rho':=\rho_{\overline{F}}\otimes I/2^w$, 
the sum of commutator norms $\sum_j \trnorm{ [\rho, E_j] }$ becomes
\begin{align*}
    &\sum_j \trnorm{\rho E_j - \rho' E_j + \rho' E_j - E_j \rho' + E_j\rho' -E_j \rho} \\
    \le & \sum_j \trnorm{ \rho E_j - \rho' E_j } +\trnorm{\rho' E_j -E_j \rho'}  + \trnorm{ E_j\rho' - E_j\rho } \\
    \le & \sum_j 2\trnorm{ \rho  - \rho' } \opnorm{E_j} + \trnorm{[\rho',E_j]}   
    \tag{Hölder's ineq}  \\
    =&  2\trnorm{I/2^w\otimes\rho_{\overline{F}}-\rho}  \sum_j \opnorm{E_j} 
    \tag{locality of $E_j$}  \\ 
    \le& \Theta(n^{1/2})  e^{-\frac{1}{2}\gamma d} \sum_j \opnorm{E_j} 
    \tag{Proposition 1} \\
    \leq& \Theta(n^{1/2}) 
    B_{p} e^{-\frac{1}{2}\gamma d}
    \tag{worst-case Trotter error}
\end{align*}  
The last third line uses $\trnorm{[\rho', E_j]}=0$ because the error operator $E_j$ acts on identity.
The exponential decay factor in the last second line is the same as the one of the physical error proved in Proposition 1.
In the last line, the factor $B_p=\sum_j \opnorm{E_j^{(p)}}$ is from the upper bound of the worst-case Trotter error which is the $p$th-order nested commutator norm \cite{childsTheoryTrotterError2021}. 
So far, we have the exponential decay of the one-step Trotter error \cref{thm:exp_alg_bound}.

\textbf{Acknowledgements}. 
    J.X. and Q.Z. acknowledge funding from Innovation Program for Quantum Science and Technology via Project 2024ZD0301900, National Natural Science Foundation of China (NSFC) via Project No. 12347104 and No. 12305030, Guangdong Basic and Applied Basic Research Foundation via Project 2023A1515012185, Hong Kong Research Grant Council (RGC) via No. 27300823, N\_HKU718/23, and R6010-23, Guangdong Provincial Quantum Science Strategic Initiative No. GDZX2303007, HKU Seed Fund for Basic Research for New Staff via Project 2201100596.

\noindent\textbf{Author contributions}: 
Q.Z. and J.X. proposed the research. 
Q.Z., C.Z. and J.X. completed the theoretical proofs. 
J.X. and C.Z. performed numerical simulations of noisy circuits. 
J.F. and J.X. performed the resource estimation for fault tolerance. 
All authors discussed the results and wrote the manuscript.

\noindent\textbf{Competing interests}: The authors declare that they have no competing interests.

\noindent\textbf{Code availability}:
The code used in this study is available on \href{https://github.com/dzzxzl/Noisy-Trotter-Simulation}{GitHub}.

\bibliographystyle{style/truncate_ref}
\bibliography{bib/ref_aps, bib/seesm}
\onecolumngrid

\ifnum\onlymaintext=0
    \appendix
    \newpage
    \begin{center}
    {\bf \large Supplementary Materials to \\ \emph \mytitle} 
    \end{center}
    \renewcommand{\tocname}{Appendix Contents}

\section{Proofs of theorems}\label{apd:proofs}
\subsection{Preliminaries}
In this paper, different norms are used to quantify the errors in different cases. 
In general, these norms can be defined from different perspectives.
Here, we adopt the notion of Schatten norms.
\begin{definition}[Schatten norm]\label{def:norm}
    The Schatten $p$-norm of a matrix $A$ is defined as 
    $\norm{A}_p=[\Tr(\abs{A}^p)]^{1/p}$.
    Three commonly used Schatten norms are:
    \begin{itemize}
        \item The Schatten 1-norm (also called trace norm or nuclear norm) $\norm{A}_{\tr}\equiv\norm{A}_1:=\Tr(\sqrt{AA^\dagger})=\sum_j\abs{\lambda}_j$ 
        that is the sum of the absolute value of eigenvalues of $A$.
        It is commonly used to measure the distance between two density matrices, e.g. $A:=\rho-\tilde{\rho}$.
        
        \item The Schatten $\infty$-norm (also known as spectral norm or operator norm, induced-2 norm) $\norm{A}_\infty$ of an operator $A$ is the largest singular value of $A$,
        i.e., $\norm{A}_\infty\equiv\norm{A}_{\mathrm{op}}:=\max_{\ket{\psi}} \norm{A\ket{\psi}}_2$ where $\norm{\cdot}_2$ is the $l_2$-norm for vectors. 
        So, it corresponds to the operator error induced by the worst input (state).
        
        \item
        The Schatten 2-norm (also known as Frobenius or Hilbert-Schmidt norm) is $\fnorm{A}\equiv\norm{A}_2:=\sqrt{\Tr(AA^\dagger)}$ 
        which is the (root mean square) eigenvalue.
        Intuitively, it captures the average-case analysis.

    \end{itemize}
\end{definition}

\begin{definition}[Diamond norm]\label{def:diamond_norm}
    For a quantum channel $\E:\mathbb{C}^n\to\mathbb{C}^n$, diamond norm is the trace norm of the output of a trivial extension of $\E$, 
    maximized over all possible inputs with trace norm at most one
    \begin{equation}
        \dnorm{\E}:=\sup_{A}\qty{\trnorm{(\E \otimes \mathbb{I}_n)(A)}: \trnorm{A}\le 1}
        \label{eq:def_dnorm}
    \end{equation}
    where $A\in L(\mathbb{C}^n\otimes \mathbb{C}^n)$ .
\end{definition}

\subsection{Upper bounds of physical error}\label{apd:physical}
\subsubsection{The worst-case (state-independent) analysis}
We first derive the worst-case upper bound of the physical error by the diamond norm of channels.
\begin{lemma}[Diamond distance between Pauli channels \cite{magesanCharacterizingQuantumGates2012}]
    Given two $n$-qubit Pauli channels $\E_{\vb{q}}(\rho)=\sum_{i=0}^{4^n-1} q_i P_i \rho P_i^\dagger$ and $\E_{\vb{r}}(\rho)=\sum_{i=0}^{4^n-1} r_i P_i \rho P_i^\dagger$, 
    then their diamond norm distance is
    \begin{equation}
        \dnorm{\E_{\vb{q}}-\E_{\vb{r}}} = \norm{\vb{q}-\vb{r}}_1 = \sum_{i=0}^{4^n-1} \abs{p_i-r_i}.
    \end{equation}
\end{lemma}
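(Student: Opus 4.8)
The plan is to prove matching upper and lower bounds on $\dnorm{\E_{\vb q}-\E_{\vb r}}$. Writing $c_i := q_i - r_i$, the difference is the Hermiticity-preserving map $\Phi(\rho)=\sum_i c_i P_i\rho P_i^\dagger$, and the goal is to show $\dnorm{\Phi}=\sum_i|c_i|=\norm{\vb q-\vb r}_1$. I would establish $\dnorm{\Phi}\le\sum_i|c_i|$ by a one-line triangle-inequality argument valid for \emph{every} admissible input, and then $\dnorm{\Phi}\ge\sum_i|c_i|$ by exhibiting a single input (the maximally entangled state) that saturates it.

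For the upper bound, I would work directly inside the supremum of \cref{def:diamond_norm}. For any $A$ with $\trnorm{A}\le 1$ on the doubled space,
\begin{equation}
\trnorm{(\Phi\otimes\mathbb{I})(A)}
= \trnorm{\sum_i c_i (P_i\otimes I)A(P_i\otimes I)^\dagger}
\le \sum_i |c_i|\,\trnorm{(P_i\otimes I)A(P_i\otimes I)^\dagger}
= \sum_i |c_i|\,\trnorm{A}
\le \sum_i |c_i|,
\end{equation}
where the essential step is the unitary invariance of the trace norm, since each $P_i\otimes I$ is unitary. Taking the supremum over $A$ yields $\dnorm{\Phi}\le\sum_i|c_i|$.

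For the lower bound, I would feed in the maximally entangled state $\ket{\Omega}=2^{-n/2}\sum_j\ket{j}\ket{j}$, whose density matrix has unit trace norm and is therefore admissible. The structural fact I would lean on is that the twirled Bell vectors $\ket{\Omega_i}:=(P_i\otimes I)\ket{\Omega}$ are orthonormal: using $\bra{\Omega}(M\otimes I)\ket{\Omega}=2^{-n}\Tr(M)$ together with Pauli orthogonality $\Tr(P_i^\dagger P_j)=2^n\delta_{ij}$ gives $\braket{\Omega_i}{\Omega_j}=\delta_{ij}$. Hence the output Choi operator is diagonal in this orthonormal basis,
\begin{equation}
(\Phi\otimes\mathbb{I})(\ket{\Omega}\bra{\Omega})=\sum_i c_i\,\ket{\Omega_i}\bra{\Omega_i},
\end{equation}
so its trace norm is exactly $\sum_i|c_i|$. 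This forces $\dnorm{\Phi}\ge\sum_i|c_i|$, and combining with the upper bound closes the argument.

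I expect the only genuine subtlety to lie in the lower bound, specifically in recognizing that the maximally entangled input renders the Choi operator diagonal in an orthonormal basis; this is exactly where Pauli orthogonality enters and where the full $\ell_1$ norm (rather than some smaller quantity) emerges, since diagonal entries in an orthonormal eigenbasis contribute their absolute values to the trace norm. It is worth flagging that this optimality of $\ket{\Omega}$ is special to the covariant structure of Pauli channels; for generic channels the maximally entangled state need not saturate the diamond norm, but here the orthogonality of the twirled Bell states makes it the maximizer.
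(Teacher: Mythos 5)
Your proof is correct and complete. Note that the paper does not actually prove this lemma---it imports it directly from the cited reference (Magesan et al.) and immediately specializes it to the one-qubit depolarizing channel---so there is no paper-internal argument to compare against; your derivation supplies the standard proof of the imported fact. Both directions are sound: the upper bound follows from the triangle inequality plus unitary invariance of the trace norm applied to each $P_i\otimes I$, and the lower bound correctly exploits that the $4^n$ twirled Bell vectors $(P_i\otimes I)\ket{\Omega}$ form a complete orthonormal basis of the doubled space (via $\bra{\Omega}(M\otimes I)\ket{\Omega}=2^{-n}\Tr(M)$ and Pauli orthogonality), so the Choi operator $\sum_i c_i \op{\Omega_i}$ has trace norm exactly $\sum_i\abs{c_i}$, saturating the bound with an admissible input of unit trace norm. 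One cosmetic remark: the $\abs{p_i-r_i}$ appearing in the lemma statement is a typo in the paper for $\abs{q_i-r_i}$, which is what you correctly prove.
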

Consider the one-qubit depolarizing channel $\E^{\depo}_{\gamma}$ with noise rate $\gamma \in [0,1]$ after applying a gate
\begin{align}\label{apd:eq:single_depolarizing}
    \E^{\depo}_{\gamma}(\rho) 
    &:= (1-\gamma)\rho + \frac{\gamma}{3} \qty(X\rho X + Y \rho Y + Z\rho Z) ,
\end{align}
where $\rho$ is the one-qubit density matrix.
\begin{corollary}[Diamond distance of 1-qubit depolarizing channel]\label{apd:lem:diamond_distance_single_depolar}
    The diamond distance between a one-qubit depolarizing channel \cref{apd:eq:single_depolarizing} with noise rate $\gamma$ and the identity channel $\IC$ is 
    $\dnorm{\E_{\depo,\gamma}^{1}-\IC}=2\gamma$.
\end{corollary}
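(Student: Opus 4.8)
The plan is to recognize that both channels appearing in the statement are single-qubit Pauli channels, so the preceding lemma on the diamond distance between Pauli channels applies directly and reduces the computation to an $\ell_1$ distance between two probability-like coefficient vectors. First I would write the identity channel $\IC$ in Pauli form: since $\IC(\rho)=\rho=I\rho I^\dagger$, it is the Pauli channel with coefficient vector $\vb{r}=(1,0,0,0)$ in the ordering $(I,X,Y,Z)$. Reading off \cref{apd:eq:single_depolarizing}, the depolarizing channel $\E_{\depo,\gamma}^{1}$ is the Pauli channel with coefficient vector $\vb{q}=\qty(1-\gamma,\tfrac{\gamma}{3},\tfrac{\gamma}{3},\tfrac{\gamma}{3})$.

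Next I would invoke the lemma, which gives $\dnorm{\E_{\depo,\gamma}^{1}-\IC}=\norm{\vb{q}-\vb{r}}_1$, and then evaluate the right-hand side termwise:
\begin{equation}
    \norm{\vb{q}-\vb{r}}_1
    = \abs{(1-\gamma)-1} + 3\cdot\abs{\tfrac{\gamma}{3}-0}
    = \gamma + \gamma = 2\gamma.
\end{equation}
This yields $\dnorm{\E_{\depo,\gamma}^{1}-\IC}=2\gamma$, as claimed.

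There is no substantive obstacle here; the result is an immediate specialization of the Pauli-channel lemma. The only point that warrants a remark is the (routine) observation that the identity channel qualifies as a Pauli channel with all weight on the $I$ component, so that the lemma is legitimately applicable to the pair $\qty(\E_{\depo,\gamma}^{1},\IC)$. I would also note in passing that this $2\gamma$ per-qubit diamond distance is exactly the building block that, upon tensoring over $n$ qubits and using subadditivity of the diamond norm, produces the state-independent bound $\dnorm{\E_{\gamma}^{n}-\IC}\le 2n\gamma$ quoted in the main text, which motivates computing the single-qubit case explicitly.
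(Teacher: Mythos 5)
Your proposal is correct and matches the paper's intended argument exactly: the paper presents this statement as a corollary of the preceding Pauli-channel lemma, and your computation — writing $\IC$ as the Pauli channel with coefficients $(1,0,0,0)$, the depolarizing channel with $(1-\gamma,\gamma/3,\gamma/3,\gamma/3)$, and taking the $\ell_1$ distance $\gamma+3\cdot\gamma/3=2\gamma$ — is precisely the specialization the paper relies on. No gaps.
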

The worst-case, state-independent bound is as follows.
\begin{lemma}[Diamond norm upper bound of physical error]\label{apd:lemma:physical_error}
    The physical error induced by the tensor product of one-qubit depolarizing channels with noise rate $\gamma$ has the diamond norm upper bound
    $\dnorm{\E_{\depo,\gamma}^{n}-\IC^{\otimes n}}=2[1-(1-\gamma)^n]\le2n\gamma$.
\end{lemma}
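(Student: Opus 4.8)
The plan is to recognize that both operands of the diamond norm are Pauli channels and then invoke the preceding lemma on the diamond distance between Pauli channels, which equates that distance with the $\ell_1$ distance of their Pauli weight vectors. First I would observe that the identity channel $\IC^{\otimes n}$ is the trivial Pauli channel placing all of its weight on the identity string $P_0=I^{\otimes n}$, i.e.\ $r_{P_0}=1$ and $r_P=0$ for every $P\ne P_0$. The entire problem then reduces to reading off the Pauli weights of $\E_{\depo,\gamma}^{n}=\bigotimes^n \E_{\depo,\gamma}^{1}$ and evaluating a single $\ell_1$ norm.

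Next I would exploit the tensor-product structure. Since the single-qubit depolarizing channel \cref{apd:eq:single_depolarizing} carries weights $q_I=1-\gamma$ and $q_X=q_Y=q_Z=\gamma/3$, the $n$-fold tensor product assigns to a Pauli string $P$ of Hamming weight $\abs{P}$ (the number of non-identity tensor factors) the weight
\begin{equation}
    q_P = (1-\gamma)^{n-\abs{P}}\qty(\frac{\gamma}{3})^{\abs{P}}.
\end{equation}
In particular $q_{P_0}=(1-\gamma)^n$, and because $\E_{\depo,\gamma}^{n}$ is a genuine Pauli channel its weights are non-negative and normalized, $\sum_P q_P=1$.

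I would then apply the Pauli-channel lemma directly to obtain
\begin{align}
    \dnorm{\E_{\depo,\gamma}^{n}-\IC^{\otimes n}}
    &= \abs{q_{P_0}-1} + \sum_{P\ne P_0} q_P \\
    &= \qty(1-(1-\gamma)^n) + \qty(1-q_{P_0}) \\
    &= 2\qty(1-(1-\gamma)^n),
\end{align}
where the identity contribution $\abs{q_{P_0}-1}=1-(1-\gamma)^n$ uses $(1-\gamma)^n\le 1$, and the non-identity sum equals $1-q_{P_0}=1-(1-\gamma)^n$ by normalization. The claimed inequality then follows from Bernoulli's inequality $(1-\gamma)^n\ge 1-n\gamma$, which gives $1-(1-\gamma)^n\le n\gamma$ and hence $\dnorm{\E_{\depo,\gamma}^{n}-\IC^{\otimes n}}\le 2n\gamma$.

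The computation is essentially routine, so the only place that demands care is the bookkeeping of the $\ell_1$ sum: one must separate the identity term $\abs{q_{P_0}-1}$ from the sum over non-identity strings and recognize, via normalization, that \emph{both} contributions equal $1-(1-\gamma)^n$, which is exactly what produces the factor of two in the exact value. An alternative route would bypass the explicit weights and instead combine the single-qubit \cref{apd:lem:diamond_distance_single_depolar} with subadditivity of the diamond norm under tensoring, but that yields only the weaker estimate $2n\gamma$ rather than the exact value $2(1-(1-\gamma)^n)$; I would therefore favor the direct Pauli-weight computation above.
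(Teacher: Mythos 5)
Your proof is correct, but it takes a genuinely different route from the paper's. The paper disposes of this lemma in one line by invoking the chain property of channel distance, $\dnorm{\E_1\circ \E_2- \E_1'\circ \E_2'} \le \dnorm{\E_1-\E_1'} + \dnorm{\E_2-\E_2'}$, together with the single-qubit result $\dnorm{\E_{\depo,\gamma}^{1}-\IC}=2\gamma$ of \cref{apd:lem:diamond_distance_single_depolar}: writing the $n$-fold tensor product as a composition of $n$ single-site noise layers immediately gives $\dnorm{\E_{\depo,\gamma}^{n}-\IC^{\otimes n}}\le 2n\gamma$. This is exactly the "alternative route" you mention and set aside at the end. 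What your approach buys is strictly more: by applying the Pauli-channel lemma directly to the $n$-qubit channel, with weights $q_P=(1-\gamma)^{n-\abs{P}}(\gamma/3)^{\abs{P}}$ and the normalization $\sum_P q_P = 1$, you establish the exact value $2\qty(1-(1-\gamma)^n)$, which the subadditivity argument cannot produce (an upper bound on a sum of diamond norms never yields an equality); the stated equality in the lemma is in fact only justified by a computation like yours, so in this sense your proof is more complete than the paper's one-line justification, at the modest cost of the Pauli-weight bookkeeping and the final appeal to Bernoulli's inequality $(1-\gamma)^n \ge 1-n\gamma$. The paper's argument, in exchange, is shorter and applies verbatim to noise channels that are not Pauli channels.
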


\cref{apd:lemma:physical_error} is directly implied by the chain property of channel distance:
$\dnorm{\E_1\circ \E_2- \E_1'\circ \E_2'} \le \dnorm{\E_1-\E_1'} + \dnorm{\E_2-\E_2'}$.
Nevertheless, the diamond norm bound is too pessimistic because it is considered the worst state in an extended Hilbert space.
Therefore, we derive a refined upper bound using the trace norm.

\subsubsection{The state-dependent analysis}
Since there are no ancilla qubits in the Trotter circuit, 
the trace norm $\trnorm{\cdot}$ is enough and more suitable for our error analysis.
We first introduce \cref{thm:pinsker_ineq} which gives the upper bound of trace distance by the relative entropy.
\begin{lemma}[Pinsker's inequality]\label{thm:pinsker_ineq}
    Given two density matrices $\rho$ and $\sigma$,
    the trace distance between them $\trnorm{\rho-\sigma}$ has the upper bound by their relative entropy, i.e.,
    \begin{equation}\label{eq:pinsker_ineq}
        \frac{1}{2}\trnorm{\rho-\sigma}^2
        \le \calD(\rho\Vert\sigma).
    \end{equation}
    where $\calD(\rho \Vert \sigma):=\Tr(\rho(\log(\rho)-\log(\sigma)))$ is the relative entropy between $\rho$ and $\sigma$.
\end{lemma}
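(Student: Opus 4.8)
The plan is to reduce this quantum inequality to its classical binary counterpart, so that the only genuinely quantum ingredient is the monotonicity of relative entropy under measurement. First I would set $\Delta := \rho - \sigma$ and let $\Pi$ be the projector onto the non-negative eigenspace of $\Delta$. Because $\Tr\Delta = 0$, the positive and negative parts of $\Delta$ have equal trace, so $\trnorm{\rho-\sigma} = 2\Tr(\Pi\Delta)$; that is, the trace distance is attained by the two-outcome projective measurement $\{\Pi, I-\Pi\}$. Applying this measurement to $\rho$ and $\sigma$ yields binary distributions $p = (\Tr(\Pi\rho),\, 1-\Tr(\Pi\rho))$ and $q = (\Tr(\Pi\sigma),\, 1-\Tr(\Pi\sigma))$, and since $p_1 - q_1 = \Tr(\Pi\Delta) \ge 0$ while $p_2 - q_2 = -\Tr(\Pi\Delta)$, one gets $\norm{p-q}_1 = 2\Tr(\Pi\Delta) = \trnorm{\rho-\sigma}$.

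Next I would invoke the data-processing inequality for quantum relative entropy. Letting $\calM$ be the CPTP channel that maps a state to the classical outcome distribution of $\{\Pi, I-\Pi\}$, monotonicity gives $\calD(\rho\Vert\sigma) \ge \calD(\calM(\rho)\Vert\calM(\sigma)) = \calD(p\Vert q)$, where the right-hand side is now the classical Kullback--Leibler divergence of two distributions on a two-point set. This is the main obstacle of the argument: monotonicity of relative entropy under CPTP maps is a deep fact (equivalent to strong subadditivity of von Neumann entropy, via Lieb concavity), so I would cite it as standard rather than reprove it.

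It then remains to establish the binary classical Pinsker inequality $\calD(p\Vert q) \ge \tfrac12\norm{p-q}_1^2 = 2(p_1-q_1)^2$, which I would prove by a one-variable convexity argument. Fixing $q_1$ and writing $g(x) := x\log\tfrac{x}{q_1} + (1-x)\log\tfrac{1-x}{1-q_1} - 2(x-q_1)^2$, a direct computation gives $g(q_1)=0$ and $g'(q_1)=0$, while $g''(x) = \tfrac{1}{x(1-x)} - 4 \ge 0$ since $x(1-x)\le\tfrac14$ on $[0,1]$. Hence $g$ is convex with its minimum value $0$ at $x=q_1$, so $g\ge 0$, which is exactly the claim at $x=p_1$.

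Finally I would chain the three steps, $\calD(\rho\Vert\sigma) \ge \calD(p\Vert q) \ge 2(p_1-q_1)^2 = \tfrac12\trnorm{\rho-\sigma}^2$, to conclude. The only calculations needing care are the identity $\trnorm{\rho-\sigma}=2\Tr(\Pi\Delta)$ together with $\norm{p-q}_1 = \trnorm{\rho-\sigma}$, and the derivative computations for $g$; everything else is either definitional or the cited monotonicity theorem.
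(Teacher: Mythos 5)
Your proof is correct, but note that the paper itself does not prove this lemma at all: it is stated in the preliminaries as a standard known fact (quantum Pinsker's inequality) and simply used downstream, so there is no in-paper argument to compare against step by step. What you have written is the standard textbook derivation, and all of its steps check out: the identity $\trnorm{\rho-\sigma}=2\Tr(\Pi\Delta)$ follows from the Jordan decomposition of the traceless Hermitian operator $\Delta$; the induced binary distributions satisfy $\norm{p-q}_1=\trnorm{\rho-\sigma}$; monotonicity of quantum relative entropy under the measurement channel is correctly identified as the one deep ingredient and is legitimately cited rather than reproved; and your convexity computation for $g(x)=x\log\tfrac{x}{q_1}+(1-x)\log\tfrac{1-x}{1-q_1}-2(x-q_1)^2$ (with $g(q_1)=g'(q_1)=0$ and $g''(x)=\tfrac{1}{x(1-x)}-4\ge 0$) is exactly the binary Pinsker bound. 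Two small points worth flagging if this were to be included: your constant assumes the natural logarithm (with $\log_2$, as suggested by the paper's later use of $\calD(\rho\Vert I/2^n)=n-S(\rho)$, the inequality only becomes looser, so the stated bound still holds), and the degenerate cases $\supp(\rho)\not\subseteq\supp(\sigma)$ or $q_1\in\{0,1\}$ should be dispatched by noting the relative entropy is then $+\infty$. The trade-off is the expected one: the paper's citation keeps the appendix short, while your argument makes the lemma self-contained at the cost of invoking the data-processing inequality, which is itself equivalent to strong subadditivity.
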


Then, we utilize the entropy contraction of the depolarizing channel to give upper bounds of the errors of the Trotter simulation. 
We first begin with a stronger quantum Shearer's inequality proved in \cite{yanLimitationsNoisyQuantum2023}, 
which is used for the proof of the entropy contraction. 
The entropy contraction is also proved with other techniques in \cite{muller-hermesRelativeEntropyConvergence2016,francaLimitationsOptimizationAlgorithms2021}.

\begin{lemma}[A stronger quantum Shearer's inequality]\label{lem:shearer}
    Consider $t\in \mathbb{N}$ and a family $\mathcal{F}\subset 2^{\{1,2,\cdots,n\}}$ of subsets of $\{1,2,\cdots,n\}$ such that each $i$ is included in more than $t$ elements of $\mathcal{F}$. For any state $\rho \in \mathcal{D}\left( \mathbb{C}^{d_1} \otimes \mathbb{C}^{d_2} \otimes \cdots \otimes  \mathbb{C}^{d_n} \right)$, we have
    \begin{equation}
        \sum_{F\in \mathcal{F}} S(\rho_F) \ge tS(\rho),
    \end{equation}
    in which $\rho_F$ is the reduced density matrix of $\rho$ on subsystems $F$.
\end{lemma}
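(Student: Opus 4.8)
The plan is to mirror the classical proof of Shearer's lemma, but to route it entirely through two facts that survive the passage to the quantum setting: the \emph{submodularity} of the von Neumann entropy, $S(\rho_A)+S(\rho_B)\ge S(\rho_{A\cup B})+S(\rho_{A\cap B})$, which is exactly strong subadditivity (and whose disjoint special case is ordinary subadditivity), together with the non-negativity $S(\rho_F)\ge 0$ of every reduced state. I will in fact prove the slightly stronger statement that coverage \emph{at least} $t$ already forces $\sum_{F}S(\rho_F)\ge t\,S(\rho)$; since ``more than $t$'' means coverage $\ge t+1\ge t$, the stated lemma follows at once. Crucially, I will never invoke that entropy is monotone under inclusion, as that is precisely the property that fails quantumly.

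First I would perform an \emph{uncrossing} reduction on $\mathcal F$. Whenever two members $A,B$ are incomparable, I replace the pair $\{A,B\}$ by $\{A\cup B,\,A\cap B\}$ (dropping an empty intersection). Two bookkeeping facts make this useful: each element's coverage is preserved exactly, since $\mathbf 1[x\in A]+\mathbf 1[x\in B]=\mathbf 1[x\in A\cup B]+\mathbf 1[x\in A\cap B]$; and by submodularity the target sum $\sum_F S(\rho_F)$ does not increase. A monovariant such as $\sum_F |F|^2$, which strictly increases at each step (the increase equals $2\,|A\setminus B|\,|B\setminus A|\ge 2$) and is bounded, shows the process terminates in a \emph{laminar} family $\mathcal L$ in which any two sets are nested or disjoint, with the same per-element coverage and $\sum_{L\in\mathcal L}S(\rho_L)\le\sum_{F\in\mathcal F}S(\rho_F)$. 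It therefore suffices to prove the bound for $\mathcal L$.

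For a laminar family I would induct on $t$. The base case $t=0$ is the trivial $\sum_L S(\rho_L)\ge 0$. For the step, the \emph{distinct} maximal sets $M_1,\dots,M_k$ of $\mathcal L$ are pairwise disjoint by laminarity and cover every element (coverage $\ge 1$), so they partition $[n]$; subadditivity over this partition gives $\sum_j S(\rho_{M_j})\ge S(\rho_{[n]})$. Inside each $M_j$ the proper subsets form a laminar family on ground set $M_j$ covering each element at least $t-m_j$ times, where $m_j$ is the multiplicity of $M_j$; the induction hypothesis there yields $\sum_{L\subsetneq M_j}S(\rho_L)\ge (t-m_j)\,S(\rho_{M_j})$. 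Adding the $m_j$ copies of $M_j$ itself, each block contributes at least $t\,S(\rho_{M_j})$, and summing gives $t\sum_j S(\rho_{M_j})\ge t\,S(\rho_{[n]})$, closing the induction.

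The main obstacle, and the reason this is genuinely a quantum statement rather than a corollary of the classical lemma, is the non-monotonicity of von Neumann entropy: quantum conditional entropies (entropy increments) can be negative, so the textbook Shearer argument via the chain rule together with ``conditioning on more never increases entropy, and increments are non-negative'' breaks down. Indeed the naive chain-rule lower bound $\sum_i c_i\,S(i\mid 1,\dots,i-1)$ can fall strictly below $t\,S(\rho)$ once high-coverage coordinates carry negative increments. The uncrossing-to-laminar organization sidesteps this by using only submodularity and non-negativity, with the maximal sets of a laminar family acting as an honest partition. The part requiring care is the bookkeeping: verifying exact coverage preservation and the monovariant's termination under uncrossing, handling multiplicities among the maximal sets, and noting that the vacuous cases $t-m_j<0$ do no harm since $S(\rho_{M_j})\ge 0$.
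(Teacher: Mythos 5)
Your proof is correct, but note at the outset that the paper itself contains no proof of this lemma: it is imported wholesale from the cited reference \cite{yanLimitationsNoisyQuantum2023} (``a stronger quantum Shearer's inequality proved in\ldots''), so there is no in-paper argument to compare against, and your self-contained derivation stands on its own. I verified the three load-bearing steps. First, the uncrossing move is sound: strong subadditivity in its submodular form $S(\rho_A)+S(\rho_B)\ge S(\rho_{A\cup B})+S(\rho_{A\cap B})$ guarantees the objective does not increase, the indicator identity preserves per-element coverage exactly (treating $\mathcal{F}$ as a multiset throughout, which your later multiplicity bookkeeping requires), and the monovariant computation checks out --- the increase of $\sum_F|F|^2$ is exactly $2\,|A\setminus B|\,|B\setminus A|\ge 2$, while the bound $|\mathcal{F}|\,n^2$ holds because the multiset never grows, so the process terminates in a laminar family. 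Second, the induction on $t$ is valid as a strong induction with the ground set and state varying: the distinct maximal sets of a laminar family are pairwise disjoint and, since coverage $\ge t\ge 1$, partition the ground set, so iterated subadditivity gives $\sum_j S(\rho_{M_j})\ge S(\rho)$; every non-maximal member lies in exactly one block (empty sets, if present, contribute $S=0$ and can be discarded); and inside block $j$ the residual coverage $t-m_j$ is handled either by the hypothesis or, when $t-m_j<0$, by non-negativity, as you note. Third, and most importantly, you correctly identify why this is a genuinely quantum difficulty: since $S(\rho_{F'})\le S(\rho_F)$ for $F'\subseteq F$ fails for entangled states, the classical reduction from coverage $\ge t$ to an exact $t$-cover by shrinking sets is unavailable, and the chain-rule-plus-SSA argument only delivers the exact-cover case because the conditional-entropy increments $S(i\,|\,[i-1])$ can be negative, so surplus coverage can hurt rather than help. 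Your uncrossing-to-laminar route, using only submodularity and $S(\rho_F)\ge 0$, is a clean and standard way around this obstruction, and since you establish the bound already for coverage $\ge t$, the stated hypothesis of coverage strictly greater than $t$ follows a fortiori. One cosmetic remark: as phrased, your uncrossing also merges disjoint incomparable pairs (via ordinary subadditivity after dropping the empty intersection), which is harmless --- it only pushes the terminal family further toward a chain --- but you could equally restrict uncrossing to genuinely crossing pairs and land exactly on the laminar case your induction treats.
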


With the stronger quantum Shearer's inequality, the entropy contraction given by the following lemma can be proved.

\begin{lemma}[Entropy contraction for depolarizing noise \cite{muller-hermesRelativeEntropyConvergence2016,yanLimitationsNoisyQuantum2023}]\label{thm:entropy_contraction}
    For any $n$-qubit state $\rho$ and one-qubit depolarizing channel $\E$ of strength $\gamma$, 
    we have
    \begin{equation}
        S(\E(\rho)) \ge (1-\gamma) S(\rho) + \gamma n,
    \end{equation} 
    where $S=-\Tr(\rho\log\rho).$
    The relative entropy then contracts as
    \begin{equation}
        \calD\qty(\E(\rho) \| \frac{I}{2^n}) \le (1-\gamma)  \calD\qty(\rho \| \frac{I}{2^n}).
    \end{equation} 
    where $\calD\qty(\rho \| \frac{I}{2^n}) := \calD(\rho \Vert \sigma):=\Tr(\rho(\log(\rho)-\log(\sigma))) = n - S(\rho)$.

\end{lemma}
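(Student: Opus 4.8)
The plan is to establish the entropy-production inequality $S(\E(\rho)) \ge (1-\gamma)S(\rho) + \gamma n$ first, and then read off the relative-entropy contraction as an immediate consequence. The contraction reduces to the entropy bound through the identity $\calD(\rho\|I/2^n) = n - S(\rho)$, valid with the logarithm in base $2$, since $\calD(\rho\|I/2^n) = \Tr[\rho(\log\rho - \log(2^{-n}I))] = -S(\rho) + n$. Granting the entropy bound, one then has $\calD(\E(\rho)\|I/2^n) = n - S(\E(\rho)) \le n - (1-\gamma)S(\rho) - \gamma n = (1-\gamma)(n - S(\rho)) = (1-\gamma)\calD(\rho\|I/2^n)$, so the entire difficulty sits in the entropy-production step.

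For that step, I would first put the single-qubit depolarizing channel in replacement form, $\E_i(\rho) = (1-p)\rho + p\,\tfrac{I_i}{2}\otimes\Tr_i(\rho)$ with $p = \tfrac{4}{3}\gamma$, using $X\rho X + Y\rho Y + Z\rho Z = 2I - \rho$ for a single qubit. Since the factors $\E_i$ commute, the $n$-fold channel expands exactly as in \cref{method:eq:channel_expand}: $\E(\rho) = \sum_{S\subseteq[n]} p^{|S|}(1-p)^{n-|S|}\,\mathcal{R}_S(\rho)$, where $\mathcal{R}_S(\rho) = \tfrac{I_S}{2^{|S|}}\otimes\rho_{\overline{S}}$ replaces the qubits in $S$ by maximally mixed states. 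Concavity of the von Neumann entropy then gives $S(\E(\rho)) \ge \sum_S p^{|S|}(1-p)^{n-|S|}\,S(\mathcal{R}_S(\rho))$, and the additivity $S(\mathcal{R}_S(\rho)) = |S| + S(\rho_{\overline{S}})$ splits the bound into a cardinality term and a reduced-entropy term.

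The cardinality term is the expected size of $S$ under the product measure in which each qubit lies in $S$ independently with probability $p$, namely $\sum_S p^{|S|}(1-p)^{n-|S|}|S| = pn$. The reduced-entropy term is where \cref{lem:shearer} enters: writing $w_S := p^{|S|}(1-p)^{n-|S|}$ as weights attached to the complements $\overline{S}$, each index $j$ carries total weight $\sum_{j\notin S} w_S = 1-p$, so the weighted form of the quantum Shearer inequality gives $\sum_S w_S\, S(\rho_{\overline{S}}) \ge (1-p)S(\rho)$. Adding the two pieces yields $S(\E(\rho)) \ge (1-p)S(\rho) + pn$; since $p = \tfrac{4}{3}\gamma \ge \gamma$ and $S(\rho) \le n$, this implies the claimed $(1-\gamma)S(\rho) + \gamma n$, because the gap equals $(p-\gamma)(n - S(\rho)) \ge 0$.

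I expect the main obstacle to be promoting the integer statement of \cref{lem:shearer}, which requires each index to lie in more than an integer number $t$ of sets, to the weighted/fractional version needed here, where the total covering weight on each index is the non-integer value $1-p$. The standard fix is to approximate $p$ by rationals, realize the rational weights by taking suitably many copies of each set so that the per-index multiplicities become integers, apply the integer inequality, and pass to the limit, checking that the rounding does not erode the coefficient $1-p$. A secondary point is bookkeeping the depolarizing constant ($p = 4\gamma/3$ versus the bare $\gamma$ in the statement) and confirming that the inequality actually proved is the safe, weaker one.
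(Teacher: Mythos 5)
Your proposal is correct and follows essentially the same route the paper indicates: the paper does not spell out this proof but explicitly attributes it to the cited references via the stronger quantum Shearer's inequality (\cref{lem:shearer}), which is exactly the argument you execute — replacement-form expansion of the tensor-product channel, concavity of entropy, the $pn$ cardinality term, and Shearer applied to the complements $\overline{S}$ with covering weight $1-p$. Your additional bookkeeping (the $p=4\gamma/3$ versus $\gamma$ parameterization, and the reduction of the weighted Shearer inequality to the integer form by rational approximation and duplication) correctly fills in details the paper leaves implicit.
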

Since the unitary layer does not change the fixed point state, we naturally have the following corollary.
\begin{corollary}[\cite{francaLimitationsOptimizationAlgorithms2021}]\label{apd:thm:contraction}
    Let $\E_{\gamma}^n$ be the tensor product of one-qubit depolarizing channels with fixed point $\sigma=\I/2^n$ and
    $\pfc$ be the channel of one step Trotter, 
    we have the relative entropy contraction
    \begin{equation}
        \calD\qty((\pfc\circ \E_{\gamma}^n)^r(\rho_0) \Vert \sigma) 
        \le (1-\gamma)^{r} \calD(\rho_0 \Vert \sigma).
    \end{equation}
    for any initial state $\rho_0$.
\end{corollary}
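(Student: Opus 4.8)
The plan is to prove the statement by induction on the number of steps $r$, reducing everything to a single one-step contraction estimate. The core observation, already flagged in the sentence preceding the statement, is that composing with the Trotter channel $\pfc$ leaves the relative entropy to the fixed point $\sigma=\I/2^n$ unchanged, so all of the contraction comes from the depolarizing layer $\E_\gamma^n$ via the already-established \cref{thm:entropy_contraction}.

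First I would isolate the one-step claim, namely $\calD\qty((\pfc\circ\E_\gamma^n)(\rho)\Vert\sigma)\le(1-\gamma)\,\calD(\rho\Vert\sigma)$ for an arbitrary $n$-qubit state $\rho$. To establish it I would use two facts about $\pfc$. Since $\pfc(\rho)=\pf_p(t/r)\,\rho\,\pf_p^\dagger(t/r)$ is conjugation by a unitary, it fixes the maximally mixed state, $\pfc(\sigma)=\sigma$, because $U\,(\I/2^n)\,U^\dagger=\I/2^n$ for any unitary $U$. Moreover, relative entropy is invariant under simultaneous unitary conjugation of both arguments, $\calD(V A V^\dagger\Vert V B V^\dagger)=\calD(A\Vert B)$, which follows from $\log(VAV^\dagger)=V\log(A)V^\dagger$ together with the cyclicity of the trace. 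Combining these gives $\calD(\pfc(\tau)\Vert\sigma)=\calD(\pfc(\tau)\Vert\pfc(\sigma))=\calD(\tau\Vert\sigma)$ for any state $\tau$; taking $\tau=\E_\gamma^n(\rho)$ reduces the one-step bound to $\calD(\E_\gamma^n(\rho)\Vert\sigma)\le(1-\gamma)\,\calD(\rho\Vert\sigma)$, which is exactly \cref{thm:entropy_contraction}.

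With the one-step claim in hand, the induction is routine: writing $\rho_k:=(\pfc\circ\E_\gamma^n)^k(\rho_0)$, each application of $\pfc\circ\E_\gamma^n$ multiplies the relative entropy to $\sigma$ by at most $(1-\gamma)$, so iterating $r$ times yields $\calD(\rho_r\Vert\sigma)\le(1-\gamma)^r\,\calD(\rho_0\Vert\sigma)$. I do not expect a genuine obstacle here, since every nontrivial ingredient is already available; the only points to state carefully are the unitary invariance of relative entropy and — because \cref{thm:entropy_contraction} is phrased with the specific fixed point $\I/2^n$ — the fact that $\pfc$ fixes precisely that state. I would therefore make sure the induction hypothesis tracks the correct reference state $\sigma$ through each Trotter layer, so that the contraction lemma applies unchanged at every step.
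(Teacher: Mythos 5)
Your proposal is correct and follows exactly the paper's (very terse) justification: the paper simply remarks that ``the unitary layer does not change the fixed point state'' and then iterates the one-step contraction of \cref{thm:entropy_contraction}, which is precisely your argument — unitary invariance of relative entropy plus $\pfc(\I/2^n)=\I/2^n$ reduces each step to the depolarizing contraction, and induction on $r$ finishes it. Your write-up just makes explicit the details the paper leaves implicit.
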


Another property, the joint convexity of relative entropy is used for later proofs.
\begin{lemma}[Joint convexity of relative entropy]
    \label{lem:joint_convexity}
    The relative entropy is jointly convex in its arguments, i.e., for any $\lambda\in[0,1]$, 
    \begin{equation}
        \calD(\lambda \rho_1 + (1-\lambda)\rho_2 \| \lambda \sigma_1 + (1-\lambda)\sigma_2) \le \lambda \calD(\rho_1\|\sigma_1) + (1-\lambda)\calD(\rho_2\|\sigma_2).
    \end{equation}
\end{lemma}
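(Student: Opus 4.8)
The plan is to derive the joint convexity of the quantum relative entropy from Lieb's concavity theorem through an integral (derivative) representation of $\calD$. Throughout I would restrict to the nondegenerate case $\supp\rho_i \subseteq \supp\sigma_i$ for $i=1,2$; otherwise the right-hand side is $+\infty$ and the stated inequality holds trivially.

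First I would establish the representation
\begin{equation}
    \calD(\rho \Vert \sigma) = \lim_{t \to 0^+} \frac{1}{t}\left( \Tr(\rho) - \Tr(\rho^{1-t}\sigma^{t}) \right),
\end{equation}
which one verifies by differentiating the analytic function $t \mapsto \Tr(\rho^{1-t}\sigma^{t})$ at $t=0$ via functional calculus, using $\partial_t \rho^{1-t} = -\rho^{1-t}\log\rho$ and $\partial_t \sigma^{t} = \sigma^{t}\log\sigma$; evaluating at $t=0$ reproduces $\Tr(\rho\log\rho) - \Tr(\rho\log\sigma) = \calD(\rho\Vert\sigma)$. The point of this representation is that it expresses $\calD$ as a limit of traces of products of fractional powers, to which a concavity theorem applies.

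The key input is then Lieb's concavity theorem: for each fixed $t \in [0,1]$ the map $(\rho,\sigma) \mapsto \Tr(\rho^{1-t}\sigma^{t})$ is jointly concave on pairs of positive semidefinite operators. Granting this, for every fixed $t>0$ the functional $(\rho,\sigma) \mapsto \frac{1}{t}\left(\Tr(\rho) - \Tr(\rho^{1-t}\sigma^{t})\right)$ is jointly convex, since $\Tr(\rho)$ is linear while $-\Tr(\rho^{1-t}\sigma^{t})$ is jointly convex and the scaling by $1/t>0$ preserves convexity. Because joint convexity of a family of functionals is preserved under pointwise limits, letting $t\to 0^+$ transfers joint convexity to $\calD$, which is exactly the claimed inequality for the mixture with weight $\lambda \in [0,1]$.

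The hard part is Lieb's concavity theorem itself: its proof rests either on complex interpolation (Stein--Hirschman) or on the theory of operator convex and operator monotone functions, and is the one genuinely nontrivial ingredient, so I would cite it rather than reprove it. As a shorter alternative that avoids the analytic machinery, I could instead embed the two mixtures into block-diagonal states $\hat\rho = \lambda\,\dyad{0}\otimes\rho_1 + (1-\lambda)\,\dyad{1}\otimes\rho_2$ and $\hat\sigma = \lambda\,\dyad{0}\otimes\sigma_1 + (1-\lambda)\,\dyad{1}\otimes\sigma_2$, exploit the additivity of relative entropy for block-diagonal operators to get $\calD(\hat\rho\Vert\hat\sigma) = \lambda\calD(\rho_1\Vert\sigma_1) + (1-\lambda)\calD(\rho_2\Vert\sigma_2)$, and then apply the data-processing (monotonicity) inequality under the partial trace over the flag register, which sends $\hat\rho \mapsto \lambda\rho_1 + (1-\lambda)\rho_2$ and $\hat\sigma \mapsto \lambda\sigma_1 + (1-\lambda)\sigma_2$. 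The caveat for this route is that monotonicity is itself most often derived from joint convexity, so it is noncircular only if the data-processing inequality is imported as an independent theorem.
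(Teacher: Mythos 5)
The paper offers no proof of this lemma at all: it is invoked as a standard fact (it is Lindblad's joint-convexity theorem, the classical corollary of Lieb's concavity theorem), so there is no in-paper argument to compare yours against. Your proposal is correct and is essentially the textbook derivation: the representation $\calD(\rho\Vert\sigma)=\lim_{t\to 0^+}\frac{1}{t}\bigl(\Tr(\rho)-\Tr(\rho^{1-t}\sigma^{t})\bigr)$ is verified by differentiating $t\mapsto\Tr(\rho^{1-t}\sigma^{t})$ at $t=0$, Lieb's theorem gives joint concavity of $(\rho,\sigma)\mapsto\Tr(\rho^{1-t}\sigma^{t})$ for each fixed $t\in[0,1]$, joint convexity of the difference quotient survives the positive scaling by $1/t$ and the pointwise limit $t\to 0^+$, and the degenerate-support case is correctly dispatched since the right-hand side is then $+\infty$. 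Your caveat about the alternative block-diagonal/data-processing route is also well placed: monotonicity of relative entropy under partial trace is itself most commonly deduced from joint convexity, so that shortcut is noncircular only if the data-processing inequality is imported as an independently proven theorem; citing Lieb's concavity (or Lindblad directly) is the cleaner logical path, and is what a careful version of the paper would do in place of the bare statement.
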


\begin{figure}[!t]
    \centering
    \includegraphics[width=0.45\linewidth]{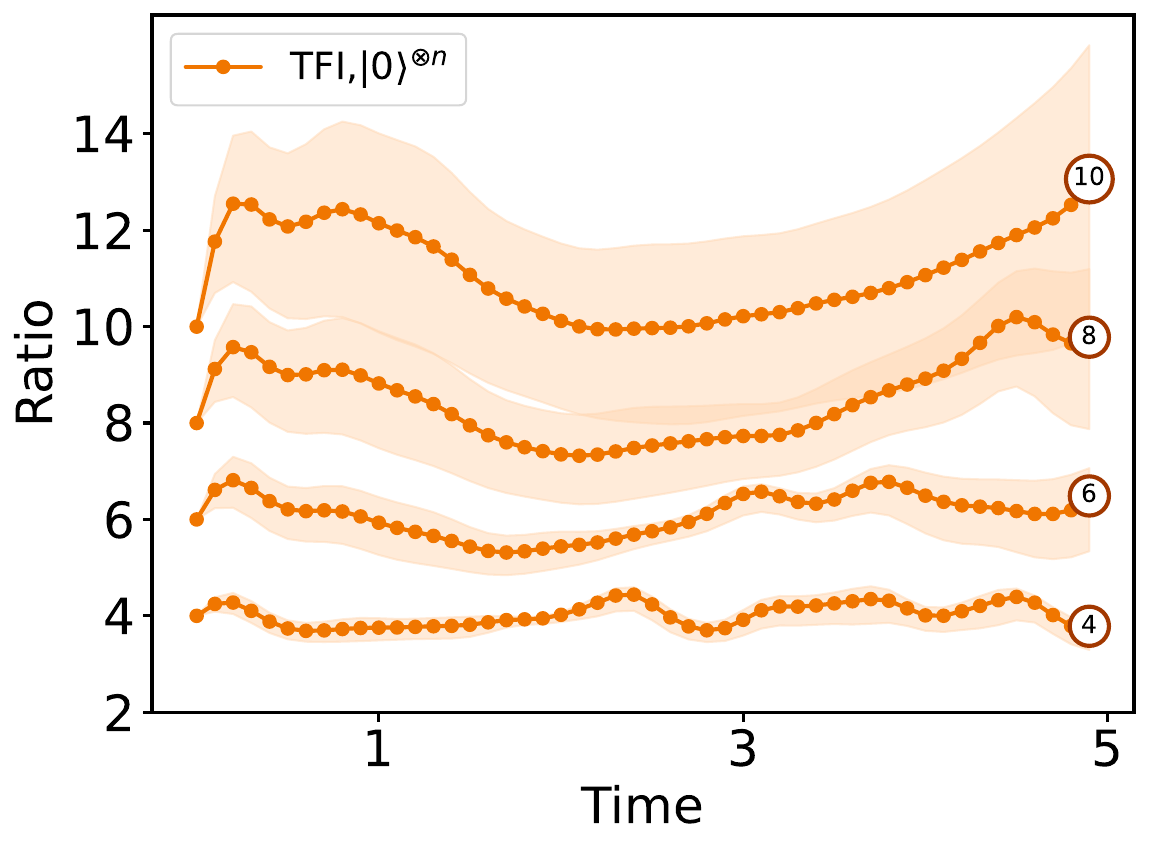}
    \caption{
    The ratio of the entropy distance to the globally mixed state and the entropy distance to the locally mixed state. 
    Plotted for TFI model with the initial state $\ket{0}^{\otimes n}$. 
        The number in the circle denotes the system size $n$.
        The solid lines are plotted for $\gamma = 0.006$, and the ratio range for $\gamma \in [0.002, 0.010]$ is noted by the shaded area.
    }
    \label{fig:glob_loc_ent_ratio}
\end{figure}

We prove the Trotter error decay by first relating the error to the distance between the evolved state and the locally mixed state. 
Specifically, in the first order approximation, the error is proportional to $\norm{\rho - \frac{1}{n}\sum_{F:\abs{F}=1} \rho_{\overline{F}}\otimes \frac{I}{2}}_1$, where $F$ takes the subsystems of the state, and $\rho_{\overline{F}}$ denotes the reduced density matrix on the complement of $F$. 
Since we have the entropy contraction with respect to the globally mixed state, we provide a relation between the distance to the locally mixed state and the distance to the globally mixed state. 
Using \cref{lem:shearer}, we have the following relation.

\begin{lemma}
    \label{lem:local_global_relation}
    Let $\rho$ be an $n$-qubit quantum state. 
    $\calD(\cdot \| \cdot)$ denotes the relative entropy. 
    $F$ takes the subsets of $\{1,2,\cdots,n\}$. $\rho_{\overline{F}}$ denotes the reduced density matrix of $\rho$ on subsystem $\overline{F}$. Then, 
    \begin{equation}
        \frac{1}{n}\calD\qty(\rho \| \frac{I}{2^n}) \le \frac{1}{n} \sum_{F:\abs{F}=1} \calD\qty(\rho\|\rho_{\overline{F}}\otimes \frac{I}{2}) \le \calD\qty(\rho \| \frac{I}{2^n}).
    \end{equation}
    This means that the average distance of $\rho$ to the locally mixed state is bounded within range $\qty[ \frac{1}{n} \calD\qty(\rho \| \frac{I}{2^n}), \calD\qty(\rho \| \frac{I}{2^n})  ]$.
\end{lemma}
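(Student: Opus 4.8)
The plan is to reduce both inequalities to a single exact evaluation of each summand $\calD(\rho\|\rho_{\overline{F}}\otimes I/2)$ for a single-qubit set $F=\{i\}$, and then invoke non-negativity of relative entropy for the upper bound and the quantum Shearer's inequality (\cref{lem:shearer}) for the lower bound. First I would expand each term via $\calD(\rho\|\sigma)=-S(\rho)-\Tr(\rho\log_2\sigma)$. Because $\rho_{\overline{F}}\otimes I/2$ factorizes across the depolarized qubit $i$ and its complement, we have $\log_2(\rho_{\overline{F}}\otimes I/2)=\log_2\rho_{\overline{F}}\otimes I_i - I$ (the $-I$ coming from $\log_2(1/2)=-1$), which collapses the trace and yields the clean identity
\begin{equation}
    \calD\qty(\rho\|\rho_{\overline{F}}\otimes \frac{I}{2}) = \calD\qty(\rho\|\frac{I}{2^n}) - \calD\qty(\rho_{\overline{F}}\|\frac{I}{2^{n-1}}),
\end{equation}
equivalently $1 - S(\rho) + S(\rho_{\overline{F}})$ in the normalization $\calD(\rho\|I/2^n)=n-S(\rho)$ recorded in the lemma statement. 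This identity is the workhorse for both directions.

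For the upper bound, since $\calD(\rho_{\overline{F}}\|I/2^{n-1})\ge 0$ by non-negativity of relative entropy, the identity shows each summand is at most $\calD(\rho\|I/2^n)$, hence so is their average — this direction is immediate. For the lower bound, I would sum the identity over the $n$ single-qubit choices $F=\{i\}$, giving $\sum_{F:\abs{F}=1}\calD(\rho\|\rho_{\overline{F}}\otimes I/2)= n\,\calD(\rho\|I/2^n)-\sum_i\calD(\rho_{\overline{\{i\}}}\|I/2^{n-1})$, and then rewrite the last sum as $n(n-1)-\sum_i S(\rho_{\overline{\{i\}}})$. The covering family $\{\overline{\{i\}}\}_{i=1}^n$ consists of all $(n-1)$-element subsets, and each qubit $j$ lies in exactly $n-1$ of them, so \cref{lem:shearer} with $t=n-1$ gives $\sum_i S(\rho_{\overline{\{i\}}})\ge (n-1)S(\rho)$. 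Substituting this collapses the algebra to $\sum_{F:\abs{F}=1}\calD(\rho\|\rho_{\overline{F}}\otimes I/2)\ge n-S(\rho)=\calD(\rho\|I/2^n)$, and dividing by $n$ finishes the left inequality.

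The main obstacle — really the only nontrivial input — is correctly matching the summed-entropy quantity to the hypothesis of quantum Shearer: one must verify that the relevant family is the complement-singleton family with uniform covering number $n-1$, and line up $t=n-1$ so that the Shearer bound exactly cancels the $n(n-1)$ term and leaves precisely $\calD(\rho\|I/2^n)$. Everything else is routine relative-entropy bookkeeping — non-negativity, the factorization of $\log_2$ over the tensor product, and the normalization $\calD(\cdot\|I/2^n)=n-S(\cdot)$ — but I would keep careful track of the single $+1$ bit contributed by each maximally mixed qubit, since a miscount there would shift the constants and break the exact cancellation.
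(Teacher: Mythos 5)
Your proof is correct and follows essentially the same route as the paper's: an exact evaluation of each summand (your identity $\calD(\rho\|\rho_{\overline{F}}\otimes I/2)=1+S(\rho_{\overline{F}})-S(\rho)$ is precisely the paper's computation, with non-negativity of $\calD(\rho_{\overline{F}}\|I/2^{n-1})$ playing the same role as the paper's bound $S(\rho_{\overline{F}})\le n-1$), followed by the quantum Shearer inequality (\cref{lem:shearer}) applied to the complement-singleton family with covering number $t=n-1$ for the lower bound.
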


\begin{proof}
    By direct computation, the relative entropy between $\rho$ and the maximally mixed state $I/2^n$ is
    \begin{gather}
        \calD\qty(\rho\| \frac{I}{2^n}) = n - S(\rho), 
    \end{gather}
    and the relative entropy between $\rho$ and the locally mixed state $ \rho_{\overline{F}} \otimes \frac{I}{2}$ is
    \begin{equation}
        \calD\qty(\rho \| \rho_{\overline{F}} \otimes \frac{I}{2}) = 1 + S(\rho_{\overline{F}}) - S(\rho).
    \end{equation}
    Since $\abs{\overline{F}} = n - 1$, the entropy $S(\rho_{\overline{F}})\le n - 1$. 
    Thus, we have $\calD\qty(\rho \| \rho_{\overline{F}} \otimes \frac{I}{2}) \le \calD\qty(\rho\| \frac{I}{2^n})$ 
    such that $\frac{1}{n} \sum_{F:\abs{F}=1} \calD\qty(\rho\|\rho_{\overline{F}}\otimes \frac{I}{2}) \le \calD\qty(\rho \| \frac{I}{2^n})$.

    Note that 
    \begin{equation}
        \sum_{F:\abs{F}=1} \calD\qty(\rho\|\rho_{\overline{F}}\otimes \frac{I}{2}) = n +  \sum_{F:\abs{F}=1} S(\rho_{\overline{F}}) - n S(\rho).
    \end{equation}
    In the second term, each site $i$ is included $n-1$ times. By \cref{lem:shearer}, 
    \begin{equation}
        \sum_{F:\abs{F}=1} \calD\qty(\rho\|\rho_{\overline{F}}\otimes \frac{I}{2}) = n +  \sum_{F:\abs{F}=1} S(\rho_{\overline{F}}) - n S(\rho) \ge n + (n-1)S(\rho) - nS(\rho) = n - S(\rho) =  \calD\qty(\rho \| \frac{I}{2^n}).
    \end{equation}
    We have the desired inequality.
\end{proof}

With the entropy contraction \cref{apd:thm:contraction} and the Pinsker's inequality \cref{thm:pinsker_ineq}, 
we can prove the exponential decay of physical error in noisy Trotter circuits.

\begin{proposition}[Exponential decay of physical error]\label{apd:thm:exp_phy_bound}
    Given an $n$-qubit depolarizing channel $\E_{\depo, \gamma}^{n}$ with noise rate $\gamma$ and assume $\gamma=o(n^{-1/2})$,
    the physical error in one step has the exponential decay upper bound
    \begin{equation}
        \eps_{p,\gamma}^{\phy}(d) := 
        \trnorm{ \E_{\depo, \gamma}^{n}(\rho_d) - \rho_d }
        \le \sqrt{2} n^{\frac{3}{2}}\gamma e^{-\frac{1}{2}\gamma d} + o(n\gamma),
    \end{equation}
    where $\rho_d$ is the $n$-qubit state of the $d$-th noisy $p$th-order Trotter step.
    Additionally, if the evolved holds to satisfy
    \begin{equation}
       \sum_{F:\abs{F}=1} \calD\qty(\rho\|  \frac{1}{n}  \rho_{\overline{F}}\otimes \frac{I}{2}) 
       = \Theta\qty(\frac{1}{n}) \calD\qty(\rho \| \frac{I}{2^n}),
    \end{equation}
    then the physical error in one step has the exponential decay upper bound 
    \begin{equation}
        \trnorm{ \E_{\depo, \gamma}^{n}(\rho) - \rho }
        \le \sqrt{2} n\gamma e^{-\frac{1}{2}\gamma d} + o(n\gamma).
    \end{equation}
\end{proposition}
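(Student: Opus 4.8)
The plan is to follow the linearize-then-contract strategy sketched in the Methods section, turning each step into a rigorous inequality built from the lemmas already established. First I would expand the $n$-qubit depolarizing channel as in \cref{method:eq:channel_expand}, isolating the weight-zero contribution $(1-\gamma)^n\rho_d$, the $n$ weight-one contributions $\gamma(1-\gamma)^{n-1}\,\rho_{\overline{F}}\otimes\frac{I}{2}$, and the remainder supported on subsets $F$ with $\abs{F}\ge 2$. Linearizing the binomial weights in $\gamma$ gives
\[
    \E_{\depo,\gamma}^{n}(\rho_d) - \rho_d = n\gamma\,\qty(\sigma_{\mathrm{loc}} - \rho_d) + R,
\]
where $\sigma_{\mathrm{loc}}:=\frac{1}{n}\sum_{F:\abs{F}=1}\rho_{\overline{F}}\otimes\frac{I}{2}$ is a genuine density matrix (the uniform mixture of single-qubit-depolarized states) and $R$ gathers all higher-order corrections, with $\trnorm{R}=\bigO(n^2\gamma^2)$. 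Under the hypothesis $\gamma=o(n^{-1/2})$ this remainder is subleading and is absorbed into the $o(n\gamma)$ term, so the task reduces to bounding $n\gamma\,\trnorm{\sigma_{\mathrm{loc}}-\rho_d}$.

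Next I would convert this trace distance into a relative entropy by Pinsker's inequality (\cref{thm:pinsker_ineq}), $\trnorm{\rho_d-\sigma_{\mathrm{loc}}}\le\sqrt{2\,\calD(\rho_d\|\sigma_{\mathrm{loc}})}$, and then control $\calD(\rho_d\|\sigma_{\mathrm{loc}})$. Writing $\rho_d=\frac{1}{n}\sum_{F:\abs{F}=1}\rho_d$ and invoking joint convexity of relative entropy (\cref{lem:joint_convexity}) with uniform weights gives
\[
    \calD(\rho_d\|\sigma_{\mathrm{loc}}) \le \frac{1}{n}\sum_{F:\abs{F}=1}\calD\qty(\rho_d\|\rho_{\overline{F}}\otimes\frac{I}{2}) \le \calD\qty(\rho_d\|\frac{I}{2^n}),
\]
where the final inequality is the upper half of \cref{lem:local_global_relation}. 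The entropy contraction of \cref{apd:thm:contraction} then bounds $\calD(\rho_d\|\frac{I}{2^n})\le(1-\gamma)^d\,\calD(\rho_0\|\frac{I}{2^n})\le n\,e^{-\gamma d}$, using $\calD(\rho_0\|\frac{I}{2^n})\le n$ and $(1-\gamma)^d\le e^{-\gamma d}$. Chaining everything yields $\trnorm{\sigma_{\mathrm{loc}}-\rho_d}\le\sqrt{2n}\,e^{-\gamma d/2}$, and the prefactor $n\gamma$ produces the claimed $\sqrt{2}\,n^{3/2}\gamma\,e^{-\gamma d/2}+o(n\gamma)$.

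For the improved bound I would discard the crude upper half of \cref{lem:local_global_relation} and instead use the structural hypothesis that the average local relative entropy sits at the lower end of its range, $\frac{1}{n}\sum_{F:\abs{F}=1}\calD(\rho_d\|\rho_{\overline{F}}\otimes\frac{I}{2})=\Theta(\frac{1}{n})\,\calD(\rho_d\|\frac{I}{2^n})$. Inserting this into the chain replaces $\calD(\rho_d\|\sigma_{\mathrm{loc}})\le n\,e^{-\gamma d}$ by $\calD(\rho_d\|\sigma_{\mathrm{loc}})=\bigO(e^{-\gamma d})$, which removes one power of $n$ inside the square root and hence a factor $\sqrt{n}$ overall, giving $\sqrt{2}\,n\gamma\,e^{-\gamma d/2}+o(n\gamma)$. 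The step I expect to be the main obstacle is the careful bookkeeping of the linearization: I must verify that replacing the exact binomial weights by their first-order parts and dropping the $\abs{F}\ge 2$ terms costs only $\bigO(n^2\gamma^2)$ in trace norm, and that this is genuinely subleading precisely when $\gamma=o(n^{-1/2})$. Once that is pinned down, the remaining steps (Pinsker, joint convexity, the local-global relation, and entropy contraction) concatenate cleanly from the stated lemmas.
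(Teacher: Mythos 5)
Your proposal is correct and follows essentially the same route as the paper's proof: first-order expansion of the depolarizing channel with an $\bigO(n^2\gamma^2)$ remainder, then Pinsker's inequality, joint convexity, the local-to-global relative-entropy relation, and entropy contraction, with the conditional hypothesis removing the extra $\sqrt{n}$ exactly as in the paper. The one point you flag as an obstacle — that the remainder is subleading only relative to the main term $n^{3/2}\gamma$ (rather than to $n\gamma$) under $\gamma=o(n^{-1/2})$ — is handled with the same level of informality in the paper itself, so your treatment matches it.
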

\begin{proof}
    The $d$-th step state after one layer of noise channel is 
    \begin{equation}
        \mathcal{E}_{\depo, \gamma}^{n}(\rho_d) = \sum_{F} \gamma^{|F|}(1-\gamma)^{n-|F|} \rho_{\overline{F}} \otimes \frac{I}{2^{|F|}},
    \end{equation} 
    where $F$ takes all subsets of qubits representing the depolarized subsystems, 
    $\overline{F}$ is the complement of $F$ and $\rho_{\overline{F}}$ is the reduced density matrix of $\rho$ on subsystem $\overline{F}$.
    The physical error defined as is the trace distance between the depolarized state and the original state,
    \begin{align}
        \trnorm{ \mathcal{E}_{\depo, \gamma}^{n}(\rho) - \rho } = 
        \left\| \sum_{F} \gamma^{|F|}(1-\gamma)^{n-|F|} \rho_{\overline{F}} \otimes \frac{I}{2^{|F|}} - \rho \right\|_{1}
    \end{align}
    where we omit the subscript $d$ for simplicity.
    Since $\gamma$ is small, we approximate to the first-order, i.e. $F$ takes all cardinality-one subsets and the empty set. Then,
    \begin{align}
        \trnorm{ \E_{\depo, \gamma}^{n}(\rho) - \rho } 
        = & \left\| \sum_{k=0}^n (-\gamma)^k \binom{n}{k} \rho + \gamma(1-\gamma)^{n-1} \sum_{F:\abs{F}=1} \rho_{\overline{F}}\otimes \frac{I}{2} \right .+ \left.\sum_{k=2}^{n} \gamma^k (1-\gamma)^{n-k} \sum_{F:\abs{F}=k}\rho_{\overline{F}}\otimes \frac{I}{2^k}  - \rho \right\|_1 
        \\ 
        = & \norm{ -n\gamma \rho + \gamma \sum_{F:\abs{F}=1} \rho_{\overline{F}}\otimes \frac{I}{2}}_1 + o(n\gamma) \tag{first-order approximation} \\
        \leq & n\gamma \sqrt{ 2\calD \qty( \rho \Vert  \frac{1}{n}\sum_{F:|F|=1} \rho_{\overline{F}} \otimes \frac{I}{2})  } + o(n\gamma) 
        \tag{Pinsker's inequality, \cref{thm:pinsker_ineq}} \\
        \le & n\gamma \sqrt{ 2\cdot \frac{1}{n}\sum_{F:|F|=1}\calD \qty( \rho \Vert   \rho_{\overline{F}} \otimes \frac{I}{2})  } + o(n\gamma)  
        \tag{joint convexity, \cref{lem:joint_convexity}} \\ 
        \leq & n\gamma \sqrt{ 2 \calD \qty(\rho \Vert \frac{I}{2^n} ) } + o(n\gamma) \tag{\cref{lem:local_global_relation}} \\
        \le & n\gamma \sqrt{ 2 \calD \qty(\rho_0 \Vert \frac{I}{2^n} ) e^{-\gamma d}} + o(n\gamma) \tag{\cref{thm:entropy_contraction}} \\ 
        \leq &  \sqrt{2}  n^{\frac{3}{2}}\gamma e^{-\frac{1}{2}\gamma d} + o(n\gamma).
    \end{align}
    The first line is direct expansion. 
    The second line is the first-order approximation. 
    The second line to the third is by the Pinsker's inequality \cref{thm:pinsker_ineq} and the fourth line is by the joint convexity in \cref{lem:joint_convexity}. 
    The next lines are by the relation of the distance to the locally mixed state and the distance to the globally mixed state in \cref{lem:local_global_relation}, 
    and by the entropy contraction in \cref{thm:entropy_contraction}.
    If the system holds to satisfy 
    $\sum_{F:\abs{F}=1} \calD\qty(\rho\|\frac{1}{n} \rho_{\overline{F}}\otimes \frac{I}{2}) = \Theta\qty(\frac{1}{n}) \calD\qty(\rho \| \frac{I}{2^n})$ (see numerical evidence in \cref{fig:glob_loc_ent_ratio}), 
    then the last line can be improved with a $\frac{1}{n}$ factor in the square root, leading to the final bound $\sqrt{2} n\gamma e^{-\frac{1}{2}\gamma d} + o(n\gamma)$.
\end{proof}

The one-qubit depolarizing noise model should be taken as a toy model to exemplify our techniques. However, results on the entropy contraction are available for other relevant noise models

\subsection{Upper bounds of algorithmic error}\label{apd:sec:alg}
Before analyzing the Trotter error affected by noise, we first recall the general upper bound of Trotter error in the noiseless case.

\subsubsection{The worst-case (state-independent) analysis}
The worst-case (state-independent) upper bound on Trotter error is as follows.
\begin{lemma}[Trotter error with commutator scaling \cite{childsTheoryTrotterError2021}]\label{apd:thm:pf}
    Let $H=\sum_{l=1}^L H_l$ be a Hermitian operator consisting with $L$ summands, and let $t \ge 0$. 
    Let  $\pf_p(t)$ be a $p$th-order product formula. 
    Define $\acmm_{p} :=\sum\limits_{l_1,\dots,l_{p+1}=1}^{L} \opnorm{[H_{l_1},[H_{l_2},\dots,[H_{l_p},H_{l_{p+1}}]]]}$. 
    Then, the Trotter error $\opnorm{\pf_p(t)-e^{-\ii Ht}}$ can be asymptotically bounded as
    $\bigO(\acmm_{p} t^{p+1})$.
    To achieve precision $\epsilon$, we need Trotter steps 
    $r=\bigO\qty(\acmm_{p}^{1/p} t^{1+1/p}/\epsilon^{1/p})$
    for precision $\opnorm{\pf^r_p(t/r)-e^{-\ii Ht}}\le \epsilon$.
\end{lemma}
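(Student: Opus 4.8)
The plan is to establish both parts—the single-step estimate $\opnorm{\pf_p(t)-e^{-\ii H t}}=\bigO(\acmm_p t^{p+1})$ and the resulting Trotter-number scaling—via the variation-of-parameters (fundamental theorem of calculus) method, which is the cleanest route to genuine commutator scaling rather than a crude product bound. First I would write the $p$th-order formula as an ordered product $\pf_p(t)=\prod_{\upsilon} e^{-\ii t a_\upsilon H_{\pi_\upsilon}}$ and differentiate it. Factoring $\pf_p(t)$ out on the right yields
\[
    \frac{d}{dt}\pf_p(t) = -\ii\,\pf_p(t)\,\tilde H(t),
    \qquad
    \tilde H(t) := \sum_{\upsilon} a_\upsilon\, \pf_{>\upsilon}^{\dagger}(t)\, H_{\pi_\upsilon}\, \pf_{>\upsilon}(t),
\]
where $\pf_{>\upsilon}(t)$ is the suffix partial product and $\tilde H(t)$ is an effective time-dependent Hamiltonian built by conjugating each summand. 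The key structural feature is that conjugation expands as $e^{A}Be^{-A}=\sum_{k\ge0}\frac{1}{k!}\mathrm{ad}_A^{k}(B)$ with $\mathrm{ad}_A(B)=[A,B]$, so every deviation of $\tilde H(t)$ from $H$ is automatically a sum of nested commutators of the $H_l$.

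Next I would convert the error into an integral. Since $\pf_p(0)=e^{-\ii H\cdot 0}=I$, differentiating the interpolating product gives
\[
    \pf_p(t)-e^{-\ii H t}
    = \int_0^t \frac{d}{ds}\Big[\pf_p(s)\,e^{-\ii H(t-s)}\Big]\,ds
    = -\ii\int_0^t \pf_p(s)\,\big(\tilde H(s)-H\big)\,e^{-\ii H(t-s)}\,ds.
\]
Because $\pf_p(s)$ and $e^{-\ii H(t-s)}$ are unitary, submultiplicativity of the operator norm immediately yields $\opnorm{\pf_p(t)-e^{-\ii H t}}\le \int_0^t \opnorm{\tilde H(s)-H}\,ds$, reducing everything to bounding $\opnorm{\tilde H(s)-H}$.

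The heart of the argument—and the step I expect to be the main obstacle—is showing that the order-$p$ conditions force $\opnorm{\tilde H(s)-H}=\bigO(\acmm_p\, s^{p})$ rather than merely $\bigO(s)$. A $p$th-order formula agrees with the Taylor expansion of $e^{-\ii H t}$ through order $t^{p}$; translated through the time-ordered representation $\pf_p(t)=\mathcal{T}\exp(-\ii\int_0^t \tilde H(s)\,ds)$, this means the Taylor coefficients of $\tilde H(s)-H$ vanish through order $s^{p-1}$. Combined with the $\mathrm{ad}$-expansion above, which guarantees that the surviving order-$s^{p}$ term is a linear combination of $(p+1)$-fold nested commutators $[H_{l_1},[H_{l_2},\dots,[H_{l_p},H_{l_{p+1}}]]]$, each such term is bounded by the corresponding summand of $\acmm_p$. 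I would carry this out by Taylor-expanding $\tilde H(s)$ with integral remainder and identifying the remainder with the nested-commutator sum; this order-condition bookkeeping is the delicate part, exactly the mechanism underlying the Childs et al. analysis. Inserting $\opnorm{\tilde H(s)-H}=\bigO(\acmm_p s^{p})$ into the integral then gives $\opnorm{\pf_p(t)-e^{-\ii H t}}=\bigO(\acmm_p t^{p+1})$.

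Finally, for the Trotter-number claim I would use subadditivity of the error under composition of unitaries: writing $e^{-\ii H t}=\big(e^{-\ii H(t/r)}\big)^{r}$ and telescoping the difference of the two length-$r$ products,
\[
    \opnorm{\pf_p^{\,r}(t/r)-e^{-\ii H t}}
    \le r\,\opnorm{\pf_p(t/r)-e^{-\ii H(t/r)}}
    = \bigO\!\big(\acmm_p\, t^{p+1}/r^{p}\big).
\]
Setting the right-hand side at most $\epsilon$ and solving for $r$ yields $r=\bigO\big(\acmm_p^{1/p}\,t^{1+1/p}/\epsilon^{1/p}\big)$, as claimed.
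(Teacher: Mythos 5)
First, a point of context: the paper itself does not prove this lemma at all --- it is imported verbatim from Childs et al.\ \cite{childsTheoryTrotterError2021} as a black box, so there is no in-paper proof to match against, and your proposal must be judged as a reconstruction of the cited work's argument. Your skeleton is indeed the right one, and most of the individual steps check out: the derivative formula $\frac{d}{dt}\pf_p(t)=-\ii\,\pf_p(t)\tilde{H}(t)$ with the conjugated effective Hamiltonian, the integral (variation-of-parameters) representation of the error, the observation that $p$th-order accuracy forces the Taylor coefficients of $\tilde{H}(s)-H$ to vanish through order $s^{p-1}$ (if some coefficient $\Delta_k\neq 0$ with $k<p$, the error would have a nonvanishing term at order $t^{k+1}\le t^{p}$, a contradiction), and the final telescoping bound $\opnorm{\pf_p^{\,r}(t/r)-e^{-\ii Ht}}\le r\,\opnorm{\pf_p(t/r)-e^{-\ii H(t/r)}}$ with the solve for $r$.

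The genuine gap is in the crux step, where you assert $\opnorm{\tilde{H}(s)-H}=\bigO(\acmm_p s^{p})$ because ``the ad-expansion guarantees that the surviving order-$s^{p}$ term is a linear combination of nested commutators.'' That statement is true of the single coefficient $\tilde{H}^{(p)}(0)$, but Taylor's theorem with integral remainder requires a \emph{uniform} bound on $\tilde{H}^{(p)}(\tau)$ for $\tau\in[0,s]$, and for $\tau\neq 0$ this derivative is not a plain nested-commutator sum: differentiating $\pf_{>\upsilon}^{\dagger}(s)\,H_{\pi_\upsilon}\,\pf_{>\upsilon}(s)$ $p$ times produces terms in which unitary conjugations by partial products are interleaved between the $\mathrm{ad}$'s, e.g.\ $[H_{l_2},\,V\,[H_{l_1},B]\,V^{\dagger}]$ with $V$ generated by a \emph{different} summand. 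Since $\mathrm{ad}_{H_{l_2}}$ does not commute with conjugation by such a $V$, the conjugations cannot simply be pulled outside (where they would be harmless by unitary invariance); bounding these terms naively by submultiplicativity degrades the estimate to products of norms $\prod_i\opnorm{H_{l_i}}$, which destroys exactly the commutator scaling the lemma asserts. Closing this is the bulk of the work in Childs et al.: one iteratively expands each inner conjugation as an $e^{\mathrm{ad}}$ series (using that $\mathrm{ad}_A$ commutes with $e^{\mathrm{ad}_{uA}}$ for the \emph{same} generator) and resums the tails, which is precisely why their rigorous bounds carry factors of the form $e^{\bigO(t\sum_l\opnorm{H_l})}$ that are benign only in the small-$t$/asymptotic regime in which the lemma is stated. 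Your sketch flags this bookkeeping as ``the delicate part'' but supplies no mechanism for it, so the central inequality is asserted rather than proved.
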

In \cref{fig:worst_algorithmic_error}, we calculate the worst-case one-step Trotter error bound and the empirical error for system sizes $n$ from 4 to 12.
And we fit the linear functions of the theoretical error bound and the empirical error in terms of $n$ such that they can be extrapolated to larger system sizes.
It is worthwhile noting that for the average-case analysis (assuming the initial state is Haar random), 
the Trotter (algorithmic) error upper bound is the same except the spectral norm $\opnorm{\cdot}$ replaced by the normalized Frobenius norm $\norm{\cdot}_{2}$ \cite{zhaoHamiltonianSimulationRandom2021,zhaoEntanglementAcceleratesQuantum2025}.

\begin{figure}[!t]
    \centering
    \includegraphics[width=0.45\linewidth]{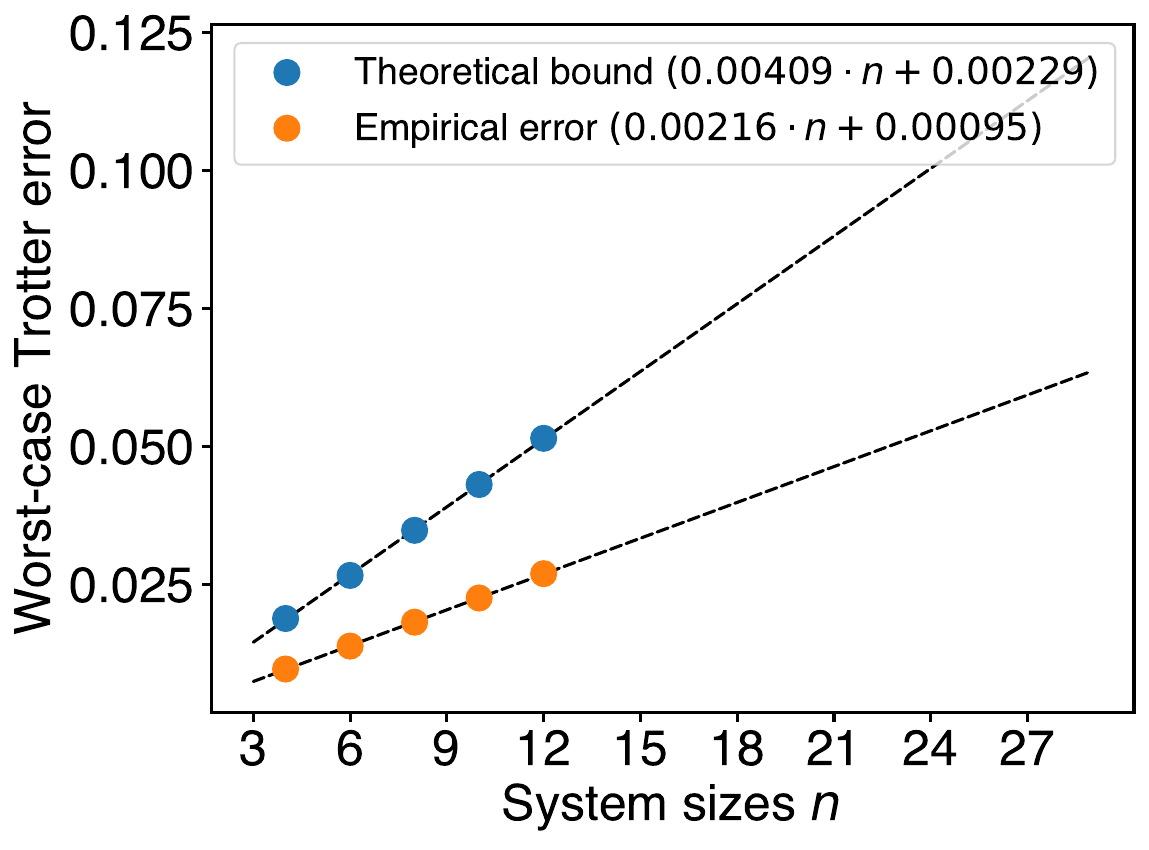}
    \caption{
    The worst-case (one step) Trotter error commutator upper bound versus system size $n$.
    The numerical result is for the TFI Hamiltonian of different sizes $n$ with the XZ grouping by the second-order product formula (PF2) with $t=10$ and $r=100$.
    The worst-case theoretical bound is evaluated by exact calculation of the operator norm $\alpha_p$ of the nested commutator,
    while the empirical error is the operator norm of the difference between the ideal evolution and the Trotter evolution operator.
    For $n\le 12$, the data is the exact calculation. 
    While for large system sizes $n$,
    the Trotter error bound is extrapolated by the linear fitting.
    }
    \label{fig:worst_algorithmic_error}
\end{figure}

\subsubsection{The state-dependent analysis}
By taking the state information into account, we derive the upper bound of the Trotter (algorithmic) error with noisy circuits.
\begin{lemma}[Trace norm of algorithmic error]\label{apd:thm:trace_norm_alg_error_bound}
    Given a state $\rho$, one-step ideal evolution $U=e^{-\ii H \delta t}$, 
    and one-step $p$-th order Trotter evolution $\pf_p(\delta t)$, 
    the one-step algorithmic (Trotter) error measured by the trace norm is 
    \begin{align}
        \trnorm{U\rho U^{\dag}-\pf\rho \pf^{\dag}} = \trnorm{ [\rho, M] },
    \end{align}
    where $\pf_p=U(I+M_p)$ and $M_p$ is the multiplicative Trotter error operator.
\end{lemma}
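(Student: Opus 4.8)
The plan is to reduce the claim to a statement about the single unitary $V:=U^{\dagger}\pf_p$ and then expand to leading order in the small error operator. First I would observe that since both $U=e^{-\ii H\delta t}$ and $\pf_p(\delta t)$ are unitary, the product $V=U^{\dagger}\pf_p$ is unitary, and the multiplicative error operator is precisely $M_p:=V-I$, so that $\pf_p=U(I+M_p)$ as stated. The crucial simplification comes from the unitary invariance of the trace norm: since $\trnorm{U^{\dagger}XU}=\trnorm{X}$ for any $X$, taking $X=U\rho U^{\dagger}-\pf_p\rho\pf_p^{\dagger}$ gives
\begin{equation*}
  \trnorm{U\rho U^{\dagger}-\pf_p\rho\pf_p^{\dagger}}
  =\trnorm{\rho-U^{\dagger}\pf_p\,\rho\,\pf_p^{\dagger}U}
  =\trnorm{\rho-V\rho V^{\dagger}},
\end{equation*}
which removes $U$ entirely and isolates the effect of $V=I+M_p$.

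Next I would expand $V\rho V^{\dagger}=(I+M_p)\rho(I+M_p)^{\dagger}=\rho+M_p\rho+\rho M_p^{\dagger}+M_p\rho M_p^{\dagger}$, so that
\begin{equation*}
  \rho-V\rho V^{\dagger}=-\bigl(M_p\rho+\rho M_p^{\dagger}+M_p\rho M_p^{\dagger}\bigr).
\end{equation*}
To turn this into the commutator $[\rho,M_p]=\rho M_p-M_p\rho$ I would use the unitarity of $V$: expanding $VV^{\dagger}=I$ yields $M_p+M_p^{\dagger}+M_pM_p^{\dagger}=0$, which shows $M_p$ is anti-Hermitian to leading order, $M_p^{\dagger}=-M_p+\bigO(\opnorm{M_p}^2)$. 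Since $M_p=\bigO(\delta t^{p+1})$ is small, substituting $M_p^{\dagger}=-M_p$ and discarding the quadratic term $M_p\rho M_p^{\dagger}$ gives $-(M_p\rho-\rho M_p)=\rho M_p-M_p\rho=[\rho,M_p]$, with corrections of order $\opnorm{M_p}^2$.

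I expect the only real subtlety to be the bookkeeping of these higher-order terms: the stated equality is an identity at leading order in $\delta t$, because both the replacement $M_p^{\dagger}\to-M_p$ and the dropped term $M_p\rho M_p^{\dagger}$ contribute at $\bigO(\opnorm{M_p}^2)=\bigO(\delta t^{2(p+1)})$. This is harmless for the downstream argument, since the subsequent analysis controls $\trnorm{[\rho,M_p]}$ itself and the leading Trotter contribution scales as $\delta t^{p+1}$, which dominates the discarded $\delta t^{2(p+1)}$ remainder. I would therefore present the lemma as capturing the leading-order error, and then record the local decomposition $M_p=\delta t^{p+1}\sum_j E_j^{(p)}$ with each $E_j^{(p)}$ supported on $\bigO(1)$ qubits, which is exactly the representation used in the next step to pass from $[\rho,M_p]$ to a sum of local commutators.
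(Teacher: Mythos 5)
Your reduction to $V:=U^{\dagger}\pf_p$ and the first use of unitary invariance are exactly right, but from that point on your argument proves strictly less than the lemma claims. The statement is an \emph{exact} equality of trace norms, whereas your route---expanding $V\rho V^{\dagger}$, substituting $M_p^{\dagger}=-M_p+\bigO(\opnorm{M_p}^2)$, and discarding $M_p\rho M_p^{\dagger}$---only establishes it up to $\bigO(\opnorm{M_p}^2)=\bigO(\delta t^{2(p+1)})$ corrections, and you explicitly frame the lemma as a leading-order identity. That is a genuine gap: the stated equality is not proved by your argument, only an approximate version of it.

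The missing idea is to use the unitarity of $V=I+M_p$ at the level of the norm rather than by expanding it. Since $V$ is unitary and the trace norm is invariant under right-multiplication by a unitary,
\begin{equation*}
  \trnorm{\rho-V\rho V^{\dagger}}
  =\trnorm{\bigl(\rho-V\rho V^{\dagger}\bigr)V}
  =\trnorm{\rho V-V\rho}
  =\trnorm{[\rho,I+M_p]}
  =\trnorm{[\rho,M_p]},
\end{equation*}
with no expansion and no discarded terms; this is precisely the paper's proof (its second ``unitarity'' step is this right-multiplication by $I+M_p$, using $(I+M_p^{\dagger})(I+M_p)=I$). Your remark that the $\bigO(\delta t^{2(p+1)})$ remainder would be harmless downstream is reasonable for the application, but it is unnecessary bookkeeping: the exact identity comes for free once you exploit that $I+M_p$ is itself unitary, and the subsequent steps of the paper (the local decomposition $M_p=\delta t^{p+1}\sum_j E_j^{(p)}$ and H\"older's inequality) start from the exact commutator form.
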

\begin{proof}[Proof of \cref{apd:thm:trace_norm_alg_error_bound}]\label{apd:prf:trace_norm_alg_error_bound}
    With $\pf_p = U(I+M_p)$,
    we can rewrite the one-step algorithmic error as 
    \begin{align}
        \trnorm{U\rho U^{\dag}-\pf_p\rho \pf^{\dag}_p}
        &= \trnorm{U\rho U^{\dag}-U(I+M_p)\rho (I+M_p^{\dagger}))U^{\dag} } \\
        &= \trnorm{\rho -(I+M_p)\rho (I+M_p^{\dagger})} \tag{unitarity}\\
        &= \trnorm{\rho(I+M_p) -(I+M_p)\rho} \tag{unitarity} \\
        &= \trnorm{\rho M_p -M_p\rho }
         \equiv \trnorm{[\rho, M_p]}
    \end{align}  
\end{proof}
Then, we have the upper bound of the trace norm of algorithmic error with exponential decay by considering the state information along evolution.
\begin{proposition}[Exponential decay of algorithmic error]\label{apd:thm:exp_alg_bound}
    Given a $p$th-order Trotter circuit $\pf_p$ with the noise rate $\gamma$,
    the $d$-th one-step algorithmic error $\epsilon_{p,\gamma}^{\alg}$ has the upper bound
    \begin{equation}
        \epsilon_{p,\gamma}^{\alg}(d) := 
        \trnorm{ U \rho_d U^\dagger - \pf_p \rho_d \pf_p^\dagger }
        \leq \Theta(n^{1/2}) 
        B_{p} \frac{t^{p+1}}{r^{p+1}} e^{-\frac{1}{2} \gamma d}
    \end{equation}
    where $\rho_d$ is the state of the $d$-th noisy Trotter step and 
    $B_{p}:=\sum_j \opnorm{E_j^{(p)}}$ is a factor related to the Hamiltonian and the order of the Trotter formula.
\end{proposition}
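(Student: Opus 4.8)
The plan is to reduce the algorithmic error to a sum of \emph{local} commutator norms, replace the true state by its locally maximally mixed counterpart in each term (where the commutator vanishes identically), and then control the replacement cost using the very same entropy-contraction bound that drives \cref{apd:thm:exp_phy_bound}. First I would invoke the preceding \cref{apd:thm:trace_norm_alg_error_bound} to rewrite the error as a commutator, $\epsilon_{p,\gamma}^{\alg}(d) = \trnorm{[\rho_d, M_p]}$ with $\pf_p = U(I+M_p)$. Then I would expand the multiplicative error operator in its leading-order local form $M_p = \delta t^{p+1}\sum_j E_j^{(p)}$, where $\delta t = t/r$ and each $E_j^{(p)}$ is supported on a constant number $w$ of qubits, as guaranteed by the nested-commutator structure of Trotter error. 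The triangle inequality gives $\epsilon_{p,\gamma}^{\alg}(d)\le \delta t^{p+1}\sum_j \trnorm{[\rho_d, E_j^{(p)}]}$.

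For each term I would introduce the cross state $\rho'_j := \frac{I}{2^w}\otimes\rho_{d,\overline{F_j}}$, obtained by maximally mixing $\rho_d$ on the support $F_j = \supp(E_j^{(p)})$. Writing $[\rho_d, E_j^{(p)}] = (\rho_d - \rho'_j)E_j^{(p)} + [\rho'_j, E_j^{(p)}] + E_j^{(p)}(\rho'_j - \rho_d)$ and applying Hölder's inequality $\trnorm{AB}\le\trnorm{A}\opnorm{B}$ yields $\trnorm{[\rho_d, E_j^{(p)}]}\le 2\trnorm{\rho_d - \rho'_j}\opnorm{E_j^{(p)}} + \trnorm{[\rho'_j, E_j^{(p)}]}$. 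The crucial simplification is that $[\rho'_j, E_j^{(p)}] = 0$: since $\rho'_j$ carries the identity precisely on $F_j$ and $E_j^{(p)}$ acts trivially outside $F_j$, the two operators commute. Summing and factoring out the worst-case coefficient $B_p = \sum_j \opnorm{E_j^{(p)}}$ leaves $\sum_j \trnorm{[\rho_d, E_j^{(p)}]} \le 2 B_p \sup_j \trnorm{\rho_d - \rho'_j}$.

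The final step bounds the distance to the locally mixed state uniformly in $j$. Via Pinsker's inequality (\cref{thm:pinsker_ineq}) together with the elementary identity $\calD(\rho_d \| \tfrac{I}{2^w}\otimes\rho_{d,\overline{F_j}}) = w - S(\rho_d) + S(\rho_{d,\overline{F_j}}) \le n - S(\rho_d) = \calD(\rho_d \| \tfrac{I}{2^n})$, where the inequality uses $S(\rho_{d,\overline{F_j}})\le n-w$, I would inherit the exact entropy-contraction chain from \cref{apd:thm:contraction}: $\calD(\rho_d \| \tfrac{I}{2^n})\le \calD(\rho_0 \| \tfrac{I}{2^n})\,e^{-\gamma d}\le n\,e^{-\gamma d}$. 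This gives $\trnorm{\rho_d - \rho'_j}\le \sqrt{2n}\,e^{-\frac{1}{2}\gamma d} = \Theta(n^{1/2})e^{-\frac{1}{2}\gamma d}$ for every $j$, and combining with $\delta t^{p+1} = t^{p+1}/r^{p+1}$ reproduces the claimed bound $\Theta(n^{1/2})\,B_p\,\tfrac{t^{p+1}}{r^{p+1}}\,e^{-\frac{1}{2}\gamma d}$.

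The hard part will be the replacement step in the middle paragraph: one must verify both that the local commutator genuinely vanishes (a locality argument that requires $\rho'_j$ to be maximally mixed on \emph{exactly} $\supp(E_j^{(p)})$, not more and not less) and that a single distance-to-global-mixed-state estimate simultaneously controls $\trnorm{\rho_d - \rho'_j}$ across all distinct supports $F_j$. The latter hinges on the general-$w$ entropy identity above, which cleanly extends the $\abs{F}=1$ relation of \cref{lem:local_global_relation}; once it is in place the bound is uniform and the sum over $j$ collapses onto the worst-case coefficient $B_p$. Finally, invoking the same stronger-locality assumption as in \cref{apd:thm:exp_phy_bound} replaces $\calD(\rho_d\|\tfrac{I}{2^n})$ by a quantity that is $\Theta(1/n)$ smaller, removing the $n^{1/2}$ overhead exactly as in the physical-error case.
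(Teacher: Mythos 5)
Your proposal is correct and follows essentially the same route as the paper's proof: commutator form via \cref{apd:thm:trace_norm_alg_error_bound}, expansion $M_p=\delta t^{p+1}\sum_j E_j^{(p)}$, cross-term insertion of the locally mixed state, H\"older's inequality plus the vanishing commutator $[\rho'_j,E_j^{(p)}]=0$, and Pinsker with entropy contraction yielding the $\Theta(n^{1/2})e^{-\frac{1}{2}\gamma d}$ factor. If anything, your explicit general-$w$ identity $\calD\qty(\rho_d\,\Vert\,\tfrac{I}{2^w}\otimes\rho_{d,\overline{F_j}}) = w - S(\rho_d)+S(\rho_{d,\overline{F_j}}) \le \calD\qty(\rho_d\,\Vert\,\tfrac{I}{2^n})$ spells out a step the paper leaves implicit by citing \cref{apd:thm:exp_phy_bound} (whose proof handles only single-qubit subsets $\abs{F}=1$), so your write-up is, in this respect, slightly more complete than the paper's.
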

\begin{proof}
    The multiplicative Trotter error operator $M_p$ can be written as the sum of the leading-order local error terms $M_p:=\delta t^{p+1}\sum_j E_j^{(p)}$ 
    where $E_j^{(p)}$ acts on $w$ qubits and $\delta t=t/r$.
    Let $\rho':=\rho_{\overline{F}}\otimes I/2^w$
    and omit the subscript $d$ as well as $p$ in the following proof for simplicity.
    \begin{align}
        \trnorm{U\rho U^{\dag}-\pf\rho \pf^{\dag}}
        &= \trnorm{[\rho, M]} 
        = \delta t^{p+1} \trnorm{\qty[\rho, \sum_j E_j]} \tag{definition} \\
        &\le \delta t^{p+1} \sum_j \trnorm{ [\rho, E_j] } 
        \equiv \delta t^{p+1} \sum_j \trnorm{\rho E_j - \rho' E_j + \rho' E_j - E_j \rho' + E_j\rho' -E_j \rho} \\
        &\le \delta t^{p+1} \sum_j \trnorm{ \rho E_j - \rho' E_j } +\trnorm{\rho' E_j -E_j \rho'}  + \trnorm{ E_j\rho' - E_j\rho }
        \tag{triangle inequality} \\
        &\le \delta t^{p+1} \sum_j 2\trnorm{ \rho  - \rho' } \opnorm{E_j} + \trnorm{[\rho',E_j]}   
        \tag{Hölder's inequality}  \\
        &= \delta t^{p+1} 2\trnorm{I/2^w\otimes\rho_{\overline{F}}-\rho}  \sum_j \opnorm{E_j} 
        \tag{locality}  \\ 
        &\le \delta t^{p+1}\Theta(n^{1/2})  e^{-\frac{1}{2}\gamma d} \sum_j \opnorm{E_j} 
        \tag{\cref{apd:thm:exp_phy_bound}}  \\ 
        &\leq \Theta(n^{1/2}) 
        B_{p} \frac{t^{p+1}}{r^{p+1}} e^{-\frac{1}{2}\gamma d}
        \tag{\cref{apd:thm:pf}}
    \end{align}  
    The last third line uses $\trnorm{[\rho', E_j]}=0$ because the error operator $E_j$ acts on identity.
    The exponential decay factor in the last second line is the same as the one of the physical error proved in \cref{apd:thm:exp_phy_bound}.
    In the last line, the factor $B_p=\sum_j \opnorm{E_j^{(p)}}$ is the upper bound of the worst-case Trotter error
    $\opnorm{\sum_j E_j^{(p)}}=\bigO(\acmm_p)$ where $\acmm_p$ is the $p$th-order nested commutator norm, only depends on the Hamiltonian $H$ and the order $p$ of the Trotter formula.
\end{proof}

\subsection{Optimal Trotter number and noise rate requirement}\label{apd:sec:threshold}
\begin{proposition}[Optimal Trotter number and noise rate requirement]
    Consider the $p$th-order product formula circuit with depolarizing noise rate $\gamma$. 
    We assume the total error per step at Trotter step $d$ is of the form 
    $\eps_{p,\gamma}^\tot (d) = C\gamma\Upsilon e^{-c\gamma\Upsilon d} + B_p \frac{t^{p+1}}{r^{p+1}}e^{-b\gamma\Upsilon d}$. 
    $C$ and $B$ are the prefactor coefficients for the physical (algorithmic) error, 
    while $c$ and $b$ are the decay rates, respectively. 
    The overhead $\Upsilon = \bigO( 2^p)$ is the number of layers implemented in one Trotter step in higher-order formulas. 
    Assume $c<C$ and $b\approx c$.
    To achieve the smallest error $\eps$, the optimal number of Trotter steps is
    $r_\opt(\gamma) = \qty(\frac{pB_p}{C\gamma\Upsilon})^{\frac{1}{p+1}}t$. 
    To achieve such a small error $\eps$, 
    the noise rate has to be lower than 
    $\gammastar = \frac{1}{C\qty(B_p)^{\frac{1}{p}}} \qty(\frac{\epsilon}{t})^{1+\frac{1}{p}} \frac{p}{\Upsilon (p+1)^{1+\frac{1}{p}}}$.
    Taking $C=\Theta(n), B_p=\Theta(n)$, we have 
    $r_\opt(\gamma) = \Theta\qty( \gamma^{-\frac{1}{p+1}} t)$ and 
    $\gammastar = \bigO\qty(\qty(\frac{\epsilon}{nt})^{1+\frac{1}{p}})$. 
    Particularly, $ r_{\opt}(\gammastar) = \Theta\qty( t\qty(\frac{nt}{\epsilon})^{\frac{1}{p}} )$.
\end{proposition}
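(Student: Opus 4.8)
The plan is to treat the statement as a one-variable calculus optimization over the Trotter number $r$, followed by solving a single scalar equation for the threshold noise rate. First I would substitute the assumed per-step form into the accumulated error $\eps_{p,\gamma}^{\acc}(r)=\sum_{d=1}^r \eps_{p,\gamma}^{\tot}(d)$ and evaluate the two geometric sums $\sum_{d=1}^r e^{-c\gamma\Upsilon d}$ and $\sum_{d=1}^r e^{-b\gamma\Upsilon d}$ in closed form. The assumption $b\approx c$ lets me treat both exponents as a single decay rate, and the assumption $c<C$ (together with $C,B_p$ being the dominant, $\Theta(n)$-scale prefactors) guarantees that the optimal depth satisfies $c\gamma\Upsilon\, r_\opt\ll 1$, so that over the range $d\in[1,r_\opt]$ each decay factor is $1-\bigO(\gamma\Upsilon r)\approx 1$. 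Under this self-consistent approximation the accumulated error collapses to the familiar competition between a linearly growing physical term and a power-law-decaying algorithmic term,
\begin{equation}
    \eps_{p,\gamma}^{\acc}(r)\approx C\gamma\Upsilon\, r + B_p\frac{t^{p+1}}{r^{p}}.
\end{equation}

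Next I would minimize this over $r$: setting $\tfrac{d}{dr}\big(C\gamma\Upsilon r + B_p t^{p+1} r^{-p}\big)=0$ gives $C\gamma\Upsilon = p B_p t^{p+1} r^{-(p+1)}$, i.e.\ $r_\opt(\gamma)=\big(\tfrac{pB_p}{C\gamma\Upsilon}\big)^{1/(p+1)} t$, which is the claimed optimum (the second derivative is positive, so this is a genuine minimum). Substituting $r_\opt$ back in, the two terms combine — after factoring out $(C\gamma\Upsilon)^{p/(p+1)}(B_p t^{p+1})^{1/(p+1)}$ — into the minimal achievable error $\eps_{\min}(\gamma)=(p+1)\,p^{-p/(p+1)}(C\gamma\Upsilon)^{p/(p+1)}(B_p t^{p+1})^{1/(p+1)}$.

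To obtain the noise requirement, I would then set $\eps_{\min}(\gamma)=\eps$ and solve the resulting scalar equation for $\gamma$: raising both sides to the power $(p+1)/p$ isolates $C\gamma\Upsilon$, and collecting the $p$, $(p+1)$, $B_p$, $t$, and $\eps$ factors (using $(p+1)/p=1+1/p$) yields exactly $\gammastar=\frac{1}{C(B_p)^{1/p}}\big(\tfrac{\eps}{t}\big)^{1+1/p}\frac{p}{\Upsilon(p+1)^{1+1/p}}$, with the interpretation that for $\gamma>\gammastar$ even the optimal depth cannot push $\eps_{\min}$ below $\eps$. Finally, the asymptotic corollaries drop out by substituting $C=\Theta(n)$ and $B_p=\Theta(n)$: the $n$-factors cancel in $r_\opt$ to leave $\Theta(\gamma^{-1/(p+1)}t)$, give $\gammastar=\bigO((\eps/nt)^{1+1/p})$, and — using the identity $(1+1/p)/(p+1)=1/p$ — produce $r_\opt(\gammastar)=\Theta\big(t(nt/\eps)^{1/p}\big)$.

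The genuinely delicate step is not the algebra but the justification of dropping the exponential decay factors when passing to the optimization. I would need to verify self-consistently that the minimizer lies in the regime $\gamma\Upsilon r_\opt=\Theta(\gamma^{p/(p+1)}t)\to 0$ as $\gamma\to 0$, so that the neglected $e^{-c\gamma\Upsilon d}$ corrections are lower order and do not shift the location of the minimum at leading order; here the hypotheses $c<C$ and $b\approx c$ are precisely what keep the decay subdominant over $d\le r_\opt$ while still letting the physical error accumulate before it would otherwise saturate at its ceiling $\sim C/c$. A fully rigorous version would carry the geometric-sum remainder terms through the differentiation and show that they perturb $r_\opt$ and $\gammastar$ only by multiplicative $1+o(1)$ factors, which is the main point I would want to check carefully rather than the closed-form optimization itself.
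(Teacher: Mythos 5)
Your proposal is correct and follows essentially the same route as the paper's proof: drop the exponential decay factors to reduce the accumulated error to $C\gamma\Upsilon r + B_p t^{p+1}/r^p$, minimize over $r$ (the paper uses the arithmetic–geometric mean inequality where you use calculus, but the stationarity condition $\tfrac{1}{p}C\gamma\Upsilon r = B_p t^{p+1}/r^p$ and the resulting $r_\opt$, minimal error, and $\gammastar$ are identical), then substitute $C,B_p=\Theta(n)$ for the asymptotics. Your self-consistency justification for the linearization ($c\gamma\Upsilon r_\opt \to 0$ as $\gamma\to 0$) is a minor variant of the paper's argument, which instead uses the smallness of the target precision (via $1-e^{-c\gamma\Upsilon r}\le \epsilon c/C$) to justify the same approximation.
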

\begin{proof}
    The accumulated error we bound by triangle inequality as the sum of the errors of all steps, i.e., $\epsilon_{p, \gamma}^{\acc}(r) = \sum_{d=1}^{r} \epsilon_{p, \gamma}^{\tot}(d)$. 
    To let the accumulated error be lower than $\epsilon$, we have to at least let the physical error portion be lower than $\epsilon$, i.e., $\sum_{l=1}^r C\gamma\Upsilon e^{-c\gamma\Upsilon l} \le \epsilon$. 

    By taking the summation, we have the accumulated physical error,
    \begin{equation}
        C\gamma\Upsilon \frac{1-e^{-c\gamma\Upsilon r}}{1-e^{-c\gamma\Upsilon}} \le \epsilon.
    \end{equation}
    Since $C\gamma\Upsilon$ should be small, and $c<C$ by the assumption. 
    The above can be approximated to 
    \begin{equation}
        \frac{C}{c} (1-e^{-c\gamma\Upsilon r}) \le \epsilon.
    \end{equation}
    Since $\epsilon c/C$ is small, the parenthesis approximates $1-e^{-c\gamma\Upsilon r}\approx c\gamma\Upsilon r$.
    Since $b\approx c$, we have that both physical and algorithmic error decay approximately linearly 
    \begin{equation}
        \epsilon_{p, \gamma}^{\acc}(r) \approx 
        C\gamma\Upsilon r + 
        B_p \frac{t^{p+1}}{r^p} \le \epsilon,
    \end{equation}
    if the desired $\epsilon$ is small. 
    By standard Arithmetic-Geometric inequality, we have 
    \begin{equation}
       \min_r \qty( C\gamma\Upsilon r + B_p \frac{t^{p+1}}{r^p} ) = 
       (p+1) \qty( \frac{1}{p^p} \qty(C\gamma\Upsilon)^p B_p t^{p+1}  )^{\frac{1}{p+1}} = 
       (p+1)\qty( \frac{C\gamma\Upsilon}{p} )^{\frac{p}{p+1}} \qty(B_p)^{\frac{1}{p+1}}t.
    \end{equation}
    This is the lowest error the Trotter simulation can give, 
    and thus it must be lower than $\epsilon$, 
    imposing a noise requirement $\gammastar$
    \begin{equation}\label{eq:gamma_thr}
        \gammastar \le \frac{1}{\qty(B_p)^{\frac{1}{p}}} \qty(\frac{\epsilon}{t})^{1+\frac{1}{p}} \frac{p}{C\Upsilon (p+1)^{1+\frac{1}{p}}}.
    \end{equation}
    And the optimal $r$ is obtained 
    when $\frac{1}{p}C\gamma\Upsilon r = B_p \frac{t^{p+1}}{r^p}$:
    \begin{equation}\label{eq:optimal_r}
        r_{\opt}(\gamma) = 
        \qty(\frac{pB_p}{C\gamma\Upsilon})^{\frac{1}{p+1}}t.
    \end{equation}
    And let $r_\opt^*:=r_\opt(\gammastar)$, then
    \begin{equation}
        r_\opt^* 
        := \qty(\frac{pB_p}{C\gammastar\Upsilon})^{\frac{1}{p+1}}t 
        = \qty(B_p)^{\frac{1}{p}} (p+1)^{\frac{1}{p}} \qty(\frac{t^{p+1}}{\epsilon})^{\frac{1}{p}}.
    \end{equation}
    Assuming $C=\Theta(n), B_p=\Theta(n)$, 
    we have the asymptotic scaling of \cref{eq:gamma_thr} and \cref{eq:optimal_r}
    \begin{equation}
        \gammastar = \Theta\qty(\qty(\frac{\epsilon}{nt})^{1+\frac{1}{p}}), 
        \quad r_{\opt}(\gamma) = \Theta\qty( \gamma^{-\frac{1}{p+1}} t).
    \end{equation}
    Particularly, the optimal Trotter number at the noise rate threshold has the scaling
    \begin{equation}
        r_{\opt}^* 
        = \Theta\qty( t\qty(\frac{nt}{\epsilon})^{\frac{1}{p}} ).
    \end{equation}

\end{proof}
    Our result agrees with the results of the specific Hamiltonians in \cite{avtandilyanOptimalorderTrotterSuzuki2024}.
    We give examples of low-order Trotter formulas.
    For the first-order Trotter, take $\Upsilon=2$, we have 
    \begin{equation}
        \gammastar = 
        \frac{1}{8CB_p} \qty(\frac{\epsilon}{t})^2 = 
        \Theta\qty(\frac{\epsilon^2}{n^2t^2}), \;
        r_\opt(\gamma=\gammastar) = 
        2B_p\frac{t^2}{\epsilon}=
        \Theta\qty(\frac{nt^2}{\epsilon}).
    \end{equation} 
    Similarly, for the second-order Trotter, take $\Upsilon=4$,
    \begin{equation}
        \gammastar = \frac{1}{6\sqrt{3} 
        C(B_p)^{\frac{1}{2}}} 
        \qty(\frac{\eps}{t})^{\frac{3}{2}} 
        = \Theta \qty(\qty(\frac{\epsilon}{nt})^{\frac{3}{2}}),  \;
        r_\opt(\gamma=\gammastar) =  \sqrt{3} 
        (B_p)^{\frac{1}{2}}t 
        \qty(\frac{t}{\epsilon})^{\frac{1}{2}} = \Theta \qty( t \qty(\frac{nt}{\epsilon})^{\frac{1}{2}} ).
    \end{equation}

\section{Additional numerical results}\label{apd:sec:numeric}
To substantiate the generality of our theoretical bounds, 
we provide additional numerical results from several aspects: 
noise channels;
physical models;
observables and the error of expectation values.
Our numerical code is mainly built on \textsc{Qiskit} \cite{javadi-abhariQuantumComputingQiskit2024}, \textsc{OpenFermion} \cite{mccleanOpenFermionElectronicStructure2020}, and \textsc{PySCF} \cite{sunRecentDevelopmentsPySCF2020}.

\begin{figure}[!t]
    \centering
    \includegraphics[width=0.85\linewidth]{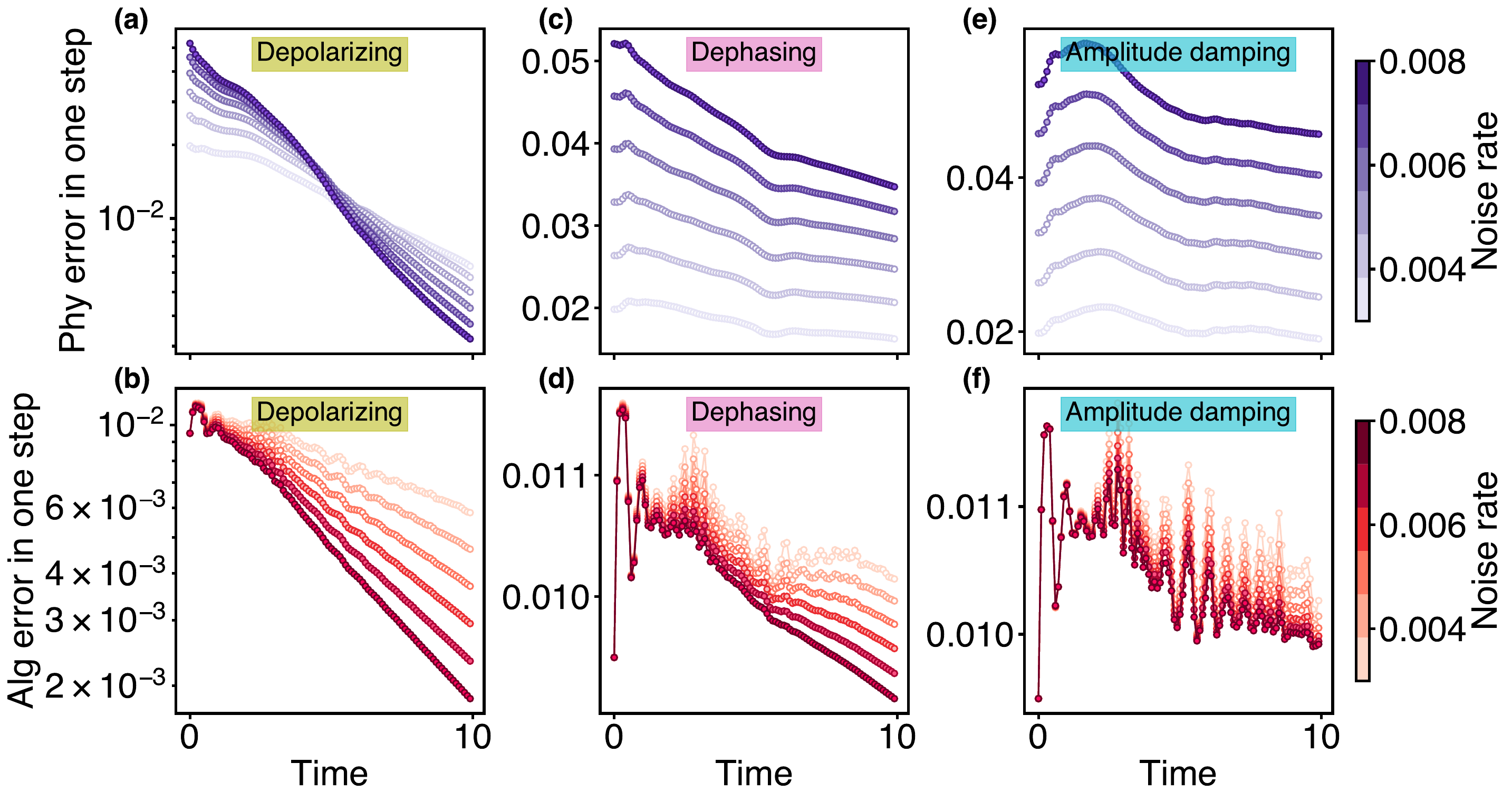}
    \caption{
    The one-step physical and algorithmic errors decay under different types of noise channels.
    We adopt the TFI Hamiltonian with $n=10, t=n$ and PF2 with $r=100$.
    (a) (b) The local depolarizing channel \cref{apd:eq:single_depolarizing}. 
    (c) (d) The local dephasing channel \cref{eq:single_dephasing}.
    (e) (f) The amplitude-damping noise channels \cref{eq:single_amp_damping}.
    }
    \label{fig:other_noise}
\end{figure}

\subsection{Impact of noise channels}\label{apd:noise_channels}
Besides the depolarizing channel that we have discussed in the main text, 
there are two widely used noise channels for modeling realistic noise, i.e., 
the dephasing noise channel and the amplitude damping channel.
The dephasing noise channel on a single qubit is
\begin{align}\label{eq:single_dephasing}
    \E^{\deph}_{\gamma}(\rho) 
    &:= (1-\gamma)\rho + \gamma Z\rho Z .
\end{align}
The one-qubit (local) dephasing noise channel on one layer of the $n$-qubit circuit is the tensor-product of \cref{eq:single_dephasing}.

On the contrary to unital noise channels (i.e. $\E(I)=I$), the typical non-unital noise channel is the amplitude-damping channel:
$\E_\gamma^{\damp}(\rho)=E_0\rho E_0^\dagger + E_1 \rho E_1^\dagger$ 
where the Kraus operators are
\begin{equation}\label{eq:single_amp_damping}
    E_0=\begin{pmatrix}1&0\\ 0&\sqrt{1-\gamma}\end{pmatrix}
    , \quad
    E_1=\begin{pmatrix}0&\sqrt{\gamma} \\ 0&0\end{pmatrix}.
\end{equation}

In \cref{fig:other_noise}, we compare the one-step (physical and algorithmic) error decay of the (local) depolarizing, dephasing, and amplitude-damping noise channel.
It can be expected that the errors of the depolarizing channel decay much faster than the dephasing channel. 
On the other hand, due to the non-unital property of the amplitude-damping channel, the errors of this channel do not exhibit apparent exponential decay but have large fluctuations.

\subsection{Common physical models}\label{apd:more_hamiltonians}
While our proof is independent of Hamiltonians, we provide more numeric results on additional common physical models, including the Heisenberg model with power-law decaying interaction, the Fermi-Hubbard model, and the Hydrogen chain, to validate the generality of our theory.

\begin{figure*}
    \centering
    \includegraphics[width=0.9\textwidth]{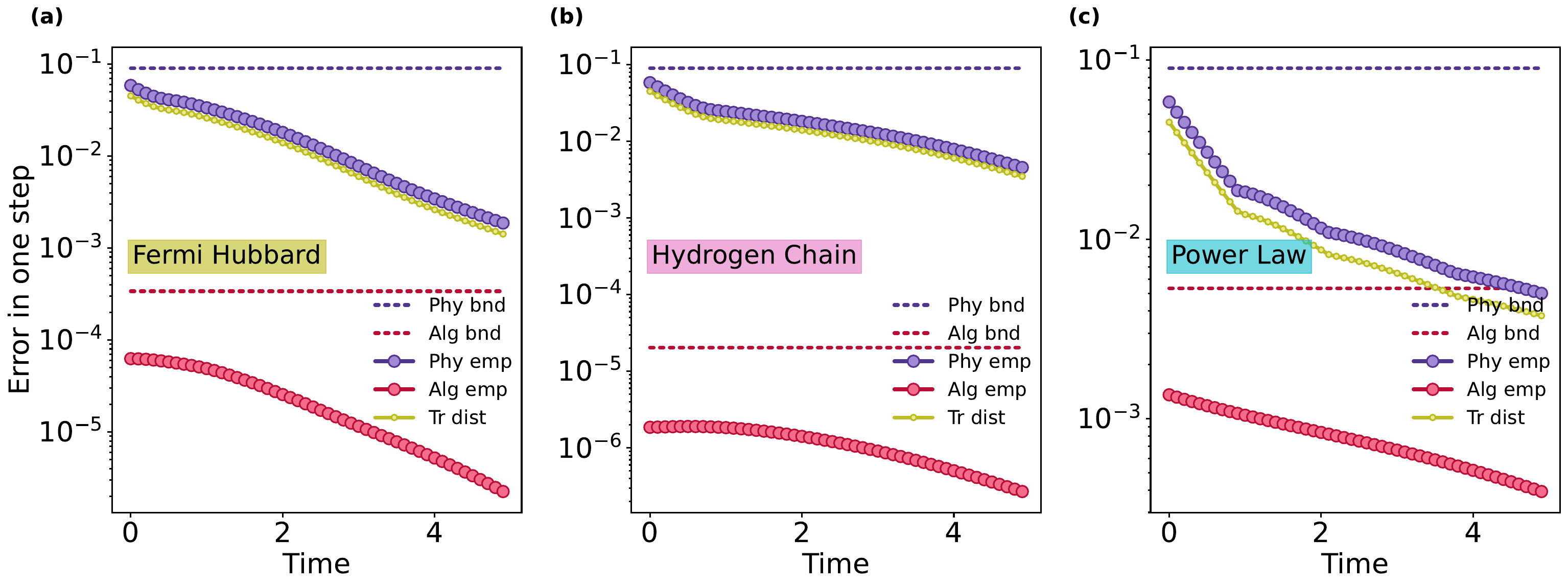}
    \caption{
    Numerical results of various physical Hamiltonians.
    We take the initial state $\ket{+}^{\otimes n}$, evolution time $t=5$, and Trotter step $r=100$. 
    The noise rate $\gamma=0.009$.
    (a) The Fermi-Hubbard model \cref{eq:hubbard} with $n=10$ and parameters $v=u=1$.
    (b) The Hydrogen-Chain model with 5 Hydrogen atoms ($n=10$ qubits) and $\text{bond length}=3$. 
    (c) The power-Law decaying Heisenberg model \cref{eq:power_law}: $n=10$ qubits and the power-law decaying coefficient $\alpha=4$.
    }
    \label{fig:decay_physical_models}
\end{figure*}

\subsubsection{Heisenberg model with power law decaying interaction}
We consider the 1D Heisenberg model $\ham_{\pow}$ with the power-law interactions 
\begin{equation}\label{eq:power_law}
    \ham_{\pow}=
    \sum_{j=1}^{n-1} \sum_{k=j+1}^n \frac{1}{\abs{k-j}^\alpha} 
    (X_jX_k + Y_j Y_k + Z_j Z_k) + \sum_{j=1}^n h_j Z_j
\end{equation}
where we take the power-law decaying exponent as $\alpha=4$.
We use the XYZ grouping for the power-law interaction Hamiltonian
\begin{equation}
    \ham_{\pow} = H_{X} + H_{Y} + H_{Z}.
\end{equation}
where each group only contains the same type of Pauli operators, 
e.g., $H_Z =\sum_{j=1}^{n-1} \sum_{k=j+1}^n \frac{1}{\abs{k-j}^\alpha} Z_j Z_k + \sum_{j=1}^n h_j Z_j$.

\subsubsection{Fermi-Hubbard model}
The Fermi-Hubbard model is a key focus in condensed matter physics due to its relevance in metal-insulator transitions, quantum magnetism, and high-temperature superconductivity \cite{hubbardElectronCorrelationsNarrow1963,schubertTrotterErrorCommutator2023}.
We study the Fermi-Hubbard model on a one-dimensional lattice of $n$ sites described by the Hamiltonian
$H_{\textsc{FH}} = H_{\even} + H_{\odd} + H_{\intt}$ with three groups
\begin{align}\label{eq:hubbard}
    H_{\even} &= v \sum_{j=1}^{\lfloor{n/2\rfloor}} \sum_{\sigma\in\qty{\uparrow,\downarrow}}  a_{2j-1,\sigma}^\dagger a_{2j,\sigma} + a_{2j,\sigma}^\dagger a_{2j-1,\sigma}, \\
    H_{\odd} &= v \sum_{j=1}^{\lceil{n/2\rceil}-1} \sum_{\sigma\in\qty{\uparrow,\downarrow}} a_{2j,\sigma}^\dagger a_{2j+1,\sigma} + a_{2j+1,\sigma}^\dagger a_{2j,\sigma}, \\ 
    H_{\intt} &= u \sum_{j=1}^n n_{j,\uparrow} n_{j,\downarrow}
\end{align}
where $j$ refer to neighboring lattice sites in the first sum, $v \in R$ is the kinetic hopping coefficient, and $u > 0$ is the on-site interaction strength. 
$a_{j,\sigma}^\dagger$, $a_{j,\sigma}$ and $\hat{n}_{j,\sigma}=\hat{a}_{j,\sigma}^\dagger \hat{a}_{j,\sigma}$ are the Fermionic creation, annihilation, and number operators, respectively, acting on the site $j$ and spin $\sigma \in \qty{\uparrow, \downarrow }$.
\cite{mccleanOpenFermionElectronicStructure2020}
The fermion operator satisfies the anti-commutation relation 
\begin{equation}\label{eq:fermion_anti_commutation}
    \qty{\acr_p, \acr_q} = 0,\;
    \qty{\aan_p, \aan_q} = 0,\;
    \qty{\acr_p, \aan_q} = \delta_{pq}.
\end{equation}

\subsubsection{Chemistry molecule}\label{sec:chem}
For quantum molecular systems, the Hamiltonian $\ham$ takes the following form
$\ham_{\chm}:=A+\frac{1}{2}V+C$ with
\begin{equation}\label{eq:molecule}
    A=\sum_{i,j=1}^n h_{ij}\ \hat{a}_i^\dagger \hat{a}_j,  \;
    V=\frac{1}{2}\sum_{i,j,k,l=1}^n V_{ijkl}\ \hat{a}_i^\dagger \hat{a}_j^\dagger \hat{a}_k \hat{a}_l,
\end{equation}
where $n$ is the number of spin orbitals of the molecular system; $C$ is a constant term; $\hat{a}_i^\dagger$ and $\hat{a}_i$ are the fermionic generation and annihilation operators, respectively; $h_{ij}$ and $V_{ijkl}$ are the corresponding coefficients for the one-body and two-body interactions, respectively. 

The chain of Hydrogen atoms is the toy model for the ground state energy problem.
The fermionic Hamiltonian can be converted to the qubit Hamiltonian, which consists of Pauli operators, through the Jordan-Wigner transformation \cite{nielsenFermionicCanonicalCommutation2005,sunRecentDevelopmentsPySCF2020}.
Then, we group the Hamiltonian into commuting groups $H=\sum_{l=1}^{L} H_l$ using a greedy heuristic.
For the Hydrogen chain with 5 Hydrogen atoms, its Hamiltonian with the STO-3G basis is represented by $n=10$ qubits and $444$ Pauli strings grouped into $L=27$ terms.

\subsection{Observable evolution and expectation value}\label{apd:observables}

\begin{figure}[!t]
    \centering
    \includegraphics[width=0.5\linewidth]{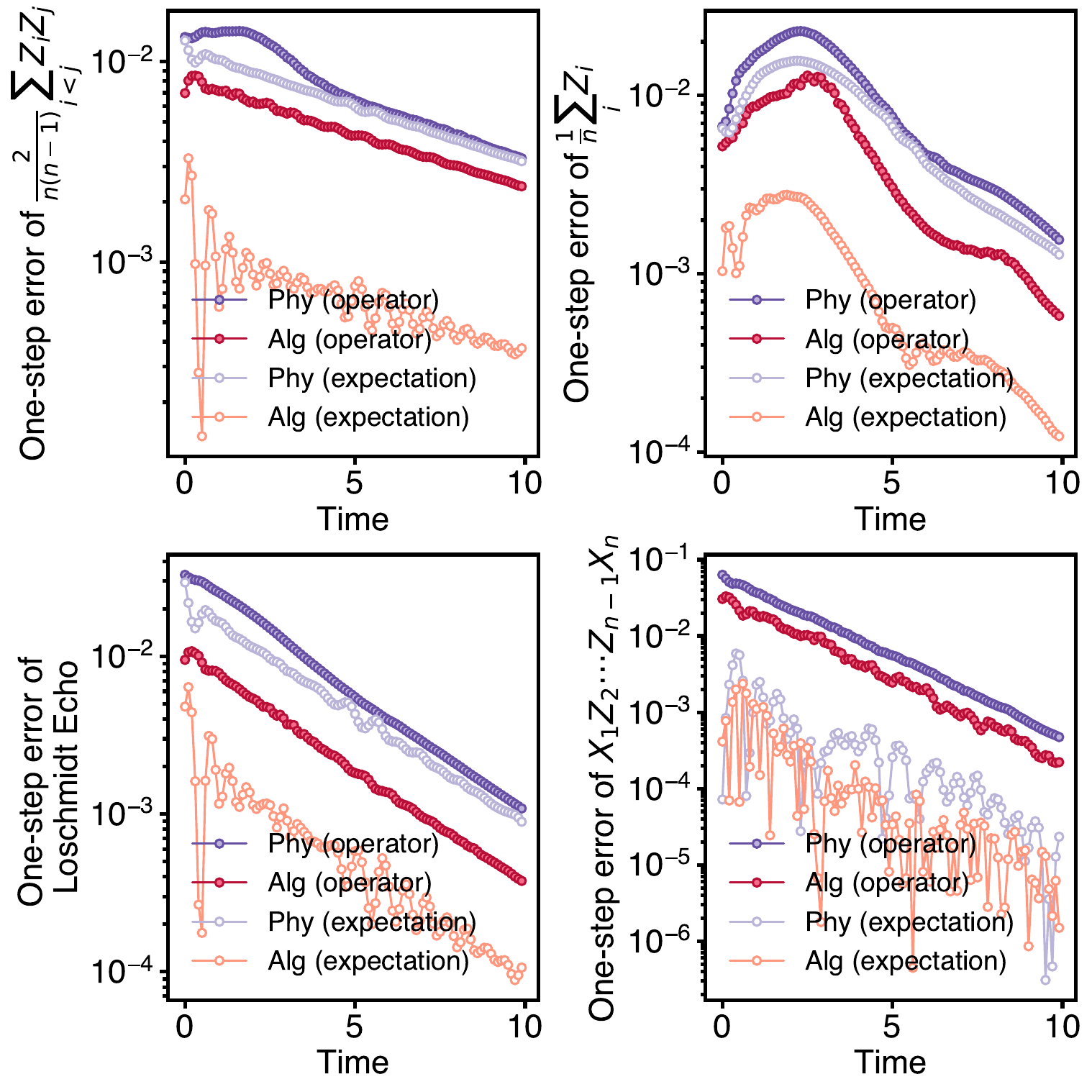}
    \caption{
    The physical and algorithmic error in common observables and their expectation values.
    We choose the TFI Hamiltonian with $n=10$ qubits, the PF2 with XZ grouping, and noise rate $\gamma=0.005$.
    The observable error curve plotted for typical observables, including single-site spin $\frac{1}{n}\sum_{i}Z_i$, two-site spin $\frac{2}{n(n-1)}\sum_{i<j}Z_iZ_j$, Loschmidt echo $\op{0^n}$ and string parameter $X_1Z_2\cdots Z_{n-1}X_n$. 
    The physical and algorithmic errors defined in \cref{eq:observable_alg_error,eq:observable_phy_error} are plotted in purple and red, respectively. 
    We also plot the expectation value error in lighter colors and the absolute value of the expectation value in grey. The expectation value errors are typically smaller than the observable errors. 
    }
    \label{fig:observable_evolution}
\end{figure}

In addition to analyzing the evolution of state error in the Schr\"odinger picture, 
we investigate the evolution of observable error in the Heisenberg picture.
Similar to the state error, the observable error can be decomposed into two components: algorithmic error and physical error. 
The one-step algorithmic error is defined as the operator norm of the difference between the ideally evolved operator and the operator evolved by the product formula, expressed as:
\begin{equation}\label{eq:observable_alg_error}
    \eps_{p}^{\alg,\mathrm{ob}}(O_d)
    := \norm{ U^{\dagger} O_d U - \tilde{U}_p^{\dagger} O_d \tilde{U}_p  }_{\infty},
\end{equation} 
where $O_d$ is the evolved observable at $d$th-step,
$U$ is the exact unitary evolution of one step, and $\pf_p$ is the $p$th-order unitary evolution of one step given by the product formula.
On the other hand, the one-step physical error is defined as the deviation caused by the noise channel, expressed as:
\begin{equation}\label{eq:observable_phy_error}
    \eps_{\gamma}^{\phy,\mathrm{ob}}(O_d): = \norm{ \calE_\gamma^n(O_d) - O_d}_{\infty}.
\end{equation}
where the adjoint channel of the depolarizing channel $\calE_\gamma^n$ is itself.
We plot the error curve for typical observables, including single-site spin $\frac{1}{n}\sum_{i}Z_i$, two-site spin $\frac{2}{n(n-1)}\sum_{i<j}Z_iZ_j$, Loschmidt echo $\op{0^n}$ and string parameter $X_1Z_2,\cdots Z_{n-1}X_n$ with results shown in \cref{fig:observable_evolution}. 
The trends observed in the plotted curves can differ from those of the state error. 
Specifically, the single-site spin observable exhibits an initial growth period before the exponential decay that is similar to the state error curve. 
We also test the expectation value error evaluated with the state $\rho=\op{0^n}$, i.e., 
\begin{equation}
    \eps_{p}^{\alg, \text{val}}(O_d,\rho):=  
    \left| \Tr( \rho \qty( U^{\dagger} O_d U - \tilde{U}_p^{\dagger} O_d \tilde{U}_p  ) )\right|
\end{equation}
and 
\begin{equation}
    \eps_{\gamma}^{\phy, \text{val}}(O_d,\rho): = 
    \left| \Tr( \rho \qty( \calE_\gamma^n(O_d) - O_d ) ) \right|
\end{equation}
with $\rho = \op{0^n}$. 
Though the error in expectation value is implied by the state error $\trnorm{\rho-\rho'}$
through the inequality $\Tr(O(\rho-\rho'))\le\opnorm{O}\trnorm{\rho-\rho'}$,
we can see that the algorithmic errors in expectation values are much (by order) smaller than observable error or state error and exhibit oscillatory behaviors.

\subsection{Noise undermines Trotter simulation}
Considering the effect of error decay, an interesting question is whether noise can reduce the accumulated error.
\cref{fig:worst_input} shows that noise can only improve the accumulated error in some extreme cases (regime of parameters).
We could validate it by taking the partial derivative of the accumulated error of the noisy second-order Trotter
\begin{equation}
    \eps_{2,\gamma}^{\acc} (r)
    = \sum_{d=1}^r 
    C \gamma e^{-c \gamma d} + 
    B \frac{t^3}{r^3} e^{-b \gamma d}
\end{equation}
with respect to $\gamma$, i.e.,
\begin{equation}\label{eq:partial_derivative}
    \partial_\gamma \eps_{2,\gamma}^{\acc} 
    = \sum_{d=1}^r 
    C (1-c\gamma d) e^{-c \gamma d} - 
    Bbd \frac{t^3}{r^3} e^{-b \gamma d}
    < 0
\end{equation}
and find the regime less than 0.

\begin{figure}[!t]
    \centering
    \sidesubfloat[]{\includegraphics[width=0.4\linewidth]{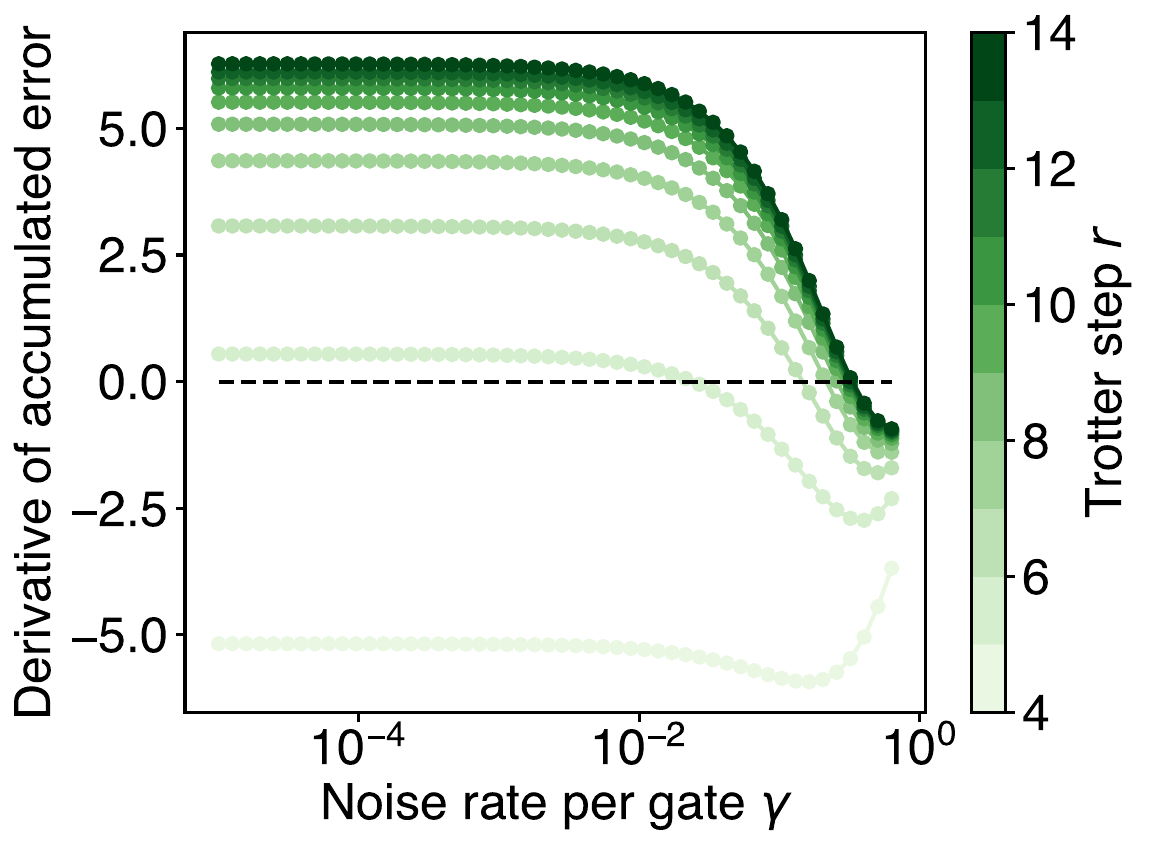}}
    \hspace{1cm}
    \sidesubfloat[]{\includegraphics[width=0.45\linewidth]{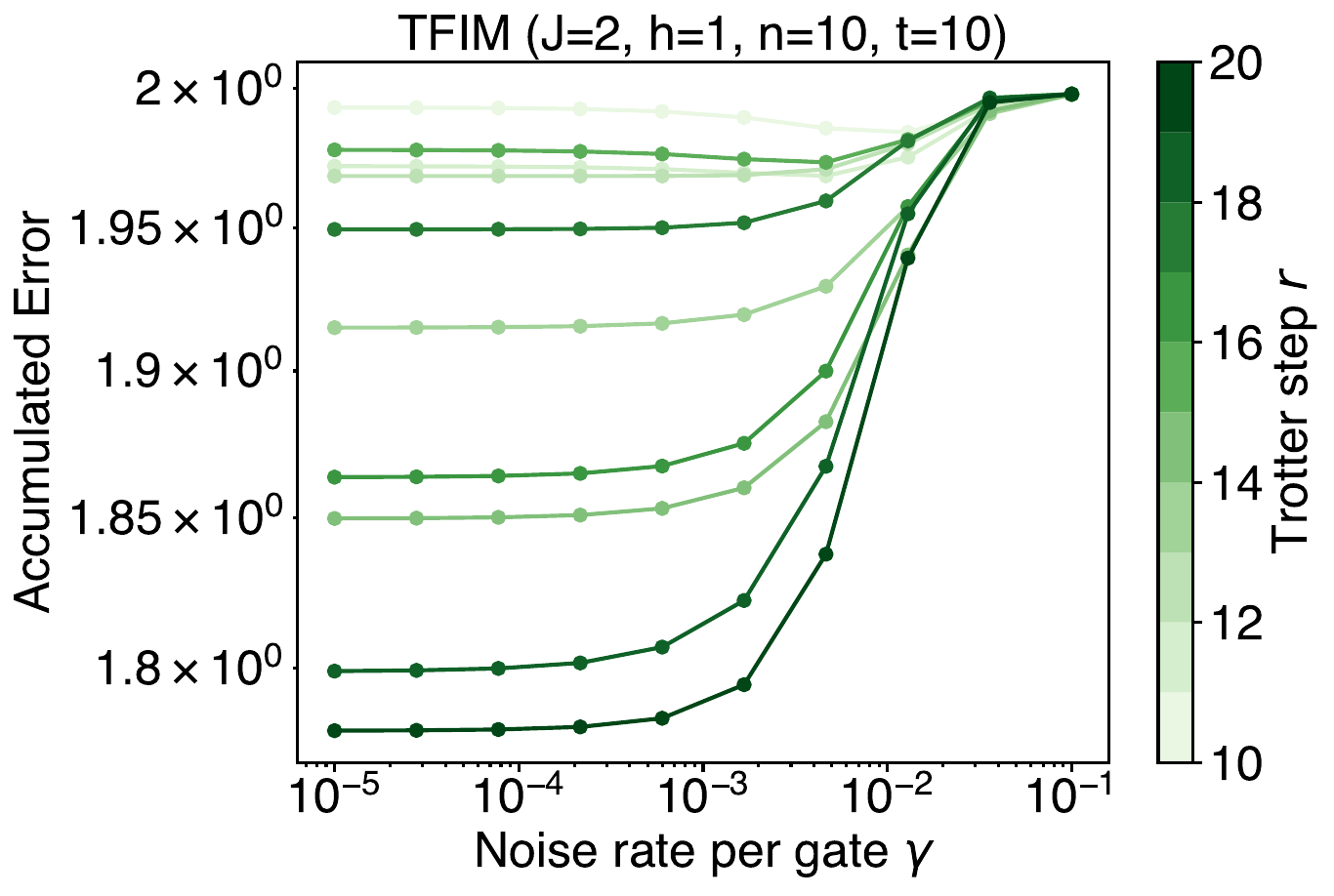}}
    \caption{ 
    (a) The partial derivative of the accumulated error regarding $\gamma$ (empirical formula \cref{eq:partial_derivative}) versus noise rate $\gamma$.
    Different Trotter steps $r$ are indicated by the colors. 
    We take the 10-qubit TFI Hamiltonian with the standard Trotter setup in the main text.
    Only when the Trotter number $r$ is small (light color), the derivative is negative, which implies the noise can reduce the accumulated error.
    (b) The empirical accumulated error V.S. noise rate $\gamma$ with different Trotter steps $r$.
    Only when $r$ is very small, the accumulated error decreases with increasing noise rates, which agrees with the derivative result of (a).
    However, in this regime, the accumulated error has reached the maximal value, which is not practical.
    }
    \label{fig:worst_input}
\end{figure}

\else
\fi

\end{document}